\def\A{\mathcal A}
\def\B{\mathscr B}
\def\C{\mathbb C}
\def\d{\mathrm{d}}
\def\D{\mathbb D}
\def\DD{\mathscr D}
\def\F{\mathscr F}
\def\G{\mathcal G}
\def\H{\mathcal H}
\def\HH{\mathbb H}
\def\K{\mathscr K}
\def\KK{\mathcal K}
\def\M{\mathsf M}
\def\N{\mathbb N}
\def\R{\mathbb R}
\def\T{\mathbb T}
\def\U{\mathsf U}
\def\Z{\mathbb Z}
\def\dom{\mathcal D}
\def\supp{\mathop{\mathrm{supp}}\nolimits}
\def\e{\mathop{\mathrm{e}}\nolimits}
\def\im{\mathop{\mathrm{Im}}\nolimits}
\def\re{\mathop{\mathrm{Re}}\nolimits}
\def\Int{\mathop{\mathrm{Int}}\nolimits}
\def\ltwo{\mathop{\mathrm{L}^2}\nolimits}
\def\linf{\mathop{\mathrm{L}^\infty}\nolimits}
\DeclareMathOperator*{\slim}{s\hspace{0.1pt}-\hspace{0.1pt}lim}
\newtheorem{Theorem}{Theorem}[section]
\newtheorem{Lemma}[Theorem]{Lemma}
\newtheorem{Corollary}[Theorem]{Corollary}
\newtheorem{Proposition}[Theorem]{Proposition}
\newtheorem{Assumption}[Theorem]{Assumption}
\begin{document}


\title{Quantum walks with an anisotropic coin I\;\!: spectral theory}

\author{S. Richard$^1$\footnote{Supported by JSPS Grant-in-Aid for Young Scientists A
no 26707005, and on leave of absence from Univ.~Lyon, Universit\'e Claude Bernard Lyon
1, CNRS UMR 5208, Institut Camille Jordan, 43 blvd.~du 11 novembre 1918, F-69622
Villeurbanne cedex, France},
A. Suzuki$^2$\footnote{Supported by JSPS Grant-in-Aid for
Young Scientists B
no 	26800054},
R. Tiedra de Aldecoa$^3$\footnote{Supported by the Chilean Fondecyt Grant 1130168.}}

\date{\small}
\maketitle \vspace{-1cm}

\begin{quote}
\emph{
\begin{itemize}
\item[$^1$] Graduate school of mathematics, Nagoya University,
Chikusa-ku, \\
Nagoya 464-8602, Japan
\item[$^2$] Division of Mathematics and Physics, Faculty of Engineering, Shinshu University, \\
Wakasato, Nagano 380-8553, Japan
\item[$^3$] Facultad de Matem\'aticas, Pontificia Universidad Cat\'olica de Chile,\\
Av. Vicu\~na Mackenna 4860, Santiago, Chile
\item[] \emph{E-mails:} richard@math.nagoya-u.ac.jp,
akito@shinshu-u.ac.jp,
rtiedra@mat.puc.cl
\end{itemize}
}
\end{quote}


\begin{abstract}
We perform the spectral analysis of the evolution operator $U$ of quantum walks with
an anisotropic coin, which include one-defect models, two-phase quantum walks, and
topological phase quantum walks as special cases. In particular, we determine the
essential spectrum of $U$, we show the existence of locally $U$-smooth operators, we
prove the discreteness of the eigenvalues of $U$ outside the thresholds, and we prove
the absence of singular continuous spectrum for $U$. Our analysis is based on new
commutator methods for unitary operators in a two-Hilbert spaces setting, which are of
independent interest.
\end{abstract}

\textbf{2010 Mathematics Subject Classification:} 81Q10, 47A10, 47B47, 46N50.

\smallskip

\textbf{Keywords:}  Quantum walks, spectral theory, commutator methods, unitary operators.


\section{Introduction}\label{Sec_Introduction}
\setcounter{equation}{0}

The notion of discrete-time quantum walks appears in numerous contexts 
\cite{ADZ93,ABNVW01,GZ88,Gud88,Mey96,Wat01}. Among them, Gudder \cite{Gud88}, Meyer
\cite{Mey96}, and Ambainis et al. \cite{ABNVW01} introduced one-dimensional quantum
walks as a quantum mechanical counterpart of classical random walks. Nowadays, these
quantum walks and their generalisations have been physically implemented in various
ways \cite{MW14}. Versatile applications of quantum walks can be found in
\cite{COB15,IMSY15,Por13,Ven12} and references therein.

Recently, because of the controllability of their parameters, discrete-time quantum
walks have attracted attention as promising candidates to realise topological
insulators. In a series of papers \cite{KBFRBKADW12,KRBD10}, Kitagawa et al. have
shown that one and two dimensional quantum walks possess topological phases, and they
experimentally observed a topologically protected bound state between two distinct
phases. See \cite{Kit12} for an introductory review on the topological phenomena in
quantum walks. Motivated by these studies, Endo et al. \cite{EKH15} (see also
\cite{EEKST15,EEKST16}) have performed a thorough analysis of the asymptotic behaviour
of two-phase quantum walks, whose evolution is given by a unitary operator
$U_{\rm TP}=SC$ with $S$ a shift operator and $C$ a coin operator defined as a
multiplication by unitary matrices $C(x)\in\U(2)$, $x\in\Z$. When $C(x)$ is given by
\begin{equation} \label{eq_two_phase}
C(x)=
\begin{cases}
\frac1{\sqrt2}
\begin{pmatrix}
1 & \e^{i\sigma_+}\\
\e^{-i\sigma_+} & -1
\end{pmatrix}
& \hbox{if $x\ge0$}\bigskip\\
\frac1{\sqrt2}
\begin{pmatrix}
1 & \e^{i\sigma_-}\\
\e^{-i\sigma_-} & -1
\end{pmatrix}
& \hbox{if $x\le-1$}
\end{cases}
\end{equation}
with $\sigma_\pm\in[0,2\pi)$, the two-phase quantum walk with evolution operator
$U_{\rm TP}$ is called complete two-phase quantum walk, and when $C(x)$ satisfies the
alternative condition at $0$
\begin{equation}\label{eq_defect}
C(0)=\begin{pmatrix} 1 & 0 \\ 0 & -1 \end{pmatrix},
\end{equation}
the quantum walk is called two-phase quantum walk with one defect. In
\cite{EEKST16,EKH15}, Endo et al. have proved a weak limit theorem \cite{Kon02,Kon05}
similar to the de Moivre-Laplace theorem (or the Central limit theorem) for random
walks, which describes the asymptotic behaviours of the two-phase quantum walk.

In the present paper and the companion paper \cite{RST_2}, we consider one-dimensional
quantum walks $U=SC$ with a coin operator $C$ exhibiting an anisotropic behaviour at
infinity, with short-range convergence to the asymptotics. Namely, we assume that
there exist matrices $C_{\ell},C_{\rm r}\in\U(2)$ and positive constants
$\varepsilon_\ell,\varepsilon_{\rm r}>0$ such that
\begin{equation}\label{eq_aniso_coin}
C(x)=
\begin{cases}
C_{\ell}+O\big(|x|^{-1-{\varepsilon_\ell}}\big) & \hbox{as $x\to-\infty$}\\
C_{\rm r}+O\big(|x|^{-1-{\varepsilon_r}}\big) & \hbox{as $x\to\infty$}.
\end{cases}
\end{equation}
We call this type of quantum walks quantum walks with an anisotropic coin or simply
anisotropic quantum walks. They include two-phase quantum walks with coins defined by
\eqref{eq_two_phase} and \eqref{eq_defect} and one-defect models
\cite{CGMV12,Kon10,KLS13,WLKGGB12} as special cases. In the case
$C_0:=C_{\ell}=C_{\rm r}$ and $\varepsilon_0:=\varepsilon_{\ell}=\varepsilon_{\rm r}$,
quantum walks with an anisotropic coin reduce to one-dimensional quantum walks with a
position dependent coin
$$
C(x)=C_0+O\big(|x|^{-1-\varepsilon_0}\big),\quad |x|\to\infty,
$$
for which the absence of the singular continuous spectrum was proved in \cite{ABJ15}
and for which a weak limit theorem was derived in \cite{Suz16}.

Quantum walks with an anisotropic coin are also related to Kitagawa's topological
quantum walk model called a split-step quantum walk \cite{Kit12,KBFRBKADW12,KRBD10}.
Indeed, if $R(\theta)\in\U(2)$ is a rotation matrix with rotation angle $\theta/2$,
$R(\Theta_j)$ the multiplication operator by
$R\big(\theta_j(\;\!\cdot\;\!)\big)\in\U(2)$ with $\theta_j:\Z\to[0,2\pi)$, $j=1,2$,
and $T_\downarrow,T_\uparrow$ shift operators satisfying
$S=T_\downarrow T_\uparrow=T_\uparrow T_\downarrow$, then the evolution operator of
the split-step quantum walk is defined as
$$
U_{\rm SS}(\theta_1,\theta_2)
:=T_\downarrow\;\!R(\Theta_2)\;\!T_\uparrow\;\!R(\Theta_1).
$$
Now, as mentioned in \cite{Kit12}, $U_{\rm SS}(\theta_1,\theta_2)$ is unitarily
equivalent to $T_\uparrow R(\Theta_1) T_\downarrow R(\Theta_2)$. Thus, our evolution
operator $U$ describes a quantum walk unitarily equivalent to the one described by
$U_{\rm SS}(\theta_1,\theta_2)$ if $\theta_1 \equiv 0$ and
$C(\;\!\cdot\;\!)=R\big(\theta_2(\;\!\cdot\;\!)\big)$ (see \cite{Ohn16,SS16} for the
definition of unitary equivalence between two quantum walks). In \cite{Kit12},
Kitagawa dealt with the case
$$
\theta_2(x)
:=\frac12(\theta_{2-}+\theta_{2+})+\frac12(\theta_{2+}-\theta_{2-})\tanh(x/3),
\quad \theta_{2-},\theta_{2+}\in[0,2\pi),~x\in\Z,
$$
which corresponds to taking the anisotropic coin \eqref{eq_aniso_coin} with
$C_{\ell}=R(\theta_{2-})$ and $C_{\rm r}=R(\theta_{2+})$, and which cannot be covered
by two-phase models.

The main goal of the present paper and \cite{RST_2} is to establish a weak limit
theorem for the the evolution operator $U$ of the quantum walk with an anisotropic
coin satisfying \eqref{eq_aniso_coin}. As put into evidence in \cite{Suz16}, in order
to establish a weak limit theorem one has to prove along the way the following two
important results:
\begin{itemize}
\item[(i)] the absence of singular continuous spectrum,
\item[(ii)] the existence of the asymptotic velocity.
\end{itemize}

In the present paper, we perform the spectral analysis of the evolution operator $U$
of quantum walks with an anisotropic coin. We determine the essential spectrum of $U$,
we show the existence of locally $U$-smooth operators, we prove the discreteness of
the eigenvalues of $U$ outside the thresholds, and we prove the absence of singular
continuous spectrum for $U$. In the companion paper \cite{RST_2}, we will develop the
scattering theory for the evolution operator $U$. We will prove the existence and the
completeness of wave operators for $U$ and a free evolution operator $U_0$, we will
show the existence of the asymptotic velocity for $U$, and we will finally establish a
weak limit theorem for $U$. Other interesting related topics such as the existence and
the robustness of a bound state localised around the phase boundary or a weak limit
theorem for the split-step quantum walk with $\theta_1\ne0$ are considered in
\cite{FFS1loc} and \cite{FFS1wlt}, respectively.

The rest of this paper is structured as follows. In Section 2, we give the precise
definition of the evolution operator $U$ for the quantum walk with an anisotropic coin
and we state our main results on the essential spectrum of $U$ (Theorem
\ref{thm_essential}), the locally $U$-smooth operators (Theorem
\ref{thm_U_smooth_walks}), and the eigen\-values and singular continuous spectrum of
$U$ (Theorem \ref{thm_spectrum_U_walks}). Section 3 is devoted to mathematical
preliminaries. Here we develop new commutator methods for unitary operators in a
two-Hilbert spaces setting, which are a key ingredient for our analysis and are of
independent interest. In Section 4, we prove our main theorems as an application of
the commutator methods developed in Section 3. In Subsection \ref{Sec_Mourre}, we
prove Theorem \ref{thm_essential} and we define in Lemma \ref{lemma_A_self} a
conjugate operator $A$ for the evolution operator $U$ built from conjugate operators
for the asymptotic evolution operators $U_\ell:=SC_\ell$ and $U_{\rm r}:=SC_{\rm r}$,
where $C_{\ell}$ and $C_{\rm r}$ are the constant coin matrices given in
\eqref{eq_aniso_coin}. Finally, in Subsection \ref{Sec_spectrum} we prove Theorems
\ref{thm_U_smooth_walks} and \ref{thm_spectrum_U_walks}.\\

\noindent
{\bf Acknowledgements.} The third author thanks the Graduate School of Mathematics of
Nagoya University for its warm hospitality in January-February 2017.

\section{Model and main results}\label{Sec_model}
\setcounter{equation}{0}

In this section, we give the definition of the model of anisotropic quantum walks that
we consider, we state our main results on quantum walks, and we present the main tools
we use for the proofs. These tools are results of independent interest on commutator
methods for unitary operators in a two-Hilbert spaces setting. The proofs of our
results on commutator methods are given in Section \ref{Sec_unitary} and the proofs of
our results on quantum walks are given in Section \ref{Sec_quantum}.

Let us consider the Hilbert space of square-summable $\C^2$-valued sequences
$$
\H:=\ell^2(\Z,\C^2)
=\left\{\Psi:\Z\to\C^2\mid\sum_{x\in\Z}\|\Psi(x)\|_2^2<\infty\right\},
$$
where $\|\cdot\|_2$ is the usual norm on $\C^2$. The evolution operator of the
one-dimensional quantum walk in $\H$ that we consider is defined by $U:=SC$, with $S$
a shift operator and $C$ a coin operator defined as follows. The operator $S$ is given
by
$$
(S\Psi)(x)
:=\begin{pmatrix}
\Psi^{(0)}(x+1)\\
\Psi^{(1)}(x-1)
\end{pmatrix},
\quad
\Psi
=\begin{pmatrix}
\Psi^{(0)}\\
\Psi^{(1)}
\end{pmatrix}\in\H,~x\in\Z,
$$
and the operator $C$ is given by
$$
(C\Psi)(x):=C(x)\Psi(x),\quad \Psi\in\H,~x\in\Z,~C(x)\in\U(2).
$$
In particular, the evolution operator $U$ is unitary in $\H$ since both $S$ and $C$
are unitary in $\H$.

Throughout the paper, we assume that the coin operator $C$ exhibits an anisotropic
behaviour at infinity. More precisely, we assume that $C$ converges with short-range
rate to two asymptotic coin operators, one on the left and one on the right in the
following way:

\begin{Assumption}[Short-range assumption]\label{ass_short}
There exist $C_\ell,C_{\rm r}\in\U(2)$, $\kappa_\ell,\kappa_{\rm r}>0$, and
$\varepsilon_\ell,\varepsilon_{\rm r}>0$ such that
\begin{align*}
&\big\|C(x)-C_\ell\big\|_{\B(\C^2)}
\le\kappa_\ell\;\!|x|^{-1-\varepsilon_\ell}\quad\hbox{if $x<0$}\\
&\big\|C(x)-C_{\rm r}\big\|_{\B(\C^2)}
\le\kappa_{\rm r}\;\!|x|^{-1-\varepsilon_{\rm r}}\quad\hbox{if $x>0$,}
\end{align*}
where the indexes $\ell$ and ${\rm r}$ stand for ``left" and ``right".
\end{Assumption}

This assumption provides us with two new unitary operators
\begin{equation}\label{eq_def_U_ell}
U_\ell:=SC_\ell\quad\hbox{and}\quad U_{\rm r}:=SC_{\rm r}
\end{equation}
describing the asymptotic behaviour of $U$ on the left and on the right. The precise
sense (from the scattering point of view) in which the operators $U_\ell$ and
$U_{\rm r}$ describe the asymptotic behaviour of $U$ on the left and on the right will
be given in \cite{RST_2}, and the spectral properties of $U_\ell$ and $U_{\rm r}$ are
determined in Section \ref{Sec_asymptotic}. Here, we just introduce the set
$$
\tau(U):=\partial\sigma(U_\ell)\cup\partial\sigma(U_{\rm r}),
$$
where $\partial\sigma(U_\ell),\partial\sigma(U_{\rm r})$ denote the boundaries in the
unit circle $\T:=\{z\in\C\mid|z|=1\}$ of the spectra
$\sigma(U_\ell),\sigma(U_{\rm r})$ of $U_\ell,U_{\rm r}$. In Section
\ref{Sec_asymptotic}, we show that $\tau(U)$ is finite and can be interpreted as the
set of thresholds in the spectrum of $U$.

Our main results on the operator $U$, proved in Sections \ref{Sec_Mourre} and
\ref{Sec_spectrum}, are the  following three theorems on locally $U$-smooth operators
and on the structure of the spectrum of $U$. The symbols $\sigma_{\rm ess}(U)$,
$\sigma_{\rm p}(U)$ and $Q$ stand for the essential spectrum of $U$, the pure point
spectrum of $U$, and the position operator in $\H$, respectively (see \eqref{eq_def_Q}
for precise definition of $Q$).

\begin{Theorem}[Essential spectrum of $U$]\label{thm_essential}
One has $\sigma_{\rm ess}(U)=\sigma(U_\ell)\cup\sigma(U_{\rm r})$.
\end{Theorem}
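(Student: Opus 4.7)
The plan is to reduce the problem to the asymptotic operators $U_\ell$ and $U_{\rm r}$ via a compact difference, and then to apply a Weyl-type stability theorem for unitary operators in the two-Hilbert space setting developed in Section \ref{Sec_unitary}. The natural framework for this is to introduce $\H_0:=\H\oplus\H$, $U_0:=U_\ell\oplus U_{\rm r}$, and an identification $J:\H_0\to\H$, $J(\Psi_\ell,\Psi_{\rm r}):=j_\ell(Q)\Psi_\ell+j_{\rm r}(Q)\Psi_{\rm r}$, built from a smooth partition of unity $j_\ell^2+j_{\rm r}^2\equiv 1$ on $\Z$ with $j_\ell(x)=1$ for $x\ll 0$ and $j_{\rm r}(x)=1$ for $x\gg 0$.

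For the inclusion $\sigma(U_\ell)\cup\sigma(U_{\rm r})\subseteq\sigma_{\rm ess}(U)$, first observe that translation invariance makes $U_\ell$ (and similarly $U_{\rm r}$) unitarily equivalent, via Fourier transform in position, to a $\B(\C^2)$-valued multiplication operator on $\T$, so its spectrum is purely absolutely continuous away from the finite threshold set $\tau(U)$. Hence for each $z\in\sigma(U_\ell)$ and each $\varepsilon>0$ one can find a compactly supported unit vector $\phi\in\H$ with $\|(U_\ell-z)\phi\|<\varepsilon$. Translating $\phi$ far to the left produces a weakly null sequence $\phi_n$ satisfying
\[
\|(U-z)\phi_n\|\le\varepsilon+\|(C-C_\ell)\phi_n\|,
\]
and the second summand tends to zero by Assumption \ref{ass_short}. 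A diagonal extraction then yields a Weyl sequence for $U$ at $z$; the case $z\in\sigma(U_{\rm r})$ is symmetric.

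For the converse inclusion, one fixes $z\notin\sigma(U_\ell)\cup\sigma(U_{\rm r})$ and constructs a Fredholm parametrix for $U-z$ from
\[
B:=j_\ell(Q)\;\!(U_\ell-z)^{-1}j_\ell(Q)+j_{\rm r}(Q)\;\!(U_{\rm r}-z)^{-1}j_{\rm r}(Q).
\]
A direct computation gives $B(U-z)=1_\H+K$, where $K$ splits into commutator terms $j_\bullet(Q)(U_\bullet-z)^{-1}[j_\bullet(Q),U_\bullet]$ and short-range terms $j_\bullet(Q)(U_\bullet-z)^{-1}j_\bullet(Q)\;\!S\;\!(C-C_\bullet)$. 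The former are of finite rank because $[j_\bullet(Q),S]$ is of finite rank ($S$ being nearest-neighbour and $j_\bullet$ eventually constant), while the latter are compact because the multiplier $j_\bullet(x)\bigl(C(x)-C_\bullet\bigr)$ vanishes at infinity---thanks to Assumption \ref{ass_short} on the correct half-line and to the support of $j_\bullet$ on the other. An analogous right parametrix shows that $U-z$ is Fredholm, so $z\notin\sigma_{\rm ess}(U)$.

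The main obstacle is less any individual computation and more the bookkeeping of compactness and finite-rank properties through the cutoff manipulations, together with the fact that all of this must fit cleanly into an abstract two-Hilbert space Weyl-type theorem for unitary operators---the formulation and proof of which is itself one of the technical cores of Section \ref{Sec_unitary}.
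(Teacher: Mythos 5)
Your proof is correct but takes a genuinely different route from the paper's. The paper's argument is $C^*$-algebraic, following Georgescu--Iftimovici and M\u{a}ntoiu: it regards $U$ as an element of the crossed-product algebra $\A\rtimes\Z$ built from the $\B(\C^2)$-valued sequences admitting limits at $\pm\infty$, identifies $\K(\H)$ with the ideal $\A_0\rtimes\Z$ coming from sequences vanishing at infinity, and reads $\sigma_{\rm ess}(U)$ off as the spectrum of the image of $U$ in the quotient $(\A/\A_0)\rtimes\Z\cong\bigl(\B(\C^2)\rtimes\Z\bigr)\oplus\bigl(\B(\C^2)\rtimes\Z\bigr)$. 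You instead prove the two inclusions directly: singular Weyl sequences (translates towards $\mp\infty$ of compactly supported approximate eigenvectors of $U_\ell$ or $U_{\rm r}$) give $\sigma(U_\ell)\cup\sigma(U_{\rm r})\subseteq\sigma_{\rm ess}(U)$, and a two-sided Fredholm parametrix together with Atkinson's theorem gives the converse. Both approaches are valid; the algebraic proof is slicker and delegates all compactness bookkeeping to crossed-product functoriality, while your version is more elementary and self-contained at the cost of tracking cutoff errors explicitly. Two points to tidy in your write-up: (i) there is no ``Weyl-type stability theorem in a two-Hilbert space setting'' in Section \ref{Sec_unitary} --- that section is devoted to commutator methods --- so your opening framing does not match the paper, although your actual argument never uses such a theorem; and (ii) in the short-range parametrix term $j_\star(Q)(U_\star-z)^{-1}j_\star(Q)\;\!S\;\!(C-C_\star)$ the cutoff $j_\star$ is separated from $C-C_\star$ by the shift $S$, so to justify compactness you should first commute $j_\star$ past $S$ (picking up a finite-rank commutator), which produces a multiplier of the form $j_\star(\;\!\cdot\;\!\mp1)\bigl(C-C_\star\bigr)$ that genuinely vanishes at infinity, rather than asserting directly that $j_\star(x)\bigl(C(x)-C_\star\bigr)$ is the multiplier that appears.
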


\begin{Theorem}[$U$-smooth operators]\label{thm_U_smooth_walks}
Let $\G$ be an auxiliary Hilbert space and let $\Theta\subset\T$ be an open set with
closure $\overline\Theta\subset\T\setminus\tau(U)$. Then each operator $T\in\B(\H,\G)$
which extends continuously to an element of
$\B\big(\dom(\langle Q\rangle^{-s}),\G\big)$ for some $s>1/2$ is locally $U$-smooth on
$\Theta\setminus\sigma_{\rm p}(U)$.
\end{Theorem}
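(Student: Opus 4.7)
The plan is to deduce Theorem \ref{thm_U_smooth_walks} from the abstract Mourre-type theorem for unitary operators announced in Section 3, applied to the pair $(U, A)$ with $A$ the conjugate operator constructed in Lemma \ref{lemma_A_self}. The abstract theorem will yield a limiting absorption principle for $U$ with weights $\langle A\rangle^{-s}$, and local smoothness for $\langle Q\rangle^{-s}$-bounded operators $T$ will follow by comparison of the weights.

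First I would verify the regularity condition required by the abstract theorem (presumably $U\in C^{1,1}(A)$ or $C^{1+\varepsilon}(A)$). The asymptotic operators $U_\ell$ and $U_{\rm r}$ enjoy smooth regularity with respect to their natural conjugate operators, a fact which is built into $A$ in Lemma \ref{lemma_A_self}. The differences $C(x)-C_\ell$ and $C(x)-C_{\rm r}$ decay like $|x|^{-1-\varepsilon}$ by Assumption \ref{ass_short}, and this short-range decay is exactly what is needed to preserve the required regularity of $U$ with respect to $A$ once $A$ is patched together from left and right conjugate operators.

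Next, I would establish a strict Mourre estimate for $U$ on every open neighbourhood $\Theta'\subset \overline\Theta$, that is, strict positivity
\begin{equation*}
E_U(\Theta')\;\!U^*[A,U]\;\!E_U(\Theta')\;\ge\;\alpha\;\! E_U(\Theta') + K,
\end{equation*}
with $\alpha>0$ and $K$ compact. This rests on three ingredients: Mourre estimates for $U_\ell$ and $U_{\rm r}$ (valid strictly on $\T\setminus \partial\sigma(U_\ell)$ and $\T\setminus \partial\sigma(U_{\rm r})$ respectively, hence strictly on $\overline\Theta \subset \T\setminus \tau(U)$); a localisation-and-gluing argument combining these two estimates through the left/right splitting of $\Z$; and the compactness of the residual terms, which comes from the short-range bound on $C(x)-C_\ell$ and $C(x)-C_{\rm r}$ together with commutators of the localisation functions with $U$ having compactly supported symbols.

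The abstract Mourre theorem of Section 3 then delivers a limiting absorption principle on $\Theta\setminus\sigma_{\rm p}(U)$, equivalently local $U$-smoothness of $\langle A\rangle^{-s}$ on that set for every $s>1/2$. To transfer this from $\langle A\rangle^{-s}$ to the operators $T$ of the statement, I would use the fact (apparent from its construction in Lemma \ref{lemma_A_self}) that $A$ grows at most linearly with $|Q|$ at infinity, so that $\langle A\rangle^s\langle Q\rangle^{-s}$ extends to a bounded operator on $\H$. Consequently any $T$ extending continuously to $\B\big(\dom(\langle Q\rangle^{-s}),\G\big)$ factors through $\langle Q\rangle^{-s}$, hence through $\langle A\rangle^{-s}$, and its local $U$-smoothness is inherited from that of $\langle A\rangle^{-s}$.

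The main obstacle will be the gluing step of the Mourre estimate: one must show that the conjugate operators $A_\ell$ and $A_{\rm r}$ chosen for the asymptotic models combine into a single $A$ which is compatible with the short-range perturbation, so that the strict positivity on each side survives the patching modulo compact errors, and so that the required regularity class is preserved. The comparison between $\langle A\rangle^{-s}$ and $\langle Q\rangle^{-s}$ in the last step is more routine once the explicit form of $A$ from Lemma \ref{lemma_A_self} is at hand.
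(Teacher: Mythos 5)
Your proposal is correct and follows essentially the same route as the paper: invoke the abstract unitary Mourre theorem (Theorem \ref{thm_U_smooth}) with the patched conjugate operator $A$ of Lemma \ref{lemma_A_self}, verify the Mourre estimate (Proposition \ref{prop_Mourre}) and the regularity $U\in C^{1+\varepsilon}(A)$ (Lemma \ref{lemma_C_1_epsilon}), and then transfer the $\langle A\rangle^{-s}$-weight to a $\langle Q\rangle^{-s}$-weight using boundedness of $A\langle Q\rangle^{-1}$ and interpolation, which is exactly the argument $\dom(Q)\subset\dom(A)\Rightarrow\dom(\langle A\rangle^{s})^*\subset\dom(\langle Q\rangle^{-s})$ used in the paper.
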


\begin{Theorem}[Spectrum of $U$]\label{thm_spectrum_U_walks}
For any closed set $\Theta\subset\T\setminus\tau(U)$, the operator $U$ has at most
finitely many eigenvalues in $\Theta$, each one of finite multiplicity, and $U$ has no
singular continuous spectrum in $\Theta$.
\end{Theorem}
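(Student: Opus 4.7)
The plan is to deduce Theorem \ref{thm_spectrum_U_walks} from the Mourre estimate that underlies Theorem \ref{thm_U_smooth_walks}, combined with the unitary version of the virial theorem developed in Section \ref{Sec_unitary}, applied to the pair $(U,A)$ with $A$ the conjugate operator produced by Lemma \ref{lemma_A_self}. Both conclusions then follow from the abstract machinery in the usual way; the real work has been done in the construction of $A$ and in the verification of the commutator regularity of $U$.

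For the statement on eigenvalues, since $\Theta$ is closed and contained in the open set $\T\setminus\tau(U)$, compactness allows us to cover it by finitely many closed arcs on each of which $U$ satisfies a strict Mourre estimate of the form
\[
E^U(\Theta_j)\,A'\,E^U(\Theta_j)\ge a_j\,E^U(\Theta_j)+K_j,\qquad a_j>0,~K_j\text{ compact},
\]
where $A'$ denotes the commutator entering the Mourre inequality and $E^U$ is the spectral projection of $U$. The unitary virial theorem from Section \ref{Sec_unitary} forces an eigenvector of $U$ with eigenvalue in $\Theta_j$ to be orthogonal to the positive part of $A'$, and combined with the strict positivity above this implies that the eigenvalues of $U$ in $\Theta_j$ are isolated and of finite multiplicity. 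A finite union then yields the desired statement on $\Theta$.

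For the absence of singular continuous spectrum, I would apply Theorem \ref{thm_U_smooth_walks} directly. Setting $\Theta_0:=\Theta\setminus\sigma_{\rm p}(U)$, which is relatively open in $\Theta$ by the first assertion, I would exhaust $\Theta_0$ by an increasing family of open subsets $\Theta_n$ with $\overline{\Theta_n}\subset\T\setminus\tau(U)$. Applying Theorem \ref{thm_U_smooth_walks} with $\G=\H$ and $T=\langle Q\rangle^{-s}$ for some $s>1/2$, the resulting local $U$-smoothness of $T$ on $\Theta_n\setminus\sigma_{\rm p}(U)$ implies, by the standard Kato--Putnam type argument adapted to unitary operators, that the spectral measure $\langle\Psi,E^U(\,\cdot\,)\Psi\rangle_{\H}$ is absolutely continuous on $\Theta_n$ for every $\Psi\in\Ran(\langle Q\rangle^{-s})$. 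Since $\Ran(\langle Q\rangle^{-s})$ is dense in $\H$, this promotes to the absence of singular continuous spectrum of $U$ on $\Theta_n$, and taking the union over $n$ gives the full statement on $\Theta$.

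The main obstacle lies upstream of the present theorem, in the verification that $U$ belongs to the correct regularity class (of $C^{1,1}$ or $C^{1+0}$ type) with respect to $A$: one must show that the short-range differences $C(x)-C_\ell$ and $C(x)-C_{\rm r}$ controlled by Assumption \ref{ass_short} produce only compact remainders in the commutator expansion, so that the strict Mourre estimates known for $U_\ell$ and $U_{\rm r}$ transfer to $U$ outside $\tau(U)$. Once this has been established in Section \ref{Sec_Mourre}, the extraction of the spectral consequences above is essentially automatic within the two-Hilbert spaces framework of Section \ref{Sec_unitary}.
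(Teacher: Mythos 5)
Your proposal is correct and rests on exactly the same two ingredients the paper uses, namely the Mourre estimate of Proposition \ref{prop_Mourre} and the regularity $U\in C^{1+\varepsilon}(A)$ of Lemma \ref{lemma_C_1_epsilon}, but it packages the conclusion a bit differently. The paper's own proof is a one-liner: it invokes the abstract Theorem \ref{thm_spec_U}, which delivers \emph{both} statements (finitely many finite-multiplicity eigenvalues, and absence of singular continuous spectrum) in a single stroke once those two hypotheses are verified. You instead unpack that black box: you rederive the eigenvalue finiteness directly from the virial theorem combined with the Mourre inequality (a compactness argument on orthonormal sequences of eigenvectors, which is precisely the internal proof of that part of Theorem \ref{thm_spec_U}), and then you deduce the absence of singular continuous spectrum by passing through the sibling Theorem \ref{thm_U_smooth_walks} with $T=\langle Q\rangle^{-s}$, $\G=\H$, and using density of $\Ran(\langle Q\rangle^{-s})$ together with the inclusion $\overline{E^U(\Theta')T^*\G^*}\subset\H_{\rm ac}(U)$ noted in Section \ref{Sec_smooth}. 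This detour through locally $U$-smooth operators is a perfectly valid alternative (it is the classical Kato-type route to absolute continuity), whereas the paper avoids it by treating Theorem \ref{thm_spec_U} as a self-contained corollary of the Mourre estimate. The two routes are morally equivalent; they differ only in which consequence of the Mourre estimate is invoked. You also correctly locate the real labor upstream, in establishing $U\in C^{1+\varepsilon}(A)$ from Assumption \ref{ass_short}, which the paper carries out in Lemma \ref{lemma_C_1_epsilon}.
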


To prove these theorems, we develop in Section \ref{Sec_unitary} commutator methods
for unitary operators in a two-Hilbert spaces setting: Given a triple $(\H,U,A)$
consisting in a Hilbert space $\H$, a unitary operator $U$, and a self-adjoint
operator $A$, we determine how to obtain commutator results for $(\H,U,A)$ in terms of
commutator results for a second triple $(\H_0,U_0,A_0)$ also consisting in a Hilbert
space, a unitary operator, and a self-adjoint operator. In the process, a bounded
identification operator $J:\H_0\to\H$ must also be chosen. The intuition behind this
approach comes from scattering theory which tells us that given a unitary operator $U$
describing some quantum system in a Hilbert space $\H$ there often exists a simpler
unitary operator $U_0$ in a second Hilbert space $\H_0$ describing the same quantum
system in some asymptotic regime.

Our main results in this context are the following. First, we present in Theorem
\ref{thm_fonctionrho} conditions guaranteeing that $U$ and $A$ satisfy a Mourre
estimate on a Borel set $\Theta\subset\T$ as soon as $U_0$ and $A_0$ satisfy a Mourre
estimate on $\Theta$ (equivalently, we present conditions guaranteeing that $A$ is a
conjugate operator for $U$ on $\Theta$ as soon as $A_0$ is a conjugate operator for
$U_0$ on $\Theta$). Next, we present in Proposition \ref{prop_C1_short} conditions
guaranteeing that $U$ is regular with respect to $A$ (that is, $U\in C^1(A)$) as soon
as $U_0$ is regular with respect to $A_0$ (that is, $U_0\in C^1(A_0)$). Finally, we
give in Assumption \ref{ass_eaa} and Corollaries
\ref{Corol_C1(A)}-\ref{Corol_est_supp} conditions guaranteeing that the most natural
choice for the operator $A$, namely $A=JA_0J^*$, is indeed a conjugate operator for
$U$ as soon as $A_0$ is a conjugate operator for $U_0$.

\section{Unitary operators in a two-Hilbert spaces setting}\label{Sec_unitary}
\setcounter{equation}{0}

In this section, we recall some facts on the spectral family of unitary operators, the
Cayley transform of a unitary operator, locally smooth operators for unitary
operators, and commutator methods for unitary operators in one Hilbert space. We also
present new results on commutator methods for unitary operators in a two-Hilbert
spaces setting.

\subsection{Cayley transform}\label{Sec_Cayley}

Let $\H$ be a Hilbert space with norm $\|\cdot\|_\H$ and scalar product
$\langle\;\!\cdot\;\!,\;\!\cdot\;\!\rangle_\H$ linear in the second argument, let
$\B(\H)$ be the set of bounded linear operators in $\H$ with norm
$\|\cdot\|_{\B(\H)}$, and let $\K(\H)$ be the set of compact linear operators in $\H$.
A unitary operator $U$ in $\H$ is a surjective isometry, that is, an element
$U\in\B(\H)$ satisfying $U^*U=UU^*=1$. Since $U^*U=UU^*$, the spectral theorem for
normal operators implies that $U$ admits exactly one complex spectral family $E_U$,
with support $\supp(E_U)\subset\T$, such that $U=\int_\C z\;\!E_U(\d z)$. The support
$\supp(E_U)$ is the set of points of non-constancy of $E_U$, which coincides with the
spectrum $\sigma(U)$ of $U$ \cite[Thm.~7.34(a)]{Wei80}. For each $s,t\in\R$, one has
the factorization
$$
E_U(s+it):=E_{\re(U)}(s)\;\!E_{\im(U)}(t),
$$
where $E_{\re(U)}$ and $E_{\im(U)}$ are the real spectral families of the bounded
self-adjoint operators
$$
\textstyle
\re(U):=\frac12\;\!(U+U^*)\quad\hbox{and}\quad
\im(U):=\frac1{2i}\;\!(U-U^*).
$$
One can associate in a canonical way a real spectral family $\widetilde E_U$, with
support $\supp(\widetilde E_U)\subset[0,2\pi]$, to the complex spectral family $E_U$
by noting that
$$
U=\int_\R\e^{i\lambda}\widetilde E_U(\d \lambda)
\quad\hbox{with}\quad
\widetilde E_U(\lambda):=
\begin{cases}
0 & \hbox{if }\lambda<0\\
E_U(\e^{i\lambda}) & \hbox{if }\lambda\in[0,2\pi)\\
1 & \hbox{if }\lambda\ge2\pi.
\end{cases}
$$
Since $\widetilde E_U$ is a real spectral family, the corresponding real spectral
measure $\widetilde E^U$ admits the decomposition
$$
\widetilde E^U=\widetilde E^U_{\rm p}+\widetilde E^U_{\rm sc}+\widetilde E^U_{\rm ac},
$$
with $\widetilde E^U_{\rm p}$, $\widetilde E^U_{\rm sc}$, $\widetilde E^U_{\rm ac}$
the pure point, the singular continuous, and the absolutely continuous components of
$\widetilde E_U$, respectively. The corresponding subspaces
$\H_{\rm p}(U):=\widetilde E^U_{\rm p}(\R)\H$,
$\H_{\rm sc}(U):=\widetilde E^U_{\rm sc}(\R)\H$,
$\H_{\rm ac}(U):=\widetilde E^U_{\rm ac}(\R)\H$ provide an orthogonal decomposition
$$
\H=\H_{\rm p}(U)\oplus\H_{\rm sc}(U)\oplus\H_{\rm ac}(U)
$$
which reduces the operator $U$. The sets
\begin{equation*}
\sigma_{\rm p}(U):=\sigma\big(U|_{\H_{\rm p}(U)}\big),\quad
\sigma_{\rm sc}(U):=\sigma\big(U|_{\H_{\rm sc}(U)}\big),\quad
\sigma_{\rm ac}(U):=\sigma\big(U|_{\H_{\rm ac}(U)}\big),
\end{equation*}
are called pure point spectrum, singular continuous spectrum, and absolutely spectrum
continuous of $U$, respectively, and the set
$\sigma_{\rm c}(U):=\sigma_{\rm sc}(U)\cup\sigma_{\rm ac}(U)$ is called the continuous
spectrum of $U$.

If $1\notin\sigma_{\rm p}(U)$, then the subspace $(1-U)\;\!\H$ is dense in $\H$, and
the Cayley transform of $U$ given by
\begin{equation}\label{Cayley_U}
H\varphi:=i(1+U)(1-U)^{-1}\varphi,\qquad\varphi\in\dom(H):=(1-U)\;\!\H,
\end{equation}
is a self-adjoint operator in $\H$ \cite[Thm.~8.4(b)]{Wei80}. Also, a simple
calculation shows that
\begin{equation}\label{Cayley_L}
U=(H-i)(H+i)^{-1}=\e^{iL}\quad\hbox{with}\quad L:=2\arctan(H)+\pi.
\end{equation}
Therefore, the points of the spectra $\sigma(L)\subset[0,2\pi]$ of $L$ and
$\sigma(U)\subset\T$ of $U$ are linked by the relation
$$
\theta\in\sigma(U)~\Leftrightarrow~
2\arctan\left(i\;\!\frac{1+\theta}{1-\theta}\right)+\pi\in\sigma(L)
$$
(in particular, the point $\theta=1$ in $\sigma(U)$ corresponds to the points
$\lambda=0$ and $\lambda=2\pi$ in $\sigma(L)$). In consequence, if $E^L$ denotes the
real spectral measure of $L$, one has for any Borel set $\Theta\subset\T$ the equality
\begin{equation}\label{eq_measures}
E^U(\Theta)=E^L(\Lambda)
\quad\hbox{with}\quad
\Lambda:=\left\{2\arctan\left(i\;\!\frac{1+\theta}{1-\theta}\right)+\pi
\mid\theta\in\Theta\right\}.
\end{equation}
This implies for each Borel set $\Lambda\subset[0,2\pi)$ that
$$
\widetilde E^U(\Lambda)=E^U(\e^{i\Lambda})=E^L\big(f(\Lambda)\big)
\quad\hbox{with}\quad
f(\lambda):=
\begin{cases}
0 & \hbox{if}~~\lambda=0\\
2\arctan\left(i\;\!\frac{1+\e^{i\lambda}}{1-\e^{i\lambda}}\right)+\pi
& \hbox{if}~~\lambda\in(0,2\pi).
\end{cases}
$$
But a simple calculation shows that $f(\lambda)=\lambda$ for each
$\lambda\in[0,2\pi)$. So, one has $\widetilde E^U(\Lambda)=E^L(\Lambda)$ for each
Borel set $\Lambda\subset[0,2\pi)$. Now, it is also clear from the definitions that
$\widetilde E^U(\Lambda)=E^L(\Lambda)$ for each Borel set
$\Lambda\subset\R\setminus[0,2\pi)$. So, one concludes that $\widetilde E^U=E^L$, and
thus that $U$ and $L$ possess the same spectral properties, up to the correspondence
$U=\e^{iL}$.

\subsection{Locally $U$-smooth operators}\label{Sec_smooth}

Let $U$ be a unitary operator in a Hilbert space $\H$, and let $\G$ be an auxiliary
Hilbert space. Then, an operator $T\in\B(\H,\G)$ is locally $U$-smooth on an open set
$\Theta\subset\T$ if for each closed set $\Theta'\subset\Theta$ there exists
$c_{\Theta'}\ge0$ such that
\begin{equation}\label{def_U_smooth}
\sum_{n\in\Z}\big\|T\;\!U^nE^U(\Theta')\varphi\big\|_\G^2
\le c_{\Theta'}\;\!\|\varphi\|_\H^2\quad\hbox{for each $\varphi\in\H$},
\end{equation}
and $T$ is (globally) $U$-smooth if \eqref{def_U_smooth} is satisfied with
$\Theta'=\T$. The condition \eqref{def_U_smooth} is invariant under rotation by
$\omega\in\T$ in the sense that if $T$ is $U$-smooth on $\Theta$, then $T$ is
$(\omega U)$-smooth on $\omega\Theta$ since
$$
\big\|T(\omega U)^nE^{\omega U}(\omega\Theta')\varphi\big\|_\G
=\big\|T\;\!U^nE^U(\Theta')\varphi\big\|_\G
$$
for each closed set $\Theta'\subset\Theta$ and each $\varphi\in\H$. An important
consequence of the existence of a locally $U$-smooth operator $T$ on $\Theta$ is the
inclusion $\overline{E^U(\Theta)T^*\G^*}\subset\H_{\rm ac}(U)$, with $\G^*$ the
adjoint space of $\G$ (see \cite[Thm.~2.1]{ABCF06} for a proof).

Local smoothness with respect to a self-adjoint operator $H$ in $\H$ with domain
$\dom(H)$ is defined in a similar way. An operator $T\in\B\big(\dom(H),\G\big)$ is
locally $H$-smooth on an open set $\Lambda\subset\R$ if for each compact set
$\Lambda'\subset\Lambda$ there exists $c_{\Lambda'}\ge0$ such that
\begin{equation}\label{def_T_smooth}
\int_\R\big\|T\e^{-itH}E^H(\Lambda')\varphi\big\|^2_\G\,\d t
\le c_{\Lambda'}\;\!\|\varphi\|_\H^2\quad\hbox{for each $\varphi\in\H$,}
\end{equation}
and $T$ is (globally) $H$-smooth if \eqref{def_T_smooth} is satisfied with
$\Lambda'=\R$. The condition \eqref{def_T_smooth} is invariant under translation by
$s\in\R$ in the sense that if $T$ is $H$-smooth on $\Lambda$, then $T$ is
$(H+s)$-smooth on $\Lambda+s$ since
$$
\big\|T\e^{-it(H+s)}E^{H+s}(\Lambda'+s)\varphi\big\|_\G
=\big\|T\e^{-itH}E^H(\Lambda')\varphi\big\|_\G
$$
for each compact set $\Lambda'\subset\Lambda$ and each $\varphi\in\H$. Also, the
existence of a locally $H$-smooth operator $T$ on $\Lambda\subset\R$ implies the
inclusion $\overline{E^H(\Lambda)T^*\G^*}\subset\H_{\rm ac}(H)$ (see
\cite[Cor~7.1.2]{ABG96} for a proof).

If $1\notin\sigma_{\rm p}(U)$, then the Cayley transform $H$ of $U$ and the operator
$L=2\arctan(H)+\pi$ are defined by \eqref{Cayley_U} and \eqref{Cayley_L}, and the
existence of locally $U$-smooth operators is equivalent to the existence of locally
$H$-smooth operators and locally $L$-smooth operators:

\begin{Lemma}
Let $U$ be a unitary operator in a Hilbert space $\H$ with $1\notin\sigma_{\rm p}(U)$,
let $\G$ be an auxiliary Hilbert space, let $T\in\B(\H,\G)$, and let $\Theta\subset\T$
be an open set. Then, the following are equivalent:
\begin{enumerate}
\item[(i)] $T$ is locally $U$-smooth on $\Theta$,
\item[(ii)] $T$ is locally $L$-smooth on
$\left\{2\arctan\big(i\;\!\frac{1+\theta}{1-\theta}\big)
+\pi\mid\theta\in\Theta\right\}$,
\item[(iii)] $T(H+i)$ is locally $H$-smooth on
$\left\{i\;\!\frac{1+\theta}{1-\theta}\mid\theta\in\Theta\right\}$.
\end{enumerate}
\end{Lemma}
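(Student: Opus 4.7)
\emph{Proof sketch.} The plan is to establish (i)$\Leftrightarrow$(ii) via a Plancherel-type identity made possible by the equality $\widetilde E^U=E^L$ recorded in Subsection~\ref{Sec_Cayley}, and then (ii)$\Leftrightarrow$(iii) via the functional calculus link $L=2\arctan(H)+\pi$ together with a change of variables. The main technical point will be the handling of spectral measures that are not a priori absolutely continuous.

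For (i)$\Leftrightarrow$(ii), I will fix a closed set $\Theta'\subset\Theta$ and set $\Lambda':=\{\lambda\in[0,2\pi)\mid\e^{i\lambda}\in\Theta'\}$, so that $E^U(\Theta')=E^L(\Lambda')$. Given $\psi\in E^L(\Lambda')\H$ and $\eta\in\G$, I will introduce the complex Borel measure $\nu_{\eta,\psi}(\,\cdot\,):=\big\langle T^*\eta,E^L(\,\cdot\,)\psi\big\rangle_\H$, which is supported in $[0,2\pi)$, and observe that $\langle\eta,TU^n\psi\rangle_\G=\widehat{\nu_{\eta,\psi}}(-n)$ and $\langle\eta,T\e^{-itL}\psi\rangle_\G=\widehat{\nu_{\eta,\psi}}(t)$. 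Parseval's identity on $[0,2\pi]$ and Plancherel's identity on $\R$ both produce $2\pi\|\rho_{\eta,\psi}\|_{\ltwo}^2$ when $\nu_{\eta,\psi}$ admits an $\ltwo$-density $\rho_{\eta,\psi}$, and both diverge otherwise. Summing over an orthonormal basis of $\G$ then yields
\[
\sum_{n\in\Z}\|TU^n\psi\|_\G^2=\int_\R\|T\e^{-itL}\psi\|_\G^2\,\d t,
\]
converting the defining inequality for local $U$-smoothness on $\Theta$ into that for local $L$-smoothness on the image of $\Theta$ under $\theta\mapsto 2\arctan(i(1+\theta)/(1-\theta))+\pi$.

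For (ii)$\Leftrightarrow$(iii), I will use that $\mu\mapsto 2\arctan(\mu)+\pi$ is a diffeomorphism from $\R$ onto $(0,2\pi)$ which maps the set appearing in (iii) onto the one appearing in (ii), so that $E^L(\Lambda')=E^H(\Lambda'_H)$ on the corresponding compact subsets. For $\psi\in E^H(\Lambda'_H)\H$, the change of variables $\d\lambda=2(1+\mu^2)^{-1}\,\d\mu$ in the spectral integral, combined with the elementary identity $\rho^H_{T^*\eta,(H+i)\psi}(\mu)=(\mu+i)\rho^H_{T^*\eta,\psi}(\mu)$ on the absolutely continuous part, then produces
\[
\int_\R\|T\e^{-itL}\psi\|_\G^2\,\d t=\tfrac{1}{2}\int_\R\|T(H+i)\e^{-itH}\psi\|_\G^2\,\d t,
\]
which transfers the defining bound for local $L$-smoothness on $\Lambda$ to the one for local $H$-smoothness of $T(H+i)$ on $\Lambda_H$.

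The principal subtlety is that the identities above, as written in density form, presuppose absolute continuity of the relevant spectral measures. This will be handled by the standard observation that finiteness of any one of the three defining expressions for local smoothness forces the associated complex spectral measures $\nu^L_{T^*\eta,\psi}$ (resp.\ $\nu^H_{T^*\eta,(H+i)\psi}$) to be absolutely continuous with $\ltwo$-densities on the corresponding set; once this is in place, the identities apply and the uniform bounds transfer cleanly, possibly up to the factor $1/2$ made explicit above.
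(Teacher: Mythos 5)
Your proof is correct, but it takes a genuinely different route from the paper on both equivalences, and the differences are worth noting. For (i)$\Leftrightarrow$(ii), the paper avoids any discussion of spectral densities: starting from $\e^{-i(s+n)L}=\e^{-isL}U^{-n}$, it applies Tonelli's theorem to write
\begin{equation*}
\int_\R\big\|T\e^{-itL}E^L(\Lambda')\varphi\big\|^2_\G\,\d t
=\int_0^1 \sum_{n\in\Z}\big\|TU^{-n}E^U(\Theta')\e^{-isL}\varphi\big\|^2_\G\,\d s,
\end{equation*}
and then uses the \emph{uniformity} of the $U$-smoothness constant $c_{\Theta'}$ over the family of test vectors $\e^{-isL}\varphi$, $s\in[0,1]$, to conclude. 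This only establishes the needed inequality, not the Plancherel equality you exhibit, and it sidesteps entirely the question of absolute continuity of the spectral measures. Your route---exploiting that $\sum_n|\widehat\nu(n)|^2=\int_\R|\widehat\nu(t)|^2\,\d t$ for a measure $\nu$ supported on an interval of length exactly $2\pi$---is tighter in the sense of producing an exact identity, but it forces you to argue that finiteness of either side implies $\nu$ has an $\ltwo$-density, which is the "subtlety" you correctly flag. Your normalisations check out (both sides equal $2\pi\|\rho\|^2_{\ltwo([0,2\pi])}$), and the density-forcing claim is standard, so this is a valid alternative.

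For (ii)$\Leftrightarrow$(iii) the divergence is sharper. The paper passes from the time-integral characterisations of smoothness to the equivalent resolvent/supremum characterisations (Prop.~7.1.1 of \cite{ABG96} for $H$, Thm.~2.2 of \cite{ABCF06} for $U$), and then invokes the algebraic identity
$\delta(U,z)=(H^2+1)\im\big((H-i\tfrac{1+z}{1-z})^{-1}\big)$
to translate directly between the two suprema; the factor $H^2+1=(H+i)(H-i)$ is precisely what turns $T$ into $T(H+i)$. You instead perform a change of variables $\lambda=2\arctan(\mu)+\pi$, $\d\lambda=2(1+\mu^2)^{-1}\d\mu$, at the level of spectral densities, which also accounts for the $(H+i)$ factor via $|{\mu+i}|^2=1+\mu^2$ and yields the sharp constant $\tfrac12$ (your computation is consistent: $2\pi\cdot\tfrac12\int(1+\mu^2)|\rho^H|^2=\tfrac12\cdot2\pi\int(1+\mu^2)|\rho^H|^2$). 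One point you should be more explicit about, which the paper flags in a parenthetical remark: when $1\in\overline{\Theta'}$, the set $\Lambda'_H$ is \emph{unbounded}, so $E^H(\Lambda'_H)\varphi$ need not lie in $\dom(H)$ and the expression $T(H+i)\e^{-itH}E^H(\Lambda'_H)\varphi$ needs interpretation. The paper's resolvent formulation remains manifestly well-defined in that case; in your density formulation the integral $\int(1+\mu^2)|\rho^H(\mu)|^2\,\d\mu$ over an unbounded set is still meaningful (finite iff the relevant quantity is), but this is worth stating rather than leaving implicit.
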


The equivalence (i) $\Leftrightarrow$ (ii) in the case $\Theta=\T$ is due to T. Kato
(see \cite[Sec.~7]{Kat68}).

\begin{proof}
Assume that $T$ is locally $U$-smooth on $\Theta$, take a closed set
$\Theta'\subset\Theta$, and let
$$
\Lambda':=\left\{2\arctan\left(i\;\!\frac{1+\theta}{1-\theta}\right)+\pi
\mid\theta\in\Theta'\right\}.
$$
Then, Equations \eqref{Cayley_L}-\eqref{eq_measures} and Tonnelli's theorem imply for
each $\varphi\in\H$ that
\begin{align*}
\int_\R \big\|T\e^{-itL}E^L(\Lambda')\varphi\big\|^2_\G\,\d t
&=\sum_{n\in\Z}\int_0^1\big\|T\e^{-i(s+n)L}E^L(\Lambda')\varphi\big\|^2_\G\,\d s\\
&=\int_0^1 \sum_{n\in\Z}\big\|TU^{-n}E^U(\Theta')\e^{-isL}\varphi\big\|^2_\G\,\d s\\
&\le\int_0^1 c_{\Theta'}\;\!\big\|\e^{-isL}\varphi\big\|^2_\H\,\d s\\
&=c_{\Theta'}\;\!\|\varphi\|^2_\H\;\!.
\end{align*}
This shows the implication (i) $\Rightarrow$ (ii). The implication (ii) $\Rightarrow$
(i) is shown in a similar way.

To show the equivalence (i) $\Leftrightarrow$ (iii) we observe that
\eqref{def_U_smooth} is equivalent to
$$
\sup_{z\in\D,\,\psi\in\G,\,\|\psi\|_\G=1}\;\!
\left|\big\langle\psi,T\;\!\delta(U,z)E^U(\Theta')T^*\psi\big\rangle_\G\right|
<\infty,
$$
with $\D:=\{z\in\C\mid|z|<1\}$ and
$
\delta(U,z):=(1-zU^{-1})^{-1}-(1-\overline z^{-1}U^{-1})^{-1}
$
(this follows from the proof of \cite[Thm.~2.2]{ABCF06}), and we observe that
\eqref{def_T_smooth} is equivalent to
$$
\sup_{\omega\in\HH,\,\psi\in\G,\,\|\psi\|_\G=1}\;\!\left|\big\langle\psi,
T\im\big((H-\omega)^{-1}\big)E^H(\Lambda')T^*\psi\big\rangle_\G\right|
<\infty
$$
with $\HH:=\{z\in\C\mid\im(z)>0\}$ (this follows from \cite[Prop.~7.1.1]{ABG96}).
Also, we note that
$$
\delta(U,z)=\big(H^2+1\big)\im\left(\left(H-i\;\!\frac{1+z}{1-z}\right)^{-1}\right),
\quad z\in\D,
$$
and we recall that the map $\D\ni z\mapsto i\;\!\frac{1+z}{1-z}\in\HH$ (the Cayley
transform) is a bijection. So, for any closed set $\Theta'\subset\Theta$, we have
\begin{align*}
&\sup_{z\in\D,\,\psi\in\G,\,\|\psi\|_\G=1}\;\!
\left|\big\langle\psi,T\;\!\delta(U,z)E^U(\Theta')T^*\psi\big\rangle_\G\right|\\
&=\sup_{\omega\in\HH,\,\psi\in\G,\,\|\psi\|_\G=1}\;\!
\left|\big\langle\psi,T(H+i)\im\big((H-\omega)^{-1}\big)E^H(\Lambda')
\big(T(H+i)\big)^*\psi\big\rangle_\G\right|
\end{align*}
with $\Lambda'=\big\{i\;\!\frac{1+\theta}{1-\theta}\mid\theta\in\Theta'\big\}$, and
thus (i) and (iii) are equivalent (note that the operator
$$
T(H+i)\im\big((H-\omega)^{-1}\big)E^H(\Lambda')\big(T(H+i)\big)^*
$$
belongs to $\B(\G)$ for each $\omega\in\HH$ even if $1\in\Theta'$, that is, even if
$\Lambda'$ is not bounded).
\end{proof}

\subsection{Commutator methods in one Hilbert space}\label{Sec_one_Hilbert}

In this section, we present some results on commutator methods for unitary operators
in one Hilbert space $\H$. We start by recalling definitions and results borrowed from
\cite{ABG96,FRT13,Sah97_2}. Let $S\in\B(\H)$ and let $A$ be a self-adjoint operator in
$\H$ with domain $\dom(A)$. For any $k\in\N$, we say that $S$ belongs to $C^k(A)$,
with notation $S\in C^k(A)$, if the map
$$
\R\ni t\mapsto\e^{-itA}S\e^{itA}\in\B(\H)
$$
is strongly of class $C^k$. In the case $k=1$, one has $S\in C^1(A)$ if and only if
the quadratic form
$$
\dom(A)\ni\varphi\mapsto\big\langle A\;\!\varphi,S\varphi\big\rangle_\H
-\big\langle\varphi,SA\;\!\varphi\big\rangle_\H\in\C
$$
is continuous for the topology induced by $\H$ on $\dom(A)$. The operator
corresponding to the continuous extension of the form is denoted by $[A,S]\in\B(\H)$,
and it verifies
$$
[A,S]=\slim_{\tau\to0}\;\![A_\tau,S]
\quad\hbox{with}\quad A_\tau:=(i\tau)^{-1}\big(\e^{i\tau A}-1\big)\in\B(\H),
\quad\tau\in\R\setminus\{0\}.
$$

Three regularity conditions slightly stronger than $S\in C^1(A)$ are defined as
follows: $S$ belongs to $C^{1,1}(A)$, with notation $S\in C^{1,1}(A)$, if
$$
\int_0^1\big\|\e^{-itA}S\e^{itA}+\e^{itA}S\e^{-itA}-2S\big\|_{\B(\H)}
\,\frac{\d t}{t^2}<\infty.
$$
$S$ belongs to $C^{1+0}(A)$, with notation $S\in C^{1+0}(A)$, if $S\in C^1(A)$ and
$$
\int_0^1\big\|\e^{-itA}[A,S]\e^{itA}-[A,S]\big\|_{\B(\H)}\,\frac{\d t}t<\infty.
$$
$S$ belongs to $C^{1+\varepsilon}(A)$ for some $\varepsilon\in(0,1)$, with notation
$S\in C^{1+\varepsilon}(A)$, if $S\in C^1(A)$ and
$$
\big\|\e^{-itA}[A,S]\e^{itA}-[A,S]\big\|_{\B(\H)}
\le{\rm Const.}\;\!t^\varepsilon\quad\hbox{for all $t\in(0,1)$.}
$$
As banachisable topological vector spaces, the sets $C^2(A)$, $C^{1+\varepsilon}(A)$,
$C^{1+0}(A)$, $C^{1,1}(A)$, $C^1(A)$, and $C^0(A)=\B(\H)$, satisfy the continuous
inclusions \cite[Sec.~5.2.4]{ABG96}
$$
C^2(A)\subset C^{1+\varepsilon}(A)\subset C^{1+0}(A)\subset C^{1,1}(A)\subset C^1(A)
\subset C^0(A).
$$

Now, we adapt to the case of unitary operators the definition of two useful functions
introduced in \cite[Sec.~7.2]{ABG96} in the case of self-adjoint operators. For that
purpose, we let $U$ be a unitary operator with $U\in C^1(A)$, for $S,T\in\B(\H)$ we
write $T\gtrsim S$ if there exists an operator $K\in\K(\H)$ such that $T+K\ge S$, and
for $\theta\in\T$ and $\varepsilon>0$ we set
$$
\Theta(\theta;\varepsilon)
:=\big\{\theta'\in\T\mid|\arg(\theta-\theta')|<\varepsilon\big\}
\quad\hbox{and}\quad
E^U(\theta;\varepsilon):=E^U\big(\Theta(\theta;\varepsilon)\big).
$$
With these notations at hand, we define the functions
$\varrho^A_U:\T\to(-\infty,\infty]$ and $\widetilde\varrho^A_U:\T\to(-\infty,\infty]$
by
$$
\varrho^A_U(\theta)
:=\sup\big\{a\in\R\mid\exists\;\!\varepsilon>0~\hbox{such that}~E^U(\theta;\varepsilon)
U^{-1}[A,U]E^U(\theta;\varepsilon)\ge a\;\!E^U(\theta;\varepsilon)\big\}
$$
and
$$
\widetilde\varrho^A_U(\theta)
:=\sup\big\{a\in\R \mid\exists\;\!\varepsilon>0~\hbox{such that}~E^U(\theta;\varepsilon)
U^{-1}[A,U]E^U(\theta;\varepsilon)\gtrsim a\;\!E^U(\theta;\varepsilon)\big\}.
$$
In applications, the function $\widetilde\varrho^A_U$ is more convenient than the
function $\varrho^A_U $ since it is defined in terms of a weaker positivity condition
(positivity up to compact terms). A simple argument shows that
$\widetilde\varrho^A_U(\theta)$ can be defined in an equivalent way by
\begin{equation}\label{equivalent_def}
\widetilde\varrho^A_U(\theta)
=\sup\big\{a\in\R\mid\exists\;\!\eta\in C^\infty(\T,\R)~\hbox{such that}
~\eta(\theta)\ne0~\hbox{and}~\eta(U)U^{-1}[A,U]\eta(U)\gtrsim a\;\!\eta(U)^2\big\}.
\end{equation}
Further properties of the functions $\widetilde\varrho^A_U$ and $\varrho^A_U$ are
collected in the following lemmas. The first one corresponds to
\cite[Prop.~2.3]{FRT13}.

\begin{Lemma}[Virial Theorem for $U$]
Let $U$ be a unitary operator in $\H$ and let $A$ be a self-adjoint operator in $\H$
with $U\in C^1(A)$. Then,
$$
E^U(\{\theta\})U^{-1}[A,U]E^U(\{\theta\})=0
$$
for each $\theta\in\T$. In particular, one has
$\big\langle\varphi,U^{-1}[A,U]\varphi\big\rangle=0$ for each eigenvector
$\varphi\in\H$ of $U$.
\end{Lemma}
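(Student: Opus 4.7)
The plan is to mimic the classical virial theorem argument for self-adjoint operators, using the regularisation $A_\tau=(i\tau)^{-1}(\e^{i\tau A}-1)$ introduced just above the statement, which is the natural bounded replacement for $A$ when $U\in C^1(A)$.

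First I would observe that $E^U(\{\theta\})$ is a spectral projection of $U$, so it commutes with $U$, and on its range $U$ acts as multiplication by $\theta$. Writing $P:=E^U(\{\theta\})$, this gives the two identities $UP=\theta P$ and $PU^{-1}=\overline\theta\;\!P$ with $|\theta|=1$. For every $\tau\in\R\setminus\{0\}$, the operator $A_\tau$ is bounded and self-adjoint, so a direct computation yields
\begin{equation*}
P\;\!U^{-1}[A_\tau,U]\;\!P
=P\;\!U^{-1}A_\tau U\;\!P-P\;\!A_\tau P
=\overline\theta\;\!\theta\;\!P\;\!A_\tau P-P\;\!A_\tau P
=0.
\end{equation*}
This part is entirely algebraic and uses only boundedness of $A_\tau$ together with the commutation $UP=PU$.

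Next I would pass to the limit $\tau\to0$. By the definition of $C^1(A)$ recalled in Subsection~\ref{Sec_one_Hilbert}, the hypothesis $U\in C^1(A)$ is exactly what guarantees that $[A_\tau,U]$ converges strongly to $[A,U]$ as $\tau\to0$. Since $PU^{-1}$ is bounded and $P$ acts on the right on a fixed vector, the equality $P\;\!U^{-1}[A_\tau,U]\;\!P\;\!\varphi=0$ for every $\varphi\in\H$ survives the strong limit and gives
\begin{equation*}
P\;\!U^{-1}[A,U]\;\!P=0,
\end{equation*}
which is the first claim. The statement about eigenvectors follows immediately: if $U\varphi=\theta\varphi$, then $P\varphi=\varphi$, so $\langle\varphi,U^{-1}[A,U]\varphi\rangle_\H=\langle\varphi,P\;\!U^{-1}[A,U]\;\!P\varphi\rangle_\H=0$.

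I do not expect a serious obstacle here. The only points that need to be handled with a little care are the strong (not norm) convergence $[A_\tau,U]\to[A,U]$ as $\tau\to0$ when $U\in C^1(A)$, and the fact that $A_\tau$ is bounded and self-adjoint, so manipulating $PU^{-1}[A_\tau,U]P$ as in the display above requires no domain justification. Once these two mild facts are recorded, the proof reduces to the one-line identity $\overline\theta\theta=1$.
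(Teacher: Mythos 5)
Your proof is correct, and since the paper simply cites \cite[Prop.~2.3]{FRT13} without reproducing the argument, your approach (regularise by $A_\tau$, use $UP=\theta P$ and $PU^{-1}=\overline\theta\;\!P$ with $P:=E^U(\{\theta\})$ to kill $PU^{-1}[A_\tau,U]P$ exactly, then pass to the strong limit granted by $U\in C^1(A)$) is the standard one and almost certainly matches the cited reference. One small inaccuracy should be removed: $A_\tau=(i\tau)^{-1}\big(\e^{i\tau A}-1\big)$ is \emph{not} self-adjoint; a short computation gives $A_\tau^*=A_{-\tau}$. This is harmless, since your algebraic identity only uses the boundedness of $A_\tau$ and the spectral relations for $P$, never any symmetry of $A_\tau$; but you should delete the words ``and self-adjoint'' so as not to assert something false.
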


\begin{Lemma}\label{lemma_inequalities}
Let $U$ be a unitary operator in $\H$ and let $A$ be a self-adjoint operator in $\H$
with $U\in C^1(A)$. Assume there exist an open set $\Theta\subset\T$ and $a\in\R$ such
that $E^U(\Theta)U^{-1}[A,U]E^U(\Theta)\gtrsim a\;\!E^U(\Theta)$. Then, for each
$\theta\in\Theta$ and each $\eta>0$ there exist $\varepsilon>0$ and a finite rank
orthogonal projection $F$ with $E^U(\{\theta\})\ge F$ such that
$$
E^U(\theta;\varepsilon)U^{-1}[A,U]E^U(\theta;\varepsilon)
\ge(a-\eta)\big(E^U(\theta;\varepsilon)-F\big)-\eta F.
$$
In particular, if $\theta$ is not an eigenvalue of $U$, then
$$
E^U(\theta;\varepsilon)U^{-1}[A,U]E^U(\theta;\varepsilon)
\ge(a-\eta)E^U(\theta;\varepsilon),
$$
while if $\theta$ is an eigenvalue of $U$, one has only
$$
E^U(\theta;\varepsilon)U^{-1}[A,U]E^U(\theta;\varepsilon)
\ge\min\{a-\eta,-\eta\}E^U(\theta;\varepsilon).
$$
\end{Lemma}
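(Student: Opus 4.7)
The plan is to localize the global dominance $E^U(\Theta) U^{-1}[A,U] E^U(\Theta) + K \ge a\;\!E^U(\Theta)$ (for some compact $K$, which I may take self-adjoint and positive by passing from $K$ to the positive part of its self-adjoint part) to a shrinking neighborhood of $\theta$, and then to absorb the defect into a finite-rank projection inside $E^U(\{\theta\})$. Sandwiching the global inequality by $P := E^U(\theta;\varepsilon)$ (which satisfies $P\;\!E^U(\Theta) = P$ for $\varepsilon$ small since $\Theta$ is open and contains $\theta$) yields
$$
M_\varepsilon := P\;\!U^{-1}[A,U]\;\!P \ge a\;\!P - P K P,
$$
so the target inequality becomes equivalent to producing a finite-rank $F \le E^U(\{\theta\})$ with $P K P \le \eta\;\!P + a\;\!F$.

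For the choice of $F$, observe that $E^U(\{\theta\}) K E^U(\{\theta\})$ is compact, self-adjoint and positive, so its nonzero spectrum accumulates only at $0$. Fixing any $\delta \in (0,\eta/2)$, I take $F$ to be its spectral projection associated with eigenvalues $>\delta$; then $F$ has finite rank, $F \le E^U(\{\theta\})$, and $(E^U(\{\theta\}) - F) K (E^U(\{\theta\}) - F) \le \delta\;\!(E^U(\{\theta\}) - F)$. In the orthogonal block decomposition $P\H = F\H \oplus (P-F)\H$, the Virial theorem forces $F M_\varepsilon F = F U^{-1}[A,U] F = 0$, and with $X := F U^{-1}[A,U] (P-F)$ and $M_{22} := (P-F) U^{-1}[A,U] (P-F)$, the inequality to be proved reads
$$
\begin{pmatrix} \eta F & X \\ X^* & M_{22} - (a-\eta)(P-F) \end{pmatrix} \ge 0,
$$
which by the Schur complement is equivalent to $M_{22} - (a-\eta)(P-F) \ge \eta^{-1} X^* X$ on $(P-F)\H$.

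Two norm limits as $\varepsilon \to 0$ close the argument. First, compactness of $K$ together with the strong convergence $P - F \to E^U(\{\theta\}) - F$ yields $(P-F) K (P-F) \to (E^U(\{\theta\}) - F) K (E^U(\{\theta\}) - F)$ in norm, the limit being $\le \delta\;\!(P-F)$, hence $M_{22} \ge a(P-F) - (P-F) K (P-F) \ge (a-2\delta)(P-F)$ for small $\varepsilon$. Second, the Virial identity can be rewritten as $E^U(\{\theta\}) U^{-1}[A,U] = E^U(\{\theta\}) U^{-1}[A,U] (1 - E^U(\{\theta\}))$, which gives $X = F U^{-1}[A,U] G_\varepsilon$ with $G_\varepsilon := P - E^U(\{\theta\})$; since $[A,U]^* U F$ is finite-rank (hence compact) and $G_\varepsilon \to 0$ strongly, $X X^* = F U^{-1}[A,U] G_\varepsilon [A,U]^* U F \to 0$ in norm. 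Choosing $\varepsilon$ so small that $\|X\|^2 \le \eta(\eta - 2\delta)$ then validates the Schur condition.

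The two corollaries fall out directly: if $\theta \notin \sigma_{\rm p}(U)$ then $E^U(\{\theta\}) = 0$ forces $F = 0$ and one gets $M_\varepsilon \ge (a-\eta) P$; if $\theta$ is an eigenvalue, evaluating on $v = v_1 + v_2 \in F\H \oplus (P-F)\H$ gives $\langle v, M_\varepsilon v \rangle \ge -\eta\|v_1\|^2 + (a-\eta)\|v_2\|^2 \ge \min\{a-\eta,-\eta\}\|v\|^2$. The main subtlety I anticipate is the norm convergence $X X^* \to 0$: this is the precise mechanism by which the off-diagonal coupling between the eigen-subspace and the rest of $P\H$ disappears, and it hinges critically on $F$ being \emph{finite-rank}, so that $[A,U]^* U F$ is compact and hence gets killed in norm by the strongly-vanishing punctured projection $G_\varepsilon$.
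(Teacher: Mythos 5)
Your proof is correct. The paper itself offers no argument beyond a citation of the self-adjoint version (ABG Lemma~7.2.12, with the dictionary $[iH,A]\leftrightarrow U^{-1}[A,U]$, $E(\{\lambda\})\leftrightarrow E^U(\{\theta\})$, etc.), so you have in effect reconstructed that proof from scratch. The essential ingredients are identical: (1) take $F$ to be the finite-rank spectral truncation of $E^U(\{\theta\})KE^U(\{\theta\})$, guaranteeing $F\le E^U(\{\theta\})$; (2) use the Virial theorem to annihilate the $F$-block $FM_\varepsilon F$; (3) use compactness together with the strong convergences $E^U(\theta;\varepsilon)\to E^U(\{\theta\})$ and $G_\varepsilon=E^U(\theta;\varepsilon)-E^U(\{\theta\})\to 0$ to gain norm smallness of $(P-F)K(P-F)-(E^U(\{\theta\})-F)K(E^U(\{\theta\})-F)$ and of the off-diagonal coupling $X$. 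What is organizationally different is your packaging of the cross-term estimate as a Schur-complement criterion on the $2\times 2$ block decomposition of $P\H=F\H\oplus(P-F)\H$; the ABG proof handles the same off-diagonal term via a weighted Cauchy--Schwarz/absorption estimate. The Schur route is cleaner and equally rigorous. One small blemish in the write-up: after sandwiching by $P$ you state that the target is ``equivalent to producing $F$ with $PKP\le\eta P+aF$'' --- that reduction is merely sufficient, and in fact the rest of your argument does not establish it; you instead prove the target directly from the block decomposition, using the strictly stronger Virial input $FM_\varepsilon F=0$ (rather than the weaker $FM_\varepsilon F\ge aF-FKF$ that the $PKP$-reduction would rely on). This sentence should be reworded to ``it suffices to verify positivity of the block matrix below'', which is what you actually do. Everything else, including the order of quantifiers (pick $\delta<\eta/2$, then $F$ independent of $\varepsilon$, then shrink $\varepsilon$) and the two norm limits, is sound.
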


\begin{proof}
The proof uses Virial Theorem for $U$ and is analogous to the proof of
\cite[Lemma~7.2.12]{ABG96} in the self-adjoint case. One just needs to replace in the
proof of \cite[Lemma~7.2.12]{ABG96} $[iH,A]$ by $U^{-1}[A,U]$, $E(J)$ by
$E^U(\Theta)$, $E(\{\lambda\})$ by $E^U(\{\theta\})$, and $E(\lambda;1/k)$ by
$E^U(\theta;1/k)$.
\end{proof}

\begin{Lemma}\label{lemma_properties}
Let $U$ be a unitary operator in $\H$ and let $A$ be a self-adjoint operator in $\H$
with $U\in C^1(A)$.
\begin{enumerate}
\item[(a)] The function $\varrho^A_U:\T\to(-\infty,\infty]$ is lower semicontinuous,
and $\varrho^A_U(\theta)<\infty$ if and only if $\theta\in\sigma(U)$.
\item[(b)] The function $\widetilde\varrho^A_U:\T\to(-\infty,\infty]$ is lower
semicontinuous, and $\widetilde\varrho^A_U(\theta)<\infty$ if and only if
$\theta\in\sigma_{\rm ess}(U)$.
\item[(c)] $\widetilde\varrho^A_U\ge\varrho^A_U$.
\item[(d)] If $\theta\in\T$ is an eigenvalue of $U$ and
$\widetilde\varrho^A_U(\theta)>0$, then $\varrho^A_U(\theta)=0$. Otherwise,
$\varrho^A_U(\theta)=\widetilde\varrho^A_U(\theta)$.
\end{enumerate}
\end{Lemma}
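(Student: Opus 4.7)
The plan is to handle the four claims in order, mirroring the structure of the self-adjoint analogue \cite[Thm.~7.2.13]{ABG96} and using only the definitions together with the Virial theorem and Lemma \ref{lemma_inequalities}.

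For (a), lower semicontinuity reduces to a simple monotonicity argument for spectral projections on nested arcs. If $\varrho^A_U(\theta_0)>a$, there is $\varepsilon>0$ with $E^U(\theta_0;\varepsilon)U^{-1}[A,U]E^U(\theta_0;\varepsilon)\ge a\;\!E^U(\theta_0;\varepsilon)$; for every $\theta\in\T$ sufficiently close to $\theta_0$ one has $\Theta(\theta;\varepsilon/2)\subset\Theta(\theta_0;\varepsilon)$, so multiplying both sides of the inequality by $E^U(\theta;\varepsilon/2)$ yields $\varrho^A_U(\theta)\ge a$. The finiteness dichotomy is also immediate: if $\theta\notin\sigma(U)$, there exists $\varepsilon>0$ with $E^U(\theta;\varepsilon)=0$ and every $a\in\R$ is admissible, so $\varrho^A_U(\theta)=\infty$; conversely, if $\theta\in\sigma(U)$ then $E^U(\theta;\varepsilon)\ne0$ for all $\varepsilon>0$, and $\|U^{-1}[A,U]\|_{\B(\H)}$ gives a finite upper bound on $\varrho^A_U(\theta)$.

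For (b), lower semicontinuity is proved verbatim as in (a) since the relation $\gtrsim$ is preserved by the same two-sided multiplication. The finiteness criterion uses that $\theta\notin\sigma_{\rm ess}(U)$ is equivalent to $E^U(\theta;\varepsilon)$ having finite rank for some $\varepsilon>0$; in that case $E^U(\theta;\varepsilon)U^{-1}[A,U]E^U(\theta;\varepsilon)$ lies in $\K(\H)$, can be absorbed in the compact remainder, and every $a\in\R$ is admissible, so $\widetilde\varrho^A_U(\theta)=\infty$. Conversely, if $\theta\in\sigma_{\rm ess}(U)$, Weyl sequences prevent any $a$ larger than $\|U^{-1}[A,U]\|_{\B(\H)}$ from being admissible. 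Claim (c) is then immediate: any inequality $\ge a\;\!E^U(\theta;\varepsilon)$ is a fortiori an inequality up to a compact term, so every $a$ admissible for $\varrho^A_U$ is admissible for $\widetilde\varrho^A_U$.

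The main work lies in (d), which I expect to be the only delicate step. First, the Virial theorem gives $E^U(\{\theta\})U^{-1}[A,U]E^U(\{\theta\})=0$, and testing the defining inequality for $\varrho^A_U(\theta)$ against any eigenvector of $U$ at $\theta$ forces $\varrho^A_U(\theta)\le 0$. For the matching lower bound, I apply Lemma \ref{lemma_inequalities}: for every $a<\widetilde\varrho^A_U(\theta)$ and every $\eta>0$ there exist $\varepsilon>0$ and a finite-rank projection $F$ with $E^U(\{\theta\})\ge F$ such that
$$
E^U(\theta;\varepsilon)U^{-1}[A,U]E^U(\theta;\varepsilon)
\ge(a-\eta)\big(E^U(\theta;\varepsilon)-F\big)-\eta F.
$$
Using $F\le E^U(\theta;\varepsilon)$, one converts the right-hand side into $\min\{a-\eta,-\eta\}\;\!E^U(\theta;\varepsilon)$. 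When $\widetilde\varrho^A_U(\theta)>0$ one picks $0<a<\widetilde\varrho^A_U(\theta)$ and $\eta<a$, obtaining $\varrho^A_U(\theta)\ge-\eta$ and hence $\varrho^A_U(\theta)\ge0$, which combined with the Virial bound gives $\varrho^A_U(\theta)=0$. Otherwise $a-\eta\le0$ for every choice, and the same estimate gives $\varrho^A_U(\theta)\ge a-2\eta$, from which $\varrho^A_U(\theta)\ge\widetilde\varrho^A_U(\theta)$ follows by letting $\eta\downarrow0$ and $a\uparrow\widetilde\varrho^A_U(\theta)$; combined with (c) this yields equality. The only real obstacle is bookkeeping the sign of $a-\eta$ to decide which of the two terms on the right of the Lemma \ref{lemma_inequalities} estimate is dominant, but this is mechanical.
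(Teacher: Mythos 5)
Your proof follows the same blueprint as the paper's — adapting the self-adjoint arguments of \cite[Sec.~7.2]{ABG96} via Lemma \ref{lemma_inequalities} and the Virial theorem — and parts (a), (b), (c) are essentially correct. (For (b), you argue the finiteness direction via Weyl sequences where the paper proves the contrapositive with a norm bound and ``heredity of compactness''; both routes work.)

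There is, however, a genuine gap in (d). The statement's ``otherwise'' covers two situations: (i) $\theta$ is an eigenvalue with $\widetilde\varrho^A_U(\theta)\le0$, and (ii) $\theta$ is \emph{not} an eigenvalue of $U$. You open the (d) argument by testing against ``any eigenvector of $U$ at $\theta$'', which tacitly restricts the entire discussion to $\theta\in\sigma_{\rm p}(U)$; your ``otherwise'' then means only case (i), and case (ii) is never addressed. Your estimate $\min\{a-\eta,-\eta\}\;\!E^U(\theta;\varepsilon)$ is also not what you want in the non-eigenvalue case — there Lemma \ref{lemma_inequalities} provides the sharper bound $E^U(\theta;\varepsilon)U^{-1}[A,U]E^U(\theta;\varepsilon)\ge(a-\eta)E^U(\theta;\varepsilon)$ (with no $F$ and no $\min$), since one may take $F=0$. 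From this one gets $\varrho^A_U(\theta)\ge a-\eta$ for all $a<\widetilde\varrho^A_U(\theta)$ and $\eta>0$, hence $\varrho^A_U(\theta)\ge\widetilde\varrho^A_U(\theta)$, and equality follows from (c). You should insert this case explicitly before turning to eigenvalues; without it the proof of (d) is incomplete. (The ``$a-2\eta$'' in your otherwise-branch is also off — the bound reads $a-\eta$ — but this is harmless since you take $\eta\downarrow0$.)
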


\begin{proof}
The proof is an adaptation of the proofs of Lemma 7.2.1, Proposition 7.2.3(a),
Proposition 7.2.6 and Theorem 7.2.13 of \cite{ABG96} to the case of unitary operators.

(a) The fact that $\varrho^A_U(\theta)<\infty$ if and only if $\theta\in\sigma(U)$
follows from the definition of $\varrho^A_U$ and the closedness of $\sigma(U)$. Let
$\theta_0\in\T$ and let $r\in\R$ be such that $\varrho^A_U(\theta_0)>r$. To show the
lower semicontinuity of $\varrho^A_U$ we must show that there is a neighbourhood of
$\theta_0$ on which $\varrho^A_U>r$. Since $\varrho^A_U(\theta_0)>r$, there exist
$a>r$ and $\varepsilon_0>0$ such that
$$
E^U(\theta_0;\varepsilon_0)U^{-1}[A,U]E^U(\theta_0;\varepsilon_0)
\ge a\;\!E^U(\theta_0;\varepsilon_0).
$$
Let $\varepsilon:=\varepsilon_0/2$ and $\theta\in\Theta(\theta_0;\varepsilon_0)$. By
multiplying on the left and on the right the preceding inequality by
$E^U(\theta;\varepsilon)$ and by using the fact that
$
E^U(\theta;\varepsilon)E^U(\theta_0;\varepsilon_0)=E^U(\theta;\varepsilon)
$,
one obtains
$$
E^U(\theta;\varepsilon)U^{-1}[A,U]E^U(\theta;\varepsilon)
\ge a\;\!E^U(\theta;\varepsilon).
$$
This implies that $\varrho^A_U(\theta)\ge a>r$ for all
$\theta\in\Theta(\theta_0;\varepsilon_0)$.

(b)-(c) The lower semicontinuity of $\widetilde\varrho^A_U$ is obtained similarly to
that of $\varrho^A_U$ in point (a), and the inequality
$\widetilde\varrho^A_U\ge\varrho^A_U$ is immediate from the definitions. For the last
claim, we use the fact that $\theta\notin\sigma_{\rm ess}(U)$ if and only if
$E^U(\theta;\varepsilon)\in\K(\H)$ for some $\varepsilon>0$. So
$\theta\notin\sigma_{\rm ess}(U)$ implies that $\widetilde\varrho^A_U(\theta)=\infty$.
Conversely, if $\widetilde\varrho^A_U(\theta)=\infty$, let
$m:=\|E^U(\theta;1)U^{-1}[A,U]E^U(\theta;1)\|_{\B(\H)}$ and $a>m$. Then, there is
$\varepsilon\in(0,1)$ such that
$$
E^U(\theta;\varepsilon)U^{-1}[A,U]E^U(\theta;\varepsilon)
\gtrsim a\;\!E^U(\theta;\varepsilon).
$$
On another hand, the inequality $m\ge E^U(\theta;1)U^{-1}[A,U]E^U(\theta;1)$ and the
fact that $E^U(\theta;\varepsilon)E^U(\theta;1)=E^U(\theta;\varepsilon)$ imply that
$$
m\;\!E^U(\theta;\varepsilon)\ge
E^U(\theta;\varepsilon)U^{-1}[A,U]E^U(\theta;\varepsilon).
$$
Thus $m\;\!E^U(\theta;\varepsilon)\gtrsim a\;\!E^U(\theta;\varepsilon)$, and there
exists $K\in\K(\H)$ such that $K\ge(a-m)\;\!E^U(\theta;\varepsilon)$. This implies by
heredity of compactness that $E^U(\theta;\varepsilon)\in\K(\H)$.

(d) If $\theta$ is not an eigenvalue of $U$, then Lemma \ref{lemma_inequalities}
implies that $\widetilde\varrho^A_U(\theta)\le\varrho^A_U(\theta)$, and so these two
numbers must be equal by point (c). Now assume that $\theta$ is an eigenvalue of $U$.
If $\widetilde\varrho^A_U(\theta)\le0$, then $a\le0$ in Lemma
\ref{lemma_inequalities}, hence $\min\{a-\eta,-\eta\}=a-\eta$ and we have the same
result as before. If $\widetilde\varrho^A_U(\theta)>0$, we may take $a>0$ in Lemma
\ref{lemma_inequalities}, which leads to the inequality $\varrho^A_U(\theta)\ge0$; the
opposite inequality $\varrho^A_U(\theta)\le0$ follows by using Virial theorem for $U:$
if $a<\varrho^A_U(\theta)$, there is $\varepsilon>0$ such that
$E^U(\theta;\varepsilon)U^{-1}[A,U]E^U(\theta;\varepsilon)
\ge a\;\!E^U(\theta;\varepsilon)$; hence
$0=E^U(\{\theta\})U^{-1}[A,U]E^U(\{\theta\})\ge a\;\!E^U(\{\theta\})$. Since
$E^U(\{\theta\})\ne0$, we must have $a\le0$.
\end{proof}

By analogy with the self-adjoint case, we say that $A$ is conjugate to $U$ at the
point $\theta\in\T$ if $\widetilde\varrho^A_U(\theta)>0$, and that $A$ is strictly
conjugate to $U$ at $\theta$ if $\varrho^A_U(\theta)>0$. Since
$\widetilde\varrho^A_U(\theta)\ge\varrho^A_U(\theta)$ for each $\theta\in\T$ by Lemma
\ref{lemma_properties}(c), strict conjugation is a property stronger than conjugation.

\begin{Theorem}[$U$-smooth operators]\label{thm_U_smooth}
Let $U$ be a unitary operator in $\H$, let $A$ be a self-adjoint operator in $\H$, and
let $\G$ be an auxiliary Hilbert space. Assume either that $U$ has a spectral gap and
$U\in C^{1,1}(A)$, or that $U\in C^{1+0}(A)$. Suppose also there exist an open set
$\Theta\subset\T$, a number $a>0$ and an operator $K\in\K(\H)$ such that
$$
E^U(\Theta)\;\!U^{-1}[A,U]\;\!E^U(\Theta)\ge aE^U(\Theta)+K.
$$
Then, each operator $T\in\B(\H,\G)$ which extends continuously to an element of
$\B\big(\dom(\langle A\rangle^s)^*,\G\big)$ for some $s>1/2$ is locally $U$-smooth on
$\Theta\setminus\sigma_{\rm p}(U)$.
\end{Theorem}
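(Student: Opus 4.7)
By the preceding lemma (specifically the characterization of $U$-smoothness established in its proof via \cite[Thm.~2.2]{ABCF06}), $T \in \B(\H,\G)$ is locally $U$-smooth on an open set $\Theta_0 \subset \T$ if and only if, for each closed $\Theta' \subset \Theta_0$,
\begin{equation*}
\sup_{z \in \D,\, \psi \in \G,\, \|\psi\|_\G = 1}
\bigl|\bigl\langle \psi,\, T\,\delta(U,z)\,E^U(\Theta')\,T^*\psi\bigr\rangle_\G\bigr| < \infty,
\end{equation*}
where $\delta(U,z) := (1-zU^{-1})^{-1} - (1-\bar z^{-1}U^{-1})^{-1}$ plays the role of the ``imaginary part of the resolvent'' on the unit disk $\D$. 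The plan is to establish this uniform bound, with $\Theta_0 := \Theta \setminus \sigma_{\rm p}(U)$, by reproducing the classical Mourre limiting-absorption argument of \cite[Sec.~7.3--7.4]{ABG96} with the unit disk replacing the upper half-plane.

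The first step is to upgrade the given Mourre estimate to a strict one on any such $\Theta'$. Since $\Theta'$ is compact and disjoint from $\sigma_{\rm p}(U)$, a covering argument based on Lemma \ref{lemma_inequalities} together with the Virial theorem for $U$ yields an open $\Theta'' \supset \Theta'$ and some $a' > 0$ with
\begin{equation*}
E^U(\Theta'')\,U^{-1}[A,U]\,E^U(\Theta'') \geq a'\,E^U(\Theta''),
\end{equation*}
the compact remainder $K$ and the finite-rank corrections from Lemma \ref{lemma_inequalities} being absorbed precisely because no eigenvalue of $U$ lies in $\Theta'$. This strict positivity allows us to introduce a regularized ``disk resolvent''
$G^\varepsilon_z := \bigl(U^{-1} - z - i\varepsilon\,E^U(\Theta'')\,U^{-1}[A,U]\,E^U(\Theta'')\bigr)^{-1}$
on $\Ran E^U(\Theta'')$, for $z \in \D$ and $\varepsilon \in (0,1]$, whose norm is controlled by $(\varepsilon a')^{-1}$.

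Setting $\Phi^\varepsilon(z) := T\,G^\varepsilon_z\,T^*$, I would differentiate in $\varepsilon$ and use the regularity hypothesis to control the iterated commutator $[A, U^{-1}[A,U]]$, closing a Mourre-type differential inequality
\begin{equation*}
\bigl\|\Phi^\varepsilon(z)\bigr\|_{\B(\G)} \leq c_1 + c_2 \int_\varepsilon^1 \sigma^{-1}\,\bigl\|\Phi^\sigma(z)\bigr\|_{\B(\G)}^{1/2}\,d\sigma,\quad z \in \D,\ \varepsilon \in (0,1].
\end{equation*}
The boundedness of $T$ as an element of $\B\bigl(\dom(\langle A\rangle^s)^*,\G\bigr)$ for some $s>1/2$ furnishes the constants $c_1,c_2$; its standard integration gives a bound on $\|\Phi^\varepsilon(z)\|_{\B(\G)}$ uniform in $(\varepsilon, z) \in (0,1] \times \D$. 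Sending $\varepsilon \to 0$ and identifying $G^\varepsilon_z$ in the limit with $\delta(U,z)\,E^U(\Theta')$ up to bounded factors yields the desired uniform bound, hence the local $U$-smoothness on $\Theta'$, and the arbitrariness of $\Theta' \subset \Theta \setminus \sigma_{\rm p}(U)$ finishes the proof.

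The main obstacle is faithfully transposing the iteration from the half-plane resolvent $(H - z)^{-1}$ to the disk resolvent $(U^{-1} - z)^{-1}$, with the distance $1 - |z|$ to $\T$ playing the role of $\im z$. The two branches of the hypothesis match the two classical routes: in ``$U$ has a spectral gap and $U \in C^{1,1}(A)$'' one can insert a smooth spectral cut-off of $U$ in the gap without loss of regularity, effectively reducing to a bounded-operator Mourre argument; in ``$U \in C^{1+0}(A)$'' the stronger regularity class is itself closed under the smoothing operations demanded by the iteration, bypassing the need for a gap. Verifying that the iterated-commutator estimates of \cite{ABG96} transpose cleanly to the unit disk is the bulk of the technical work.
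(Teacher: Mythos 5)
Your outline captures the right high-level strategy, and it is essentially the route taken in \cite{FRT13}, to whose Proposition~2.9 the paper's proof defers in a single sentence together with the Cayley-transform equivalences of Section~\ref{Sec_smooth}. The upgrading of the Mourre estimate to a strict one on a compact $\Theta'$ disjoint from $\sigma_{\rm p}(U)$ via Lemma~\ref{lemma_inequalities} is correct, and so is the reduction to bounding $T\,\delta(U,z)E^U(\Theta')T^*$ uniformly over $z\in\D$. However, your specific regularization contains a genuine flaw. Setting $M:=E^U(\Theta'')U^{-1}[A,U]E^U(\Theta'')\ge a'E^U(\Theta'')$, the claimed bound $\|G^\varepsilon_z\|\le(\varepsilon a')^{-1}$ for $G^\varepsilon_z:=(U^{-1}-z-i\varepsilon M)^{-1}$ does \emph{not} follow, because
$$
\im\big\langle\varphi,(U^{-1}-z-i\varepsilon M)\varphi\big\rangle_\H
=\im\big\langle\varphi,U^{-1}\varphi\big\rangle_\H-(\im z)\|\varphi\|_\H^2
-\varepsilon\big\langle\varphi,M\varphi\big\rangle_\H,
$$
and the first term has no definite sign. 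In the self-adjoint Mourre scheme the regularization is applied to $(H-\omega)^{-1}$ with $\omega\in\HH$, where $\im\langle\varphi,(H-\omega-i\varepsilon M)\varphi\rangle_\H=-(\im\omega)\|\varphi\|_\H^2-\varepsilon\langle\varphi,M\varphi\rangle_\H$ is strictly negative because $H$ is self-adjoint; that strict sign is what produces the a priori norm bound on the regularized resolvent. Here $U^{-1}$ is not self-adjoint minus a real scalar, so $\im\langle\varphi,U^{-1}\varphi\rangle_\H$ is generically of order one and can overwhelm the $\varepsilon a'$ gain. Your $G^\varepsilon_z$ is therefore not invertible with a uniformly controlled norm, and the iteration scheme built on it would not close.

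The way around this, and essentially the content hidden behind the paper's citation, is to avoid rebuilding the quadratic estimate on the disk and instead transfer to the self-adjoint side by the Cayley transform. The proof of the lemma in Section~\ref{Sec_smooth} supplies the identity $\delta(U,z)=(H^2+1)\,\im\big((H-i\frac{1+z}{1-z})^{-1}\big)$ and the equivalence of local $U$-smoothness of $T$ with local $H$-smoothness of $T(H+i)$ on the corresponding set. Under the Cayley map $z\mapsto\omega=i\frac{1+z}{1-z}$ one has $\im\omega=\frac{1-|z|^2}{|1-z|^2}>0$, so the classical regularized resolvent $(H-\omega-i\varepsilon M)^{-1}$ does have the right sign structure; after transporting the $C^{1,1}(A)$ or $C^{1+0}(A)$ regularity and the strict Mourre estimate from $(U,A)$ to $(L,A)$ with $L=2\arctan(H)+\pi$ (the bookkeeping carried out in \cite{FRT13}), one invokes the standard self-adjoint limiting absorption machinery of \cite[Ch.~7.3--7.4]{ABG96}. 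Incorporating the auxiliary space $\G$ is then immediate from the Section~\ref{Sec_smooth} lemma. If you prefer to argue purely on the disk, the regularization must be of multiplicative type (for instance $(1-z\e^{-\varepsilon}U^{-1})^{-1}$) and the positivity argument redone in terms of $1-|z|$ rather than $\im z$; your draft conflates these two quantities and that is precisely where it breaks.
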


\begin{proof}
The claim follows by adapting the proof of \cite[Prop.~2.9]{FRT13} to locally
$U$-smooth operators $T$ with values in the auxiliary Hilbert space $\G$, taking into
account the results of Section \ref{Sec_smooth}.
\end{proof}

The last theorem of this section corresponds to \cite[Thm.~2.7]{FRT13}:

\begin{Theorem}[Spectrum of $U$]\label{thm_spec_U}
Let $U$ be a unitary operator in $\H$ and let $A$ be a self-adjoint operator in $\H$.
Assume either that $U$ has a spectral gap and $U\in C^{1,1}(A)$, or that
$U\in C^{1+0}(A)$. Suppose also there exist an open set $\Theta\subset\T$, a number
$a>0$ and an operator $K\in\K(\H)$ such that
$$
E^U(\Theta)\;\!U^{-1}[A,U]\;\!E^U(\Theta)\ge aE^U(\Theta)+K.
$$
Then, $U$ has at most finitely many eigenvalues in $\Theta$, each one of finite
multiplicity, and $U$ has no singular continuous spectrum in $\Theta$.
\end{Theorem}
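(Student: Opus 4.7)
The plan is to reduce the statement, via the Cayley transform of Section~\ref{Sec_Cayley}, to the classical Mourre theorem for self-adjoint operators. First I would arrange that $1\notin\sigma_{\rm p}(U)$, and in the first alternative also $1\notin\sigma(U)$. Since $\sigma_{\rm p}(U)$ is countable, and in the gapped case $\sigma(U)\subsetneq\T$, a suitable rotation $U\mapsto\omega^{-1}U$ with $\omega\in\T$ achieves this without changing any of the hypotheses: $(\omega^{-1}U)^{-1}[A,\omega^{-1}U]=U^{-1}[A,U]$, $E^{\omega^{-1}U}(\omega^{-1}\Theta)=E^U(\Theta)$, and membership in $C^{1,1}(A)$ or $C^{1+0}(A)$ is invariant under multiplication by a scalar in $\T$. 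Working from now on with the rotated operator, the Cayley transforms $H=i(1+U)(1-U)^{-1}$ and $L=2\arctan(H)+\pi$ are well defined, with $U=\e^{iL}$ and $\widetilde E^U=E^L$.

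The second step is to transfer the hypotheses from $(U,A)$ to $(L,A)$. A Duhamel expansion of $\e^{iL}$ gives
\begin{equation*}
U^{-1}[A,U]=i\int_0^1\e^{-itL}[A,L]\;\!\e^{itL}\,\d t,
\end{equation*}
and combining this identity with the spectral correspondence $E^U(\Theta)=E^L(\Lambda)$, where $\Lambda$ is the image of $\Theta$ under the map of \eqref{eq_measures}, would yield a standard self-adjoint Mourre estimate
\begin{equation*}
E^L(\Lambda)\;\![iL,A]\;\!E^L(\Lambda)\ge a'\;\!E^L(\Lambda)+K'
\end{equation*}
for some $a'>0$ and $K'\in\K(\H)$. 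In parallel, I would verify that the regularity of $U$ with respect to $A$ descends to $L$: in the gapped case, $H$ is bounded and $L$ is a smooth bounded function of $H$, so $U\in C^{1,1}(A)$ forces $L\in C^{1,1}(A)$ by a direct functional-calculus argument; in the general case, $U\in C^{1+0}(A)$ should force $L\in C^{1,1}(A)$ through a Helffer--Sj\"ostrand representation of $\arctan$ together with the integrability estimates built into the $C^{1+0}(A)$ class.

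With the Mourre estimate and the required regularity of $L$ at hand, the classical self-adjoint Mourre theorem in \cite{ABG96} yields that $L$ has at most finitely many eigenvalues in $\Lambda$, each of finite multiplicity, and no singular continuous spectrum in $\Lambda$. Since $U=\e^{iL}$, eigenvectors of $L$ with eigenvalue $\lambda\in\Lambda$ are precisely eigenvectors of $U$ with eigenvalue $\e^{i\lambda}\in\Theta$, and $\widetilde E^U=E^L$ gives $E^U(\Theta)\H_{\rm sc}(U)=E^L(\Lambda)\H_{\rm sc}(L)=\{0\}$, which is the desired conclusion for $U$.

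The main technical obstacle is the regularity transfer in the ungapped case, where $L$ is built through the Cayley arctan of the potentially unbounded $H$; controlling $L$ in $C^{1,1}(A)$ from $U\in C^{1+0}(A)$ requires a careful analysis in the Besov-type scales attached to $A$ in \cite{ABG96}. An alternative route, in the spirit of \cite{FRT13}, is to bypass the Cayley transform entirely and adapt the classical Mourre differential-inequality argument directly to the discrete flow $\{U^n\}_{n\in\Z}$, replacing $\e^{-itH}$ by $U^n$ and integrals over $t\in\R$ by sums over $n\in\Z$ throughout; this avoids the arctan altogether at the cost of redoing the limiting absorption principle in the unitary framework.
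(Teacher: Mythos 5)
The paper itself offers no proof: the theorem is cited verbatim from \cite[Thm.~2.7]{FRT13}, where it is established by working entirely in the unitary framework, re-deriving the virial theorem, the differential inequality and the limiting absorption principle for the discrete flow $\{U^n\}_{n\in\Z}$ with $U^{-1}[A,U]$ playing the role of $[iH,A]$. Your ``alternative route'' in the final paragraph is precisely that strategy and is the one the paper relies on.

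Your primary route via the Cayley transform is a genuinely different attack, and in the gapped alternative it should go through: after rotating $U$ so that $1\notin\sigma(U)$, the operator $L$ is a bounded smooth function of $U$ on a neighbourhood of $\sigma(U)$, so $U\in C^{1,1}(A)$ gives $L\in C^{1,1}(A)$ by functional calculus, and a small-spectral-window localization argument (rather than the Duhamel identity alone, which runs in the wrong direction, expressing $U^{-1}[A,U]$ as a flow-average of $i[A,L]$) then yields the Mourre estimate for $L$. In the ungapped alternative, however, there is a genuine gap that no rotation can cure: after the rotation, $1$ still lies in $\sigma(U)$, and $L=-i\log U$ is a \emph{discontinuous} function of $U$ on its spectrum; no Helffer--Sj\"ostrand or smooth-functional-calculus argument applied to $U$ can carry $C^{1+0}(A)$ regularity of $U$ across a jump discontinuity, so the phrase ``should force $L\in C^{1,1}(A)$'' is a hope rather than a step. (Working with $H=i(1+U)(1-U)^{-1}$ through its resolvent $(H-z)^{-1}$, which for $z\notin\R$ is a smooth M\"obius function of $U$ on all of $\T$, would repair the regularity transfer, but then one must confront the fact that the unbounded $H$ has no spectral gap in this case either, so the ABG Mourre theorem still requires a careful choice of regularity class.) This obstacle is exactly what the direct approach of \cite{FRT13} is designed to bypass, which is why the paper states the theorem in these two alternative forms and proves it without ever leaving the unitary setting.
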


\subsection{Commutator methods in a two-Hilbert spaces setting}\label{Sec_two_Hilbert}

From now on, in addition to the triple $(\H,U,A)$, we consider a second triple
$(\H_0,U_0,A_0)$ with $\H_0$ a Hilbert space, $U_0$ a unitary operator in $\H_0$, and
$A_0$ a self-adjoint operator in $\H_0$. We also consider an identification operator
$J\in\B(\H_0,\H)$. The existence of two such triples with an identification operator
is quite standard in scattering theory of unitary operators, at least for the pairs
$(\H,U)$ and $(\H_0,U_0)$ (see for instance the books \cite{BW83,Yaf92}). Part of our
goal in this section is to show that the existence of the conjugate operators $A$ and
$A_0$ is also natural, in the same way it is in the self-adjoint case \cite{RT13_2}.

In the one-Hilbert space setting, the unitary operator $U$ is usually a multiplicative
perturbation of the unitary operator $U_0$. In this case, if $U-U_0$ is compact, the
stability of the function $\widetilde\varrho_{U_0}^{A_0}$ under compact perturbations
allows one to infer information on $U$ from similar information on $U_0$ (see
\cite[Cor.~2.10]{FRT13}). In the two-Hilbert spaces setting, we are not aware of any
general result relating the functions $\widetilde\varrho_U^A$ and
$\widetilde\varrho_{U_0}^{A_0}$. The obvious reason for this being the impossibility
to consider $U$ as a direct perturbation of $U_0$ since these operators do not act in
the same Hilbert space. Nonetheless, the next theorem provides a result in that
direction. For two arbitrary Hilbert spaces $\H_1, \H_2$ and two operators
$S,T\in\B(\H_1,\H_2)$, we use the notation $T\approx S$ if $(T-S)\in\K(\H_1,\H_2)$.

\begin{Theorem}\label{thm_fonctionrho}
Let $(\H_0,U_0,A_0)$ and $(\H,U,A)$ be as above, let $J\in\B(\H_0,\H)$, and assume
that
\begin{enumerate}
\item[(i)] $U_0\in C^1(A_0)$ and $U\in C^1(A)$,
\item[(ii)] $JU_0^{-1}[A_0,U_0]J^*-U^{-1}[A,U]\in\K(\H)$,
\item[(iii)] $JU_0-UJ\in\K(\H_0,\H)$,
\item[(iv)] For each $\eta\in C(\C,\R)$, $\eta(U)(JJ^*-1)\eta(U)\in\K(\H)$.
\end{enumerate}
Then, one has $\widetilde\varrho_U^A\ge \widetilde\varrho_{U_0}^{A_0}$.
\end{Theorem}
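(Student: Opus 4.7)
The plan is to exploit the equivalent characterization \eqref{equivalent_def} of $\widetilde\varrho^{A_0}_{U_0}$. Fix $\theta\in\T$, let $a<\widetilde\varrho^{A_0}_{U_0}(\theta)$, and pick $\eta\in C^\infty(\T,\R)$ with $\eta(\theta)\ne0$ and $K_0\in\K(\H_0)$ such that
$$
\eta(U_0)\;\!U_0^{-1}[A_0,U_0]\;\!\eta(U_0)+K_0\ge a\;\!\eta(U_0)^2.
$$
Conjugating by $J$ preserves the inequality and replaces $K_0$ by $JK_0J^*\in\K(\H)$. The goal is then to show that, modulo compacts, the two sandwiched operators coincide with $\eta(U)U^{-1}[A,U]\eta(U)$ and $\eta(U)^2$; applying \eqref{equivalent_def} to $(\H,U,A)$ with the same $\eta$ will then deliver $\widetilde\varrho^A_U(\theta)\ge a$, and letting $a\nearrow\widetilde\varrho^{A_0}_{U_0}(\theta)$ finishes the proof.

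The heart of the argument is the compactness lemma $J\;\!\eta(U_0)\approx\eta(U)\;\!J$. Starting from assumption (iii), an induction based on
$$
JU_0^{n+1}-U^{n+1}J=U(JU_0^n-U^nJ)+(JU_0-UJ)U_0^n
$$
yields $JU_0^n-U^nJ\in\K(\H_0,\H)$ for all $n\ge0$, and the identity $JU_0^{-1}-U^{-1}J=-U^{-1}(JU_0-UJ)U_0^{-1}$ extends this to all $n\in\Z$. By linearity, $J\;\!p(U_0)-p(U)\;\!J\in\K(\H_0,\H)$ for every Laurent polynomial $p$; approximating $\eta$ uniformly on $\T$ by such polynomials (Stone--Weierstrass) and invoking norm-closedness of $\K(\H_0,\H)$ gives $J\;\!\eta(U_0)\approx\eta(U)\;\!J$, and taking adjoints gives $\eta(U_0)\;\!J^*\approx J^*\eta(U)$.

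With these near-identities and the boundedness of $U_0^{-1}[A_0,U_0]$ (which follows from $U_0\in C^1(A_0)$), we compute
$$
J\;\!\eta(U_0)\;\!U_0^{-1}[A_0,U_0]\;\!\eta(U_0)\;\!J^*
\approx\eta(U)\;\!J\;\!U_0^{-1}[A_0,U_0]\;\!J^*\eta(U),
$$
and assumption (ii) then replaces the inner factor by $U^{-1}[A,U]$, producing $\eta(U)U^{-1}[A,U]\eta(U)$ up to a compact remainder. Similarly,
$$
J\;\!\eta(U_0)^2\;\!J^*\approx\eta(U)\;\!JJ^*\;\!\eta(U)\approx\eta(U)^2
$$
by assumption (iv). Substituting these into the conjugated Mourre estimate produces an operator $K\in\K(\H)$ with $\eta(U)\;\!U^{-1}[A,U]\;\!\eta(U)+K\ge a\;\!\eta(U)^2$, which is exactly the condition in \eqref{equivalent_def} certifying $\widetilde\varrho^A_U(\theta)\ge a$. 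The main obstacle is the compactness lemma $J\eta(U_0)\approx\eta(U)J$; once it is in hand, everything else is bookkeeping with the ``$\approx$'' relation.
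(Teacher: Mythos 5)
Your proof is correct and follows essentially the same route as the paper's. You use the equivalent characterization \eqref{equivalent_def}, establish the compactness relation $J\eta(U_0)\approx\eta(U)J$ via the same induction plus Stone--Weierstrass argument that the paper invokes (as the equivalence between (iii) and (iii')), conjugate the Mourre estimate for $U_0$ by $J$, and then transport each side across to $\H$ modulo compacts using (ii), (iii) and (iv) -- exactly the chain of $\approx$-manipulations in the paper's equations \eqref{eq_Eone} and \eqref{eq_Etwo}.
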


An induction argument together with a Stone-Weierstrass density argument shows that
(iii) is equivalent to the apparently stronger condition
\begin{enumerate}
\item[(iii')] For each $\eta\in C(\C,\R)$, $J\eta(U_0)-\eta(U)J\in\K(\H_0,\H)$.
\end{enumerate}
Therefore, in the sequel, we will sometimes use the condition (iii') instead of (iii).

\begin{proof}
For each $\eta\in C(\C,\R)$, we have
\begin{equation}\label{eq_Eone}
\eta(U)U^{-1}[A,U]\eta(U)
\approx\eta(U)JU_0^{-1}[A_0,U_0]J^*\eta(U)\\
\approx J\eta(U_0)U_0^{-1}[A_0,U_0]\eta(U_0)J^*
\end{equation}
due to Assumption (i)-(iii). Furthermore, if there exists $a\in\R$ such that
$$
\eta(U_0)U_0^{-1}[A_0,U_0]\eta(U_0)\gtrsim a\;\!\eta(U_0)^2,
$$
then Assumptions (iii)-(iv) imply that
\begin{equation}\label{eq_Etwo}
J\eta(U_0)U_0^{-1}[A_0,U_0]\eta(U_0)J^*
\gtrsim a\;\!J\eta(U_0)^2J^*
\approx a\;\!\eta(U)JJ^*\eta(U)
\approx a\;\!\eta(U)^2.
\end{equation}
Thus, we obtain $\eta(U)U^{-1}[A,U]\eta(U)\gtrsim a\;\!\eta(U)^2$ by combining
\eqref{eq_Eone} and \eqref{eq_Etwo}. This last estimate, together with the definition
\eqref{equivalent_def} of the functions $\widetilde\varrho_{U_0}^{A_0}$ and
$\widetilde\varrho_U^A$, implies the claim.
\end{proof}

The regularity of $U_0$ with respect to $A_0$ is usually easy to check, while the
regularity of $U$ with respect to $A$ is in general difficult to establish. For that
purpose, various perturbative criteria have been developed for self-adjoint operators
in one Hilbert space, and often a distinction is made between so-called short-range
and long-range perturbations. Roughly speaking, the two terms of the formal commutator
$[A,U]=AU-UA$ are treated separately in the short-range case, while the commutator
$[A,U]$ is really computed in the long-range case. In the sequel, we discuss the case
of short-range type perturbations for unitary operators in a two-Hilbert spaces
setting. The results we obtain are analogous to the ones obtained in
\cite[Sec.~3.1]{RT13_2} for self-adjoint operators in a two-Hilbert spaces setting.

We start by showing how the condition $U\in C^1(A)$ and the assumptions (ii)-(iii) of
Theorem \ref{thm_fonctionrho} can be verified for a class of short-range type
perturbations. Our approach is to infer the desired information on $U$ from equivalent
information on $U_0$, which are usually easier to obtain. Accordingly, our results
exhibit some perturbative flavor. The price one has to pay is to impose some
compatibility conditions between $A_0$ and $A$. For brevity, we set
$$
B:=J U_0-U J\in\B(\H_0,\H)\quad\hbox{and}\quad B_*:=J U_0^*-U^*J\in\B(\H_0,\H).
$$

\begin{Proposition}\label{prop_C1_short}
Let $U_0\in C^1(A_0)$, assume that $\DD\subset\H$ is a core for $A$ such that
$J^*\DD\subset\dom(A_0)$, and suppose that
\begin{equation}\label{hyp1}
\overline{BA_0\upharpoonright\dom(A_0)}\in\B(\H_0,\H),
\quad \overline{B_*A_0\upharpoonright\dom(A_0)}\in\B(\H_0,\H)
\quad\hbox{and}\quad\overline{(JA_0J^*-A)\upharpoonright\DD}\in\B(\H).
\end{equation}
Then, $U\in C^1(A)$.
\end{Proposition}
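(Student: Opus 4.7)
The plan is to verify the definition of $U\in C^1(A)$ recalled in Section \ref{Sec_one_Hilbert}, namely that the sesquilinear form
\[
q(\varphi,\psi):=\langle A\varphi,U\psi\rangle_\H-\langle \varphi,UA\psi\rangle_\H,\qquad \varphi,\psi\in\dom(A),
\]
is continuous for the topology induced by $\H$ on $\dom(A)$. Since $\DD$ is a core for $A$, it will be enough to prove the bound $|q(\varphi,\varphi)|\le c\;\!\|\varphi\|_\H^2$ for every $\varphi\in\DD$ and then to extend the estimate to $\dom(A)$ by approximating in the graph norm of $A$.

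Fix $\varphi\in\DD$, so that $J^*\varphi\in\dom(A_0)$ by hypothesis and every quantity below is well defined. The first step is to write $A\upharpoonright\DD=JA_0J^*\upharpoonright\DD+(A-JA_0J^*)\upharpoonright\DD$; the contribution of the remainder $(A-JA_0J^*)\upharpoonright\DD$ to $q(\varphi,\varphi)$ is bounded by a multiple of $\|\varphi\|_\H^2$ thanks to the third part of \eqref{hyp1}. The remaining main term is
\[
\langle JA_0J^*\varphi,U\varphi\rangle_\H-\langle \varphi,UJA_0J^*\varphi\rangle_\H.
\]
Using the identity $UJ=JU_0-B$ and the identity $J^*U=U_0J^*-B_*^*$ obtained by taking the adjoint of $U^*J=JU_0^*-B_*$, I would rewrite the main term as
\[
\big(\langle A_0J^*\varphi,U_0J^*\varphi\rangle_{\H_0}-\langle J^*\varphi,U_0A_0J^*\varphi\rangle_{\H_0}\big)-\langle B_*A_0J^*\varphi,\varphi\rangle_\H+\langle \varphi,BA_0J^*\varphi\rangle_\H.
\]

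Since $J^*\varphi\in\dom(A_0)$ and $U_0\in C^1(A_0)$, the bracketed difference equals $\langle J^*\varphi,[A_0,U_0]J^*\varphi\rangle_{\H_0}$ with $[A_0,U_0]\in\B(\H_0)$, and is thus dominated by $\|[A_0,U_0]\|_{\B(\H_0)}\|J\|_{\B(\H_0,\H)}^2\|\varphi\|_\H^2$. The last two scalar products are dominated by multiples of $\|\varphi\|_\H^2$ because the first two parts of \eqref{hyp1} provide bounded extensions of $B_*A_0\upharpoonright\dom(A_0)$ and $BA_0\upharpoonright\dom(A_0)$. Collecting the estimates, $q$ is bounded on $\DD\times\DD$. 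To conclude, I would approximate an arbitrary $\varphi\in\dom(A)$ by $\varphi_n\in\DD$ in the graph norm of $A$; then $\langle A\varphi_n,U\varphi_n\rangle_\H\to\langle A\varphi,U\varphi\rangle_\H$ and similarly for the second term, so $q(\varphi_n,\varphi_n)\to q(\varphi,\varphi)$ while the bound persists, which gives $U\in C^1(A)$. The one subtle point requiring care is to track where $A_0$ acts at each stage: the assumption $J^*\DD\subset\dom(A_0)$ is used precisely to legitimise the action of $A_0$ on $J^*\varphi$ in every expression before the bounded extensions provided by \eqref{hyp1} are invoked.
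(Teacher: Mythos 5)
Your proposal is correct and follows essentially the same route as the paper: you split $A$ on $\DD$ as $JA_0J^*+(A-JA_0J^*)$, rewrite the $JA_0J^*$-contribution using $UJ=JU_0-B$ and $J^*U=U_0J^*-B_*^*$ to produce $\langle J^*\varphi,[A_0,U_0]J^*\varphi\rangle_{\H_0}$ plus two terms controlled by $\overline{BA_0}$ and $\overline{B_*A_0}$, bound the remainder via the third hypothesis, and conclude by core density — exactly the decomposition the paper writes in one line, only spelled out in more detail.
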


\begin{proof}
For $\varphi\in\DD$, a direct calculation gives
\begin{align*}
&\big\langle A\varphi,U\varphi\big\rangle_\H
-\big\langle\varphi,UA\varphi\big\rangle_\H\\
&=\big\langle A\varphi,U\varphi\big\rangle_\H
-\big\langle\varphi,UA\varphi\big\rangle_\H
-\big\langle\varphi,J\;\![A_0,U_0]J^*\varphi\big\rangle_\H
+\big\langle\varphi,J\;\![A_0,U_0]J^*\varphi\big\rangle_\H\\
&=\big\langle\varphi,BA_0J^*\varphi\big\rangle_\H
-\big\langle B_*A_0J^*\varphi,\varphi\big\rangle_\H
+\big\langle U^*\varphi,(JA_0J^*-A)\varphi\rangle_\H
-\big\langle(JA_0J^*-A)\varphi,U\varphi\big\rangle_\H\\
&\quad+\big\langle\varphi,J\;\![A_0,U_0]J^*\varphi\big\rangle_\H.
\end{align*}
Furthermore, we have
$$
\big|\big\langle\varphi,BA_0J^*\varphi\big\rangle_\H
-\big\langle B_*A_0J^*\varphi,\varphi\big\rangle_\H\big|
\le{\rm Const.}\;\!\|\varphi\|_\H^2
$$
due to the first two conditions in \eqref{hyp1}, and we have
$$
\big|\big\langle U^*\varphi,(JA_0J^*-A)\varphi\big\rangle_\H
-\big\langle (JA_0J^*-A)\varphi,U \varphi\big\rangle_\H\big|
\le{\rm Const.}\;\!\|\varphi\|^2_\H
$$
due to the third condition in \eqref{hyp1}. Finally, since $U_0\in C^1(A_0)$ and
$J\in\B(\H_0,\H)$ we also have
$$
\big|\big\langle\varphi,J\;\![A_0,U_0]J^*\varphi\big\rangle_\H\big|
\le{\rm Const.}\;\!\|\varphi\|^2_\H.
$$
Since $\DD$ is a core for $A$, this implies that $U\in C^1(A)$.
\end{proof}

We now show how the assumption (ii) of Theorem \ref{thm_fonctionrho} is verified for a
short-range type perturbation. Note that the hypotheses of the following proposition
are slightly stronger than the ones of Proposition \ref{prop_C1_short}. Thus, $U$
automatically belongs to $C^1(A)$.

\begin{Proposition}\label{prop_com_compact}
Let $U_0\in C^1(A_0)$, assume that $\DD\subset\H$ is a core for $A$ such that
$J^*\DD\subset\dom(A_0)$, and suppose that
\begin{equation}\label{c123}
\overline{BA_0\upharpoonright\dom(A_0)}\in\B(\H_0,\H),
\quad\overline{B_*A_0\upharpoonright\dom(A_0)}\in\K(\H_0,\H)
\quad\hbox{and}\quad\overline{(JA_0J^*-A)\upharpoonright\DD}\in\K(\H).
\end{equation}
Then, the difference of bounded operators $JU_0^{-1}[A_0,U_0]J^*-U^{-1}[A,U]$ belongs
to $\K(\H)$.
\end{Proposition}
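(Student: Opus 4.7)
Since the hypotheses of the present proposition strengthen those of Proposition \ref{prop_C1_short}, one immediately gets $U\in C^1(A)$; hence $[A,U]\in\B(\H)$ and both $U^{-1}[A,U]$ and $JU_0^{-1}[A_0,U_0]J^*$ are bounded, so the issue is purely to show compactness of their difference. My plan is to extract from the proof of Proposition \ref{prop_C1_short} an explicit bounded operator identity, identify five summands, three of which are manifestly compact, and show that the remaining two combine via a short algebraic identity plus one regularisation argument into an additional compact term.

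Introducing the closures
$$
X:=\overline{BA_0\upharpoonright\dom(A_0)}\in\B(\H_0,\H),\quad
X':=\overline{B_*A_0\upharpoonright\dom(A_0)}\in\K(\H_0,\H),\quad
Y:=\overline{(JA_0J^*-A)\upharpoonright\DD}\in\K(\H),
$$
and observing that $(JA_0J^*-A)\upharpoonright\DD$ is symmetric so that $Y=Y^*$, the form computation in the proof of Proposition \ref{prop_C1_short}, polarised to $\DD\times\DD$ and extended by continuity to $\H\times\H$, yields the bounded operator identity
$$
[A,U]-J[A_0,U_0]J^*=XJ^*-JX'^*+UY-YU.
$$
Multiplying on the left by $U^{-1}=U^*$ and rewriting $U^*J[A_0,U_0]J^*$ via $U^*J=JU_0^*-B_*$ gives
$$
U^{-1}[A,U]-JU_0^{-1}[A_0,U_0]J^*=U^*XJ^*-U^*JX'^*+Y-U^*YU-B_*[A_0,U_0]J^*.
$$
The three terms $U^*JX'^*$, $Y$ and $U^*YU$ are manifestly compact since $X'$ and $Y$ are.

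It therefore suffices to prove the identity
$$
U^*X=-X'U_0+B_*[A_0,U_0]\quad\hbox{in }\B(\H_0,\H),
$$
for then $U^*XJ^*-B_*[A_0,U_0]J^*=-X'U_0J^*\in\K(\H)$ and the proof is complete. The algebraic starting point is the trivial identity $U^*B+B_*U_0=0$, which follows directly from $B=JU_0-UJ$ and $B_*=JU_0^*-U^*J$. It yields $U^*BA_0=-B_*U_0A_0$ on $\dom(A_0)$ and hence, by closure, $U^*X=-\overline{B_*U_0A_0\upharpoonright\dom(A_0)}$. The main obstacle is then to show $\overline{B_*U_0A_0\upharpoonright\dom(A_0)}=X'U_0-B_*[A_0,U_0]$, which formally reads $B_*A_0U_0-B_*U_0A_0=B_*[A_0,U_0]$ but is not immediate because $U_0$ need not preserve $\dom(A_0)$. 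I plan to handle this by a regularisation argument: with $A_{0,\tau}:=(i\tau)^{-1}(\e^{i\tau A_0}-1)\in\B(\H_0)$, the bounded operator identity $B_*A_{0,\tau}U_0=B_*U_0A_{0,\tau}+B_*[A_{0,\tau},U_0]$ is trivial, and passing to the limit $\tau\to 0$ delivers the required identity provided three strong convergences hold on the whole of $\H_0$: $B_*A_{0,\tau}\to X'$, $B_*U_0A_{0,\tau}\to -U^*X$ (the latter via $B_*U_0=-U^*B$), and $B_*[A_{0,\tau},U_0]\to B_*[A_0,U_0]$ (immediate from $U_0\in C^1(A_0)$ and the boundedness of $B_*$). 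The first two are established via the integral representation
$$
B_*(\e^{i\tau A_0}-1)\phi=i\int_0^\tau X'\,\e^{isA_0}\phi\,\d s,
$$
(and an analogous identity with $X$ in place of $X'$), verified first on $\dom(A_0)$ by differentiating $s\mapsto\e^{isA_0}\phi$ and using $B_*A_0=X'$ on $\dom(A_0)$, and then extended to all of $\H_0$ by density together with the boundedness of $X'$ and $X$.
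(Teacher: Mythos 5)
Your proof is correct, but it follows a genuinely different route from the paper's. The paper chooses test vectors $\varphi\in\DD$, $\psi\in U^{-1}\DD$ and, after noting $U_0J^*\DD\subset U_0\dom(A_0)\subset\dom(A_0)$, directly rewrites $\langle\psi,\big(JU_0^{-1}[A_0,U_0]J^*-U^{-1}[A,U]\big)\varphi\rangle_\H$ as a sum of four terms each built from one of the two compact closures $\overline{B_*A_0}$ or $\overline{(JA_0J^*-A)}$; the operator $\overline{BA_0}$ never appears, and the first hypothesis in \eqref{c123} is used only implicitly to guarantee $U\in C^1(A)$. You instead start from the polarised operator identity $[A,U]-J[A_0,U_0]J^*=XJ^*-JX'^*+UY-YU$ coming out of the proof of Proposition \ref{prop_C1_short}, left-multiply by $U^{-1}$ and use $U^*J=JU_0^*-B_*$, and are then forced to eliminate the bounded-but-not-compact summand $U^*XJ^*$ via the secondary identity $U^*X=-X'U_0+B_*[A_0,U_0]$. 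Both decompositions land in the same place; the paper's choice of test vectors sidesteps $X$ from the outset, while yours needs the extra algebraic lemma to get rid of it. One small inefficiency: the regularisation with $A_{0,\tau}$ is superfluous. Since $U_0$ is a bounded operator in $C^1(A_0)$, one has $U_0\dom(A_0)\subset\dom(A_0)$ (this is exactly the inclusion the paper invokes in the first line of its proof), so $B_*U_0A_0\phi=B_*A_0U_0\phi-B_*[A_0,U_0]\phi=X'U_0\phi-B_*[A_0,U_0]\phi$ holds directly for $\phi\in\dom(A_0)$, giving your identity after closure. Your integral-representation argument for the strong limits $B_*A_{0,\tau}\to X'$ and $BA_{0,\tau}\to X$ is nonetheless correct, just more work than needed.
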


\begin{proof}
The facts that $U_0\in C^1(A_0)$ and $J^*\DD\subset\dom(A_0)$ imply the inclusions
$$
U_0J^*\DD\subset U_0\;\!\dom(A_0)\subset\dom(A_0).
$$
Using this and the last two conditions of \eqref{c123}, we obtain for $\varphi\in\DD$
and $\psi\in U^{-1}\DD$ that
\begin{align*}
&\big\langle\psi,\big(JU_0^{-1}[A_0,U_0]J^*-U^{-1}[A,U]\big)\varphi\big\rangle_\H\\
&=\big\langle\psi,B_*A_0U_0J^*\varphi\big\rangle_\H
+\big\langle B_*A_0J^*U\psi,\varphi\big\rangle_\H
+\big\langle(JA_0J^*-A)U\psi,U\varphi\big\rangle_\H
-\big\langle\psi,(JA_0J^*-A)\varphi\big\rangle_\H\\
&=\big\langle\psi,K_1U_0J^*\varphi\big\rangle_\H
+\big\langle K_1J^*U\psi,\varphi\big\rangle_\H
+\big\langle K_2U\psi,U\varphi\big\rangle_\H
-\big\langle\psi,K_2\varphi\big\rangle_\H
\end{align*}
with $K_1\in\K(\H_0,\H)$ and $K_2\in\K(\H)$. Since $\DD$ and $U^{-1}\DD$ are dense in
$\H$, it follows that the operator $JU_0^{-1}[A_0,U_0]J^*-U^{-1}[A,U]$ belongs to
$\K(\H)$.
\end{proof}

In the rest of the section, we particularize the previous results to the case where
$A=JA_0J^*$. This case deserves a special attention since it represents the most
natural choice of a conjugate operator $A$ for $U$ when a conjugate operator $A_0$ for
$U_0$ is given. However, one needs in this case the following assumption to guarantee
the self-adjointness of the operator $A:$

\begin{Assumption}\label{ass_eaa}
There exists a set $\DD\subset\dom(A_0J^*)\subset \H$ such that
$JA_0J^*\upharpoonright\DD$ is essentially self-adjoint, with corresponding
self-adjoint extension denoted by $A$.
\end{Assumption}

Assumption \ref{ass_eaa} might be difficult to check in general, but in concrete
situations the choice of the set $\DD$ can be quite natural (see for example Lemma
\ref{lemma_A_self} for the case of quantum walks or \cite[Rem.~4.3]{RT13_1} for the
case of manifolds with asymptotically cylindrical ends). The following two corollaries
follow directly from Propositions \ref{prop_C1_short}-\ref{prop_com_compact} in the
case Assumption \ref{ass_eaa} is satisfied.

\begin{Corollary}\label{Corol_C1(A)}
Let $U_0\in C^1(A_0)$, suppose that Assumption \ref{ass_eaa} holds for some set
$\DD\subset\H$, and assume that
$$
\overline{BA_0\upharpoonright\dom(A_0)}\in\B(\H_0,\H)
\quad\hbox{and}\quad\overline{B_*A_0\upharpoonright\dom(A_0)}\in\B(\H_0,\H).
$$
Then, $U$ belongs to $C^1(A)$.
\end{Corollary}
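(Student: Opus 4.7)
The plan is to verify that all the hypotheses of Proposition \ref{prop_C1_short} are satisfied under the present assumptions, so that the conclusion $U\in C^1(A)$ follows at once.

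First, I would unpack Assumption \ref{ass_eaa}. It provides a set $\DD\subset\dom(A_0J^*)$, which means immediately that $J^*\DD\subset\dom(A_0)$, giving the domain compatibility required by Proposition \ref{prop_C1_short}. It also asserts that $JA_0J^*\upharpoonright\DD$ is essentially self-adjoint, with self-adjoint extension $A$; by definition of essential self-adjointness, this is exactly the statement that $\DD$ is a core for $A$. So two of the three hypotheses of Proposition \ref{prop_C1_short} concerning the auxiliary set $\DD$ are automatic.

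Second, the third condition of \eqref{hyp1}, namely $\overline{(JA_0J^*-A)\upharpoonright\DD}\in\B(\H)$, is trivially satisfied: because $A$ extends $JA_0J^*\upharpoonright\DD$ (this is what Assumption \ref{ass_eaa} says), the operator $(JA_0J^*-A)\upharpoonright\DD$ is identically zero, and its closure is the zero operator, which is certainly bounded. The remaining two conditions of \eqref{hyp1}, involving $\overline{BA_0\upharpoonright\dom(A_0)}$ and $\overline{B_*A_0\upharpoonright\dom(A_0)}$, are precisely the two assumptions imposed in the statement of the corollary. Together with the hypothesis $U_0\in C^1(A_0)$, this verifies every assumption of Proposition \ref{prop_C1_short}, and that proposition then yields $U\in C^1(A)$.

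There is no real obstacle here; the whole content of the corollary is the observation that when $A$ is chosen to be the closure of $JA_0J^*\upharpoonright\DD$, the ``compatibility'' condition on $JA_0J^*-A$ in Proposition \ref{prop_C1_short} is built into Assumption \ref{ass_eaa} and so disappears. The only point demanding any care would be to write down cleanly that $\DD\subset\dom(A_0J^*)$ literally means $J^*\DD\subset\dom(A_0)$, which is immediate from the definition of $\dom(A_0J^*)$.
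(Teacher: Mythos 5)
Your proof is correct, and it is exactly the verification that the paper has in mind: the paper simply states that Corollaries \ref{Corol_C1(A)}--\ref{Corol_est_supp} ``follow directly from Propositions \ref{prop_C1_short}-\ref{prop_com_compact} in the case Assumption \ref{ass_eaa} is satisfied,'' and your write-up supplies the three observations that make this immediate (the inclusion $J^*\DD\subset\dom(A_0)$, the core property from essential self-adjointness, and the vanishing of $(JA_0J^*-A)\upharpoonright\DD$).
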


\begin{Corollary}\label{Corol_est_supp}
Let $U_0\in C^1(A_0)$, suppose that Assumption \ref{ass_eaa} holds for some set
$\DD\subset\H$, and assume that
$$
\overline{BA_0\upharpoonright\dom(A_0)}\in\B(\H_0,\H)
\quad\hbox{and}\quad \overline{B_*A_0\upharpoonright\dom(A_0)}\in\K(\H_0,\H).
$$
Then, the difference of bounded operators $JU_0^{-1}[A_0,U_0]J^*-U^{-1}[A,U]$ belongs
to $\K(\H)$.
\end{Corollary}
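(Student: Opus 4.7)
The plan is to deduce the statement directly from Proposition \ref{prop_com_compact} by checking that Assumption \ref{ass_eaa} guarantees the third compactness hypothesis of that proposition trivially holds with the canonical choice $A=JA_0J^*$.

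More precisely, I would first verify that all the structural conditions of Proposition \ref{prop_com_compact} are in force. The set $\DD$ provided by Assumption \ref{ass_eaa} satisfies $\DD\subset\dom(A_0J^*)$, which immediately gives $J^*\DD\subset\dom(A_0)$, as required. Moreover, by Assumption \ref{ass_eaa}, $JA_0J^*\upharpoonright\DD$ is essentially self-adjoint with closure $A$, which means precisely that $\DD$ is a core for $A$. The first two operator conditions
\[
\overline{BA_0\upharpoonright\dom(A_0)}\in\B(\H_0,\H)
\quad\hbox{and}\quad
\overline{B_*A_0\upharpoonright\dom(A_0)}\in\K(\H_0,\H)
\]
are identical to those in the hypothesis of Proposition \ref{prop_com_compact} (in fact the second is strictly stronger, the compactness being what will ultimately produce the compact commutator difference).

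Next, I would observe the key point: on the set $\DD$, the operator $JA_0J^*-A$ vanishes identically, since $A$ is by construction an extension of $JA_0J^*\upharpoonright\DD$. Consequently $\overline{(JA_0J^*-A)\upharpoonright\DD}=0\in\K(\H)$, so the third hypothesis of Proposition \ref{prop_com_compact} is automatic. Applying that proposition then yields $JU_0^{-1}[A_0,U_0]J^*-U^{-1}[A,U]\in\K(\H)$.

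There is essentially no obstacle here; the entire content of the corollary lies in recognising that the canonical choice $A=JA_0J^*$ forces the potentially delicate compatibility condition between $A$ and $JA_0J^*$ to become trivial. The only minor point worth emphasising is that one must invoke Assumption \ref{ass_eaa} to ensure that $A$ is a bona fide self-adjoint operator (not merely symmetric) with $\DD$ as a core, so that Proposition \ref{prop_com_compact} applies verbatim.
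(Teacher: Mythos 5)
Your proof is correct and takes exactly the route the paper intends: the paper states that the corollary "follows directly from Propositions \ref{prop_C1_short}--\ref{prop_com_compact} in the case Assumption \ref{ass_eaa} is satisfied," and you have correctly supplied the one small verification needed, namely that $(JA_0J^*-A)\upharpoonright\DD=0$ because $A$ is by definition the closure of $JA_0J^*\upharpoonright\DD$, so the third condition in \eqref{c123} holds trivially.
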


\section{Quantum walks with an anisotropic coin}\label{Sec_quantum}
\setcounter{equation}{0}

In this section, we apply the abstract theory of Section \ref{Sec_unitary} to prove
our results on the spectrum of the evolution operator $U$ of the quantum walk with an
anisotropic coin defined in Section \ref{Sec_model}. For this, we first determine in
Section \ref{Sec_asymptotic} the spectral properties and prove a Mourre estimate for
the asymptotic operators $U_\ell$ and $U_{\rm r}$. Then, in Section \ref{Sec_Mourre},
we use the Mourre estimate for $U_\ell$ and $U_{\rm r}$ to derive a Mourre estimate
for $U$. Finally, in Section \ref{Sec_spectrum}, we use the Mourre estimate for $U$ to
prove our results on $U$. We recall that the behaviour of the coin operator $C$ at
infinity is determined by Assumption \ref{ass_short}.

\subsection{Asymptotic operators $U_\ell$ and $U_{\rm r}$}\label{Sec_asymptotic}

For the study of the asymptotic operators $U_\ell$ and $U_{\rm r}$, we use the symbol
$\star$ to denote either the index $\ell$ or the index ${\rm r}$. Also, we introduce
the subspace $\H_{\rm fin}\subset\H$ of elements with finite support
$$
\H_{\rm fin}
:=\bigcup_{n\in\N}\big\{\Psi\in\H\mid\hbox{$\Psi(x)=0$ if $|x|\ge n$}\big\},
$$
the Hilbert space $\KK:=\ltwo\big([0,2\pi),\frac{\d k}{2\pi},\C^2\big)$, and the
discrete Fourier transform $\F:\H\to\KK$, which is the unitary operator defined as the
unique continuous extension of the operator
$$
(\F\Psi)(k):=\sum_{x\in\Z}\e^{-ikx}\Psi(x),\quad\Psi\in\H_{\rm fin},~k\in[0,2\pi).
$$
A direct computation shows that the operator $U_\star$ is decomposable in the Fourier
representation, namely, for all $f\in\KK$ and almost every $k\in[0,2\pi)$ we have
$$
(\F\;\!U_\star\;\!\F^*f)(k)=\widehat{U_\star}(k)f(k)\quad\hbox{with}\quad
\widehat{U_\star}(k)
:=\begin{pmatrix}
\e^{ik}&0\\
0&\e^{-ik}
\end{pmatrix}
C_\star\in\U(2).
$$
Moreover, since $\widehat{U_\star}(k)\in\U(2)$ the spectral theorem implies that
$\widehat{U_\star}(k)$ can be written as
$$
\widehat{U_\star}(k)=\sum_{j=1}^2\lambda_{\star,j}(k)\;\!\Pi_{\star,j}(k),
$$
with $\lambda_{\star,j}(k)$ the eigenvalues of $\widehat{U_\star}(k)$ and
$\Pi_{\star,j}(k)$ the corresponding orthogonal projections.

The next lemma furnishes some information on the spectrum of $U_\star$. To state it,
we use the following parametrisation for the matrices
$C_\star:$
\begin{equation}\label{eq_param_C}
C_\star=\e^{i\delta_\star/2}
\begin{pmatrix}
a_\star\e^{i(\alpha_\star-\delta_\star/2)}
& b_\star\e^{i(\beta_\star-\delta_\star/2)}\\
-b_\star\e^{-i(\beta_\star-\delta_\star/2)}
& a_\star\e^{-i(\alpha_\star-\delta_\star/2)}
\end{pmatrix}
\end{equation}
with $a_\star,b_\star\in[0,1]$ satisfying $a_\star^2+b_\star^2=1$, and
$\alpha_\star,\beta_\star,\delta_\star\in(-\pi,\pi]$. The determinant $\det(C_\star)$
of $C_\star$ is equal to $\e^{i\delta_\star}$. For brevity, we also set
\begin{align*}
\tau_\star(k)&:=a_\star\cos(k+\alpha_\star-\delta_\star/2),\\
\eta_\star(k)&:=\sqrt{1-\tau_\star(k)^2},\\
\varsigma_\star(k)&:=a_\star\sin(k+\alpha_\star-\delta_\star/2),\\
\theta_\star&:=\arccos(a_\star).
\end{align*}

\begin{Lemma}[Spectrum of $U_\star$]\label{lemma_spectrum_U_star}
\begin{enumerate}
\item[(a)] If $a_\star=0$, then $U_\star$ has pure point spectrum
$$
\sigma(U_\star)
=\sigma_{\rm p}(U_\star)
=\big\{i\e^{i\delta_\star/2},-i\e^{i\delta_\star/2}\big\}
$$
with each point an eigenvalue of $U_\star$ of infinite multiplicity.
\item[(b)] If $a_\star\in(0,1)$, then $\sigma_{\rm p}(U_\star)=\varnothing$ and
$$
\sigma(U_\star)
=\sigma_{\rm c}(U_\star)
=\big\{\e^{i\gamma}\mid
\gamma\in[\delta_\star/2+\theta_\star,\pi+\delta_\star/2-\theta_\star]
\cup[\pi+\delta_\star/2+\theta_\star,2\pi+\delta_\star/2-\theta_\star]\big\}.
$$
\item[(c)] If $a_\star=1$, then $\sigma_{\rm p}(U_\star)=\varnothing$ and
$\sigma(U_\star)=\sigma_{\rm c}(U_\star)=\T$.
\end{enumerate}
\end{Lemma}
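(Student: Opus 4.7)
The plan is to exploit the Fourier diagonalisation $\F U_\star \F^* = \int^{\oplus}_{[0,2\pi)}\widehat{U_\star}(k)\,\tfrac{\d k}{2\pi}$ already established above. Since $\widehat{U_\star}(k)\in\U(2)$ is a $2\times 2$ unitary matrix depending continuously (in fact real-analytically) on $k$, the spectrum of $U_\star$ coincides with the essential range of the eigenvalue functions $k\mapsto\lambda_{\star,j}(k)$, and $\lambda\in\T$ belongs to $\sigma_{\rm p}(U_\star)$ if and only if $\{k\in[0,2\pi)\mid\lambda_{\star,j}(k)=\lambda\}$ has positive Lebesgue measure for some $j\in\{1,2\}$ (with the eigenspace of $\lambda$ in $U_\star$ being infinite-dimensional whenever this set has positive measure).

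First, I would compute the two invariants of $\widehat{U_\star}(k)$ directly from the parametrisation \eqref{eq_param_C}: the determinant equals $\e^{i\delta_\star}$ and the trace equals $2\;\!\e^{i\delta_\star/2}\tau_\star(k)$. Solving the characteristic polynomial
$$
\lambda^2-2\;\!\e^{i\delta_\star/2}\tau_\star(k)\;\!\lambda+\e^{i\delta_\star}=0
$$
gives the two eigenvalues
$$
\lambda_{\star,\pm}(k)=\e^{i\delta_\star/2}\big(\tau_\star(k)\pm i\eta_\star(k)\big),
$$
which lie on $\T$ since $\tau_\star(k)^2+\eta_\star(k)^2=1$. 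Writing $\lambda_{\star,\pm}(k)=\e^{i(\delta_\star/2\pm\arccos\tau_\star(k))}$ (modulo $2\pi$) then reduces the analysis of $\sigma(U_\star)$ to tracking the range of $\tau_\star$.

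Case (a): when $a_\star=0$ one has $\tau_\star\equiv 0$ and $\eta_\star\equiv 1$, so $\lambda_{\star,\pm}(k)=\pm i\;\!\e^{i\delta_\star/2}$ are constant in $k$; both values are eigenvalues of $U_\star$ with infinite-dimensional eigenspaces, and the rest of the spectrum is empty. Case (b): when $a_\star\in(0,1)$, the function $\tau_\star(k)=a_\star\cos(k+\alpha_\star-\delta_\star/2)$ is a non-constant real-analytic function whose range is exactly $[-a_\star,a_\star]$, so $\arccos\tau_\star(k)$ sweeps $[\theta_\star,\pi-\theta_\star]$; the two branches then trace the arcs $\{\e^{i\gamma}\mid\gamma\in[\delta_\star/2+\theta_\star,\pi+\delta_\star/2-\theta_\star]\}$ and $\{\e^{i\gamma}\mid\gamma\in[\pi+\delta_\star/2+\theta_\star,2\pi+\delta_\star/2-\theta_\star]\}$; point spectrum is ruled out by the fact that a non-constant real-analytic function has only finitely many preimages of any point, hence no positive-measure level set. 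Case (c): when $a_\star=1$, the matrix $C_\star$ is diagonal, so $\widehat{U_\star}(k)$ is diagonal with entries $\e^{i(k+\alpha_\star-\delta_\star/2+\delta_\star/2)}$ and $\e^{i(-k-\alpha_\star+\delta_\star/2+\delta_\star/2)}$; each sweeps the full circle $\T$ as $k$ ranges over $[0,2\pi)$ and the same analyticity argument rules out point spectrum.

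The only non-routine step is the bookkeeping in case (b): one has to verify that the two arcs obtained from the $\pm$ branches of $\arccos\tau_\star$ combine exactly into the union described in the statement, and that the endpoints $\theta_\star$, $\pi-\theta_\star$ are attained. This is a straightforward but careful computation using the identity $\arccos(-x)=\pi-\arccos(x)$, which I expect to be the main place where a sign or modulo-$2\pi$ error could creep in.
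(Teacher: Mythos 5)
Your proof is correct and follows essentially the same route as the paper: pass to the Fourier representation, solve the characteristic polynomial of the $2\times 2$ fiber $\widehat{U_\star}(k)$ to obtain $\lambda_{\star,\pm}(k)=\e^{i\delta_\star/2}\big(\tau_\star(k)\pm i\eta_\star(k)\big)$, and read the spectrum off the range of these functions in the three parameter regimes. The paper's proof simply records the eigenvalue formulas and leaves the conclusions implicit, whereas you additionally spell out the (correct) criterion that eigenvalues of the decomposable operator correspond to positive-measure level sets of the analytic branches $k\mapsto\lambda_{\star,j}(k)$, which is the right way to justify $\sigma_{\rm p}(U_\star)=\varnothing$ in cases (b) and (c).
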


\begin{proof}
Using the parametrisation for $C_\star$ given in \eqref{eq_param_C}, one gets
$$
\widehat{U_\star}(k)
=\e^{i\delta_\star/2}
\begin{pmatrix}
a_\star(k) & b_\star(k)\\
-\overline{b_\star(k)} & \overline{a_\star(k)}
\end{pmatrix}
$$	
with
$$
a_\star(k):=a_\star\e^{i(k+\alpha_\star-\delta_\star/2)}
\quad\hbox{and}\quad
b_\star(k):=b_\star\e^{i(k+\beta_\star-\delta_\star/2)}.
$$
Therefore, the spectrum of $U_\star$ is given by
$$
\sigma(U_\star)=\big\{\lambda_{\star,j}(k)\mid j=1,2,~k\in[0,2\pi)\big\}
$$
with $\lambda_{\star,j}(k)$ the solution of the characteristic equation
$$
\det\big(\widehat{U_\star}(k)-\lambda_{\star,j}(k)\big)=0,\quad j=1,2,~k\in[0,2\pi).
$$
In case (a), we obtain
$$
\lambda_{\star,1}(k)=i\e^{i\delta_\star/2}
\quad\hbox{and}\quad
\lambda_{\star,2}(k)=-i\e^{i\delta_\star/2}.
$$
In case (b), we obtain
$$
\lambda_{\star,j}(k)
=\e^{i\delta_\star/2}\big(\tau_\star(k)+i\;\!(-1)^{j-1}\eta_\star(k)\big),\quad j=1,2.
$$
Finally, in case (c) we obtain
$$
\lambda_{\star,1}(k)=\e^{i(k+\alpha_\star)}
\quad\hbox{and}\quad
\lambda_{\star,2}(k)=\e^{-i(k+\alpha_\star-\delta_\star)}.
$$
\end{proof}

We now exhibit normalised eigenvectors $u_{\star,j}(k)$ of $\widehat{U_\star}(k)$
associated with the eigenvalues $\lambda_{\star,j}(k)$ which are $C^\infty$ in the
variable $k:$
$$
\begin{cases}
u_{\star,j}(k):=\frac{\sqrt{\eta_\star(k)+(-1)^{j-1}\varsigma_\star(k)}}
{b_\star\sqrt{2\eta_\star(k)}}
\begin{pmatrix}
ib_\star(k)\\
\varsigma_\star(k)+(-1)^j\eta_\star(k)
\end{pmatrix}
& \hbox{if $a_\star\in[0,1)$}\medskip\\
u_{\star,1}(k):=\begin{pmatrix}1\\0\end{pmatrix}
\quad\hbox{and}\quad
u_{\star,2}(k):=\begin{pmatrix}0\\1\end{pmatrix}
& \hbox{if $a_\star=1$.}
\end{cases}
$$
We leave the reader check that $u_{\star,j}(k)$ are indeed normalised eigenvectors of
$\widehat{U_\star}(k)$ with eigenvalues $\lambda_{\star,j}(k)$. In addition, since for
$a_\star\in[0,1)$ one has $\eta_\star(k)>0$ and
$\eta_\star(k)+(-1)^{j-1}\varsigma_\star(k)>0$, we note that the $2\pi$-periodic map
$\R\ni k\mapsto u_{\star,j}(k)\in\C^2$ is of class $C^\infty$.

Our next goal is to construct a suitable conjugate operator for the operator
$U_\star$. For this, a few preliminaries are necessary. First, we equip the interval
$[0,2\pi)$ with the addition modulo $2\pi$, and for any $n\in\N$ we define the space
$C^n\big([0,2\pi),\C^2\big)\subset\KK$ as the set of functions $[0,2\pi)\to\C^2$ of
class $C^n$. In particular, we have $u_{\star, j}\in C^\infty\big([0,2\pi),\C^2\big)$,
and the space $\F\H_{\rm fin}\subset C^\infty\big([0,2\pi),\C^2\big)$ is the set of
$\C^2$-valued trigonometric polynomials.

Next, we define the asymptotic velocity operator for the operator $U_\star$. For
$j=1,2$, we let $v_{\star,j}:[0,2\pi)\to\R$ be the bounded function given by
\begin{equation}\label{def_small_v}
v_{\star,j}(k)
:=\frac{i\;\!\lambda_{\star,j}'(k)}{\lambda_{\star,j}(k)},\quad k\in[0,2\pi).
\end{equation}
Here, $(\;\!\cdot\;\!)'$ stands for the derivative with respect to $k$, and
$v_{\star,j}$ is real valued because $\lambda_{\star,j}$ takes values in the complex
unit circle. Finally, for all $f\in\KK$ and almost every $k\in[0,2\pi)$, we define the
decomposable operator $\widehat{V_\star}\in\B(\KK)$ by
\begin{equation}\label{def_big_V}
\big(\widehat{V_\star}f\big)(k):=\widehat{V_\star}(k)f(k)
\quad\hbox{where}\quad
\widehat{V_\star}(k):=\sum_{j=1}^2v_{\star,j}(k)\;\!\Pi_{\star,j}(k)\in\B(\C^2),
\end{equation}
and we call asymptotic velocity operator the operator $V_\star$ given as inverse
Fourier transform of $\widehat{V_\star}$, namely,
$$
V_\star:=\F^*\;\!\widehat{V_\star}\;\!\F.
$$

The basic spectral properties of $V_\star$ are collected in the following lemma.

\begin{Lemma}[Spectrum of $V_\star$]
Let $C_\star$ be parameterised as in \eqref{eq_param_C}.
\begin{enumerate}
\item[(a)] If $a_\star=0$, then $v_{\star,j}=0$ for $j=1,2$, and $V_\star=0$.
\item[(b)] If $a_\star\in(0,1)$, then
$v_{\star,j}(k)=\frac{(-1)^{j}\varsigma_\star(k)}{\eta_\star(k)}$ for $j=1,2$ and
$k\in[0,2\pi)$, $\sigma_{\rm p}(V_\star)=\varnothing$ and
$$
\sigma(V_\star)=\sigma_{\rm c}(V_\star)=[-a_\star, a_\star].
$$
\item[(c)] If $a_\star=1$, then $v_{\star,j}=(-1)^{j}$ for $j=1,2$, and $V_\star$
has pure point spectrum
$$
\sigma(V_\star)=\sigma_{\rm p}(V_\star)=\{-1,1\}
$$
with each point an eigenvalue of $V_\star$ of infinite multiplicity.
\end{enumerate}
\end{Lemma}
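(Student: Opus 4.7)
The plan is to reduce each assertion to an explicit computation of the eigenvalues $v_{\star,j}(k)$ of the fiber operator $\widehat{V_\star}(k)$, starting from the expressions for $\lambda_{\star,j}(k)$ furnished by Lemma \ref{lemma_spectrum_U_star}, and then to read off the spectral properties of $V_\star$ from the fact that it is unitarily equivalent (through $\F$) to multiplication by the self-adjoint matrix-valued symbol $\widehat{V_\star}(\;\!\cdot\;\!)$ on $\KK$.

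In case (a), the eigenvalues $\lambda_{\star,1}(k)=i\e^{i\delta_\star/2}$ and $\lambda_{\star,2}(k)=-i\e^{i\delta_\star/2}$ are independent of $k$, so \eqref{def_small_v} gives $v_{\star,j}\equiv0$ and hence $\widehat{V_\star}=0$ in \eqref{def_big_V}. In case (c), I would just differentiate $\lambda_{\star,1}(k)=\e^{i(k+\alpha_\star)}$ and $\lambda_{\star,2}(k)=\e^{-i(k+\alpha_\star-\delta_\star)}$ directly, obtaining $v_{\star,1}=-1$ and $v_{\star,2}=1$; since the eigenprojections $\Pi_{\star,j}(k)$ are the constant rank-one projections on the canonical basis vectors of $\C^2$, the operator $\widehat{V_\star}$ is multiplication by the constant self-adjoint matrix $\mathrm{diag}(-1,1)$, which yields pure point spectrum $\{-1,1\}$ with each eigenvalue having infinite multiplicity (any function in $\KK$ valued in the first, respectively second, basis vector is an eigenvector).

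The main computation is case (b). Writing $\lambda_{\star,j}(k)=\e^{i\delta_\star/2}\bigl(\tau_\star(k)+i(-1)^{j-1}\eta_\star(k)\bigr)$ and using $\tau_\star^2+\eta_\star^2=1$ together with $\tau_\star'(k)=-\varsigma_\star(k)$, a direct differentiation in \eqref{def_small_v} gives
\begin{equation*}
v_{\star,j}(k)=\frac{i\;\!\lambda_{\star,j}'(k)}{\lambda_{\star,j}(k)}
=(-1)^j\,\frac{\varsigma_\star(k)}{\eta_\star(k)},
\end{equation*}
which is well defined and smooth because $\eta_\star(k)>0$ for $a_\star\in(0,1)$. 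To identify the range of $\widehat{V_\star}$, note that both eigenvalues of the fiber matrix are $\pm\varsigma_\star(k)/\eta_\star(k)$; setting $u:=\cos(k+\alpha_\star-\delta_\star/2)$, the square of this quantity equals $a_\star^2(1-u^2)/(1-a_\star^2u^2)$, whose maximum in $u\in[-1,1]$ is $a_\star^2$, attained at $u=0$. Hence the two eigenvalue functions together cover $[-a_\star,a_\star]$ continuously as $k$ varies, and the spectrum of the multiplication operator $\widehat{V_\star}$ equals $[-a_\star,a_\star]$.

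Finally, to exclude point spectrum in case (b), I would argue that a real number $\mu$ is an eigenvalue of the multiplication operator $\widehat{V_\star}$ only if the set $\{k\in[0,2\pi)\mid v_{\star,j}(k)=\mu\}$ has positive Lebesgue measure for some $j$; since $v_{\star,j}$ is a non-constant real-analytic function on $[0,2\pi)$ (its derivative does not vanish identically), each of its level sets is finite, and therefore $\sigma_{\rm p}(V_\star)=\varnothing$. Combined with $\sigma(V_\star)=[-a_\star,a_\star]$, this yields $\sigma(V_\star)=\sigma_{\rm c}(V_\star)=[-a_\star,a_\star]$. The only genuine obstacle is the algebra leading to the explicit formula for $v_{\star,j}$ in case (b) and the small optimisation identifying the range as $[-a_\star,a_\star]$; the remaining arguments are standard facts about decomposable multiplication operators.
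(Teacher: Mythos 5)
Your proof is correct and takes essentially the same approach as the paper, which simply states that "the claims follow from simple calculations using the formulas for $\lambda_{\star,j}(k)$ in the proof of Lemma \ref{lemma_spectrum_U_star} and the definition \eqref{def_small_v} of $v_{\star,j}(k)$." You have carried out those calculations explicitly, verified $\tau_\star'=-\varsigma_\star$, identified the range of $\varsigma_\star/\eta_\star$ as $[-a_\star,a_\star]$, and supplied the standard measure-theoretic argument for $\sigma_{\rm p}(V_\star)=\varnothing$ via analyticity of the fiber eigenvalue functions, all of which is consistent with the paper's intent.
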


\begin{proof}
The claims follow from simple calculations using the formulas for
$\lambda_{\star,j}(k)$ in the proof of Lemma \ref{lemma_spectrum_U_star} and the
definition \eqref{def_small_v} of $v_{\star,j}(k)$.
\end{proof}

For any $\xi,\zeta\in C\big([0,2\pi),\C^2\big)$, we define the operator
$|\xi\rangle\langle\zeta|:C\big([0,2\pi),\C^2\big)\to C\big([0,2\pi),\C^2\big)$
by
$$
\big(|\xi\rangle\langle\zeta|f\big)(k)
:=\big\langle\zeta(k),f(k)\big\rangle_2\;\!\xi(k),
\quad f\in C\big([0,2\pi),\C^2\big),~k\in[0,2\pi),
$$
where $\langle\;\!\cdot\;\!,\;\!\cdot\;\rangle_2$ is the usual scalar product on
$\C^2$. This operator extends continuously to an element of $\B(\KK)$, with norm
satisfying the bound
\begin{equation}\label{eq_estimate}
\big\||\xi\rangle\langle\zeta|\big\|_{\B(\KK)}
\le\|\xi\|_{\linf([0,2\pi),\frac{\d k}{2\pi},\C^2)}
\;\!\|\zeta\|_{\linf([0,2\pi),\frac{\d k}{2\pi},\C^2)}.
\end{equation}
We also define the self-adjoint operator $P$ in $\KK$
$$
Pf:=-if',\quad f\in\dom(P):=\big\{f\in\KK\mid\hbox{$f$ is absolutely continuous,
$f'\in\KK$, and $f(0)=f(2\pi)$}\big\}.
$$
With these definitions at hand, we can prove the self-adjointness of an operator
useful for the definition of our future the conjugate operator for $U:$

\begin{Lemma}\label{lemma_X}
The operator
$$
\widehat{X_\star}f
:=-\sum_{j=1}^2\big(\big|u_{\star,j}\big\rangle\big\langle u_{\star,j}\big|P
-i\;\!\big|u_{\star,j}\big\rangle\big\langle u_{\star,j}'\big|\big)f,
\quad f\in\F\H_{\rm fin},
$$
is essentially self-adjoint in $\KK$, with closure denoted by the same symbol. In
particular, the Fourier transform $X_\star:=\F^*\widehat{X_\star}\F$ of
$\widehat{X_\star}$ is essentially self-adjoint on $\H_{\rm fin}$ in $\H$.
\end{Lemma}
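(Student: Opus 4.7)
The plan is to reduce the essential self-adjointness of $\widehat{X_\star}$ to that of the momentum operator $P$ on the circle by showing that the two operators differ by a bounded self-adjoint perturbation on $\F\H_{\rm fin}$. First, I would observe that the operator $|u_{\star,j}\rangle\langle u_{\star,j}|$ coincides with the multiplication operator by the rank-one projection $\Pi_{\star,j}(k)=|u_{\star,j}(k)\rangle\langle u_{\star,j}(k)|$ on $\C^2$, and that these projections satisfy the completeness relation $\sum_{j=1}^{2}\Pi_{\star,j}(k)=I_{\C^2}$ for every $k\in[0,2\pi)$. A direct computation on $\F\H_{\rm fin}$ then gives
\[
\widehat{X_\star}\;\!f=-Pf+Mf,\qquad M:=i\sum_{j=1}^2\big|u_{\star,j}\big\rangle\big\langle u_{\star,j}'\big|,
\]
for every $f\in\F\H_{\rm fin}\subset\dom(P)$.

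Next, I would verify that $M\in\B(\KK)$ is self-adjoint. Boundedness follows from the estimate \eqref{eq_estimate} together with the fact that $u_{\star,j}$ and $u_{\star,j}'$ belong to $\linf([0,2\pi),\frac{\d k}{2\pi},\C^2)$, a consequence of the $C^\infty$ and $2\pi$-periodic character of the maps $k\mapsto u_{\star,j}(k)$ already noted before the lemma (and trivially true in the case $a_\star=1$ where the $u_{\star,j}$ are constant). Self-adjointness of $M$ is then obtained by differentiating the completeness relation $\sum_j|u_{\star,j}\rangle\langle u_{\star,j}|=I$ with respect to $k$, which yields the identity $\sum_j\big(|u_{\star,j}'\rangle\langle u_{\star,j}|+|u_{\star,j}\rangle\langle u_{\star,j}'|\big)=0$ and hence $M^*=M$.

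Finally, since $\F\H_{\rm fin}$ consists of the $\C^2$-valued trigonometric polynomials, it is a core for $P$ (the standard fact that trigonometric polynomials are dense in $\dom(P)$ for the graph norm), so $-P$ is essentially self-adjoint on $\F\H_{\rm fin}$. The classical stability principle that adding a bounded self-adjoint operator preserves essential self-adjointness on a given core (an elementary consequence of the Kato--Rellich theorem) then yields the essential self-adjointness of $\widehat{X_\star}=-P+M$ on $\F\H_{\rm fin}$. The corresponding statement for $X_\star=\F^*\widehat{X_\star}\F$ follows at once from the unitarity of $\F$ and the identity $\F^*(\F\H_{\rm fin})=\H_{\rm fin}$.

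The only non-cosmetic point is the initial algebraic identity $\widehat{X_\star}=-P+M$, but since $\F\H_{\rm fin}\subset\dom(P)$ and $\sum_j\Pi_{\star,j}(k)=I_{\C^2}$ holds pointwise in $k$, it reduces to an unambiguous manipulation of trigonometric polynomials, so I do not expect any genuine obstacle; the whole argument amounts to recognising $\widehat{X_\star}$ as a smooth gauge transform of $-P$.
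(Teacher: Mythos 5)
Your proof is correct, and it takes a genuinely different (and more elementary) route than the paper's. The paper establishes essential self-adjointness via Nelson's commutator theorem \cite[Thm.~X.37]{RS2} with comparison operator $N=P^2+1$: one checks symmetry of $\widehat{X_\star}$ on $\F\H_{\rm fin}$ by integration by parts, the bound $\|\widehat{X_\star}f\|_\KK\le{\rm Const.}\;\!\|Nf\|_\KK$, and the form-commutator estimate $|\langle\widehat{X_\star}f,Nf\rangle-\langle Nf,\widehat{X_\star}f\rangle|\le{\rm Const.}\;\!\langle f,Nf\rangle$. You instead exploit the completeness relation $\sum_j\Pi_{\star,j}(k)=I_{\C^2}$ to rewrite $\widehat{X_\star}=-P+M$ with $M:=i\sum_j|u_{\star,j}\rangle\langle u_{\star,j}'|$, then show $M\in\B(\KK)$ is self-adjoint (boundedness from \eqref{eq_estimate} and smoothness of $u_{\star,j}$; symmetry from differentiating the completeness relation, which gives $\sum_j\big(|u_{\star,j}'\rangle\langle u_{\star,j}|+|u_{\star,j}\rangle\langle u_{\star,j}'|\big)=0$), and invoke the bounded-perturbation form of Kato--Rellich on the core $\F\H_{\rm fin}$ of $P$. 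This is cleaner and more self-contained; it also hands you the domain of the closure for free, namely $\dom(\overline{\widehat{X_\star}})=\dom(P)$, which Nelson's theorem does not directly provide. The paper's heavier machinery has the advantage of being reused verbatim in Lemma \ref{lemma_A_self}, where the operator $\F JA_0J^*\F^*$ carries the non-smooth cutoffs $\widehat{j_\star}$ and is \emph{not} a bounded perturbation of $\pm P$ (the coefficient multiplying $P$ is $\sum_{j,\star}\widehat{j_\star}|v_{\star,j}u_{\star,j}\rangle\langle u_{\star,j}|\widehat{j_\star}$, not the identity), so a decomposition like yours would fail there; Nelson's theorem is the common tool that covers both cases uniformly.
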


\begin{proof}
The proof consists in checking the assumptions of Nelson's commutator theorem
\cite[Thm.~X.37]{RS2} applied with the comparison operator $N:=P^2+1$.

For this, we first note that the operator $N$ is essentially self-adjoint on
$\F\H_{\rm fin}$ because it is the Fourier transform of a multiplication operator
acting on functions with finite support (see \cite[Ex.~5.1.15]{Ped89}). Next, by
performing an integration by parts with boundary terms canceling each other out, we
verify that $\widehat{X_\star}$ is symmetric on $\F\H_{\rm fin}$. Then, by using the
definition of $\widehat{X_\star}$ and the estimate \eqref{eq_estimate}, we check that
the inequality $\|\widehat{X_\star}f\|_\KK\le{\rm Const.}\;\!\|Nf\|_\KK$ holds for
each $f\in\F\H_{\rm fin}$. Finally, a direct calculation shows that for all
$\xi,\zeta\in C^2\big([0,2\pi),\C^2\big)$ and $f\in\F\H_{\rm fin}$
\begin{align*}
&\big\langle Nf,|\xi\rangle\langle\zeta|\;\!f\big\rangle_\KK
-\big\langle f,|\xi\rangle\langle\zeta|Nf\big\rangle_\KK\\
&=\big\langle f,\big(|\xi''\rangle\langle\zeta|
-|\xi\rangle\langle\zeta''|
-2\;\!|\xi'\rangle\langle\zeta'|-2i\;\!|\xi'\rangle\langle\zeta|\;\!P
-2i\;\!|\xi\rangle\langle\zeta'|\;\!P\big)f\big\rangle_\KK.
\end{align*}
This, together with the definition of $\widehat{X_\star}$, implies that
$$
\big|\big\langle\widehat{X_\star}f,Nf\big\rangle_\KK
-\big\langle Nf,\widehat{X_\star}f\big\rangle_\KK\big|
\le{\rm Const.}\;\!\langle f,Nf\rangle_\KK.
$$
Thus, all the assumptions of Nelson's commutator theorem are verified, and the claim
is proved.
\end{proof}

The main relations between the operators introduced so far are summarized in the
following proposition. To state it, we need one more decomposable operator
$\widehat{H_\star}\in\B(\KK)$ defined for all $f\in\KK$ and almost every
$k\in[0,2\pi)$ by
$$
\big(\widehat{H_\star}f\big)(k):=\widehat{H_\star}(k)f(k)
\quad\hbox{where}\quad
\widehat{H_\star}(k):=-\sum_{j=1}^2v_{\star,j}'(k)\;\!\Pi_{\star,j}(k)\in\B(\C^2).
$$
We also need the inverse Fourier transform $H_\star:=\F^*\widehat{H_\star}\F$ of
$\widehat{H_\star}$.

\begin{Proposition}\label{prop_tech}
\begin{enumerate}
\item[(a)] One has the equality $[iX_\star,V_\star]=H_\star$ in the form sense on
$\H_{\rm fin}$.
\item[(b)] $U_\star$, $V_\star$ and $H_\star$ are mutually commuting.
\item[(c)] One has the equality $[X_\star,U_\star]=U_\star V_\star$ in the form sense
on $\H_{\rm fin}$.
\end{enumerate}
\end{Proposition}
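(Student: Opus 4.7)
The plan is to pass to the Fourier representation $\KK$, where $\widehat{U_\star}$, $\widehat{V_\star}$, and $\widehat{H_\star}$ are all decomposable multiplication operators of the common form $\widehat{F}(k) = f_1(k)\Pi_{\star,1}(k) + f_2(k)\Pi_{\star,2}(k)$ built on the same pair of pointwise orthogonal projections. Part (b) is then immediate: since $\Pi_{\star,j}(k)\Pi_{\star,l}(k) = \delta_{jl}\Pi_{\star,j}(k)$, any two operators of this form commute pointwise in $k$, so $U_\star$, $V_\star$, and $H_\star$ pairwise commute.

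For (a) and (c) my strategy is to compute $[\widehat{X_\star},\widehat{F}]$ as a sesquilinear form on the dense core $\F\H_{\rm fin}$ for an arbitrary decomposable $\widehat{F}$ of the above form and then specialise. Using the pointwise completeness relation $\sum_j |u_{\star,j}\rangle\langle u_{\star,j}| = 1$ on $\C^2$, the defining expression for $\widehat{X_\star}$ telescopes to
$$
\widehat{X_\star} = -P + iM,\qquad M := \sum_{j=1}^2 \big|u_{\star,j}\big\rangle\big\langle u'_{\star,j}\big|.
$$
Since $[P,\widehat{F}]$ acts by pointwise multiplication by $-i\widehat{F}'(k)$, I obtain $[\widehat{X_\star},\widehat{F}](k) = i\widehat{F}'(k) + i[M(k),\widehat{F}(k)]$. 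Splitting $\widehat{F}'(k) = \sum_j f'_j(k)\Pi_{\star,j}(k) + \sum_j f_j(k)\Pi'_{\star,j}(k)$, the whole calculation reduces to the pointwise algebraic identity
$$
[M(k),\widehat{F}(k)] = -\sum_{j=1}^2 f_j(k)\;\!\Pi'_{\star,j}(k),
$$
which I would verify by expanding $M(k)\widehat{F}(k)$ and $\widehat{F}(k)M(k)$ in the orthonormal basis $\{u_{\star,j}(k)\}$, expanding $\Pi'_{\star,j}(k) = |u'_{\star,j}(k)\rangle\langle u_{\star,j}(k)| + |u_{\star,j}(k)\rangle\langle u'_{\star,j}(k)|$, and using the derivative relation $\langle u'_{\star,j}, u_{\star,l}\rangle + \langle u_{\star,j}, u'_{\star,l}\rangle = 0$ (obtained by differentiating $\langle u_{\star,j}, u_{\star,l}\rangle = \delta_{jl}$) to cancel the off-diagonal terms in pairs. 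The outcome is the clean formula $[\widehat{X_\star},\widehat{F}](k) = i\sum_j f'_j(k)\Pi_{\star,j}(k)$.

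Statements (a) and (c) then follow by specialisation. Choosing $f_j = v_{\star,j}$ yields $[\widehat{X_\star},\widehat{V_\star}] = -i\widehat{H_\star}$, equivalently $[iX_\star,V_\star] = H_\star$, which is (a). Choosing $f_j = \lambda_{\star,j}$ and using the defining identity $i\lambda'_{\star,j}(k) = \lambda_{\star,j}(k)v_{\star,j}(k)$ yields $[\widehat{X_\star},\widehat{U_\star}] = \widehat{U_\star}\widehat{V_\star}$, which is (c). The main technical step is the pointwise identity for $[M(k),\widehat{F}(k)]$; everything else is bookkeeping, and the fact that $\F\H_{\rm fin}$ consists of smooth $2\pi$-periodic $\C^2$-valued functions with $u_{\star,j}$ smooth in $k$ for $a_\star\in[0,1)$ (and constant for $a_\star = 1$, in which case $M=0$ and the identity is trivial) ensures that all manipulations are legitimate as form identities on this core.
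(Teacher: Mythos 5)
Your proof is correct and goes by the same route as the paper, namely computing in the Fourier representation on the core $\F\H_{\rm fin}$, where the commutator with $P$ is a differentiation (the paper phrases this as an integration by parts with vanishing boundary terms). The one genuinely nice organizing step you add is the telescoping $\widehat{X_\star}=-P+iM$ with $M=\sum_j|u_{\star,j}\rangle\langle u'_{\star,j}|$ (using completeness of the fibre basis) together with the pointwise identity $[M(k),\widehat F(k)]=-\sum_j f_j(k)\Pi'_{\star,j}(k)$, which yields the clean general formula $[\widehat{X_\star},\widehat F](k)=i\sum_j f'_j(k)\Pi_{\star,j}(k)$ for any decomposable $\widehat F$ built on the same projections; this treats (a) and (c) uniformly, whereas the paper simply carries out the two analogous direct computations separately. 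The derivative relations you invoke (from differentiating $\langle u_{\star,j},u_{\star,l}\rangle=\delta_{jl}$) are exactly what makes the off-diagonal and diagonal terms cancel, and I verified they do. One small point worth making explicit if you write this up: $\widehat F g$ for $g\in\F\H_{\rm fin}$ is smooth and periodic but not a trigonometric polynomial, so it lies in $\dom(\widehat{X_\star})$ (the closure) rather than in the original core; this is harmless for the form identity on $\F\H_{\rm fin}\times\F\H_{\rm fin}$ but deserves a word. Part (b) is handled exactly as in the paper.
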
	

\begin{proof}
(a) Let $f,g\in\F\H_{\rm fin}$. Then, a direct calculation using an integration by
parts (with boundary terms canceling each other out) implies that
$$
\big\langle\widehat{X_\star}f,i\;\!\widehat{V_\star}g\big\rangle_\KK
-\big\langle f,i\;\!\widehat{V_\star}\widehat{X_\star}g\big\rangle_\KK
=\big\langle f,\widehat{H_\star}g\big\rangle_\KK.
$$
Therefore, the claim follows by an application of the Fourier transform $\F$.

(b) The mutual commutativity of the operators $U_\star$, $V_\star$ and $H_\star$ is a
direct consequence of their boundedness and their definition in terms of the
orthogonal projections $\Pi_{\star,j}(k)$, $k\in[0,2\pi)$.

(c) As in point (a), the proof consists in computing for $f,g\in\F\H_{\rm fin}$ the
difference
$$
\big\langle\widehat{X_\star}f,\widehat{U_\star}g\big\rangle_\KK
-\big\langle f,\widehat{U_\star}\widehat{X_\star}g\big\rangle_\KK
$$
with an integration by parts, checking that this difference is equal to
$\big\langle g,\widehat{U_\star} \widehat{V_\star}f\big\rangle_\KK$, and applying the
Fourier transform $\F$.
\end{proof}

Since $X_\star$ is essentially self-adjoint on $\H_{\rm fin}$, Proposition
\ref{prop_tech}(a) implies that $V_\star\in C^1(X_\star)$. Therefore, the operator
$$
A_\star\Psi:=\frac12\big(X_\star V_\star+V_\star X_\star\big)\Psi,
\quad\Psi\in\dom(A_\star):=\big\{\Psi\in\H\mid V_\star\Psi\in\dom(X_\star)\big\},
$$
is self-adjoint in $\H$, and essentially self-adjoint on $\H_{\rm fin}$ (see
\cite[Lemma~2.4]{Tie16_2}). We can now state and prove the main results of this
section. We recall that $\Int(\Theta)$ and $\partial\Theta$ denote the interior and
the boundary of a set $\Theta\subset\T$. We also recall that the functions
$\varrho^{A_\star}_{U_\star}$ and $\widetilde{\varrho}^{A_\star}_{U_\star}$ have been
defined in Section \ref{Sec_one_Hilbert}.

\begin{Proposition}\label{prop_functions_rho}
\begin{enumerate}
\item[(a)] $U_\star\in C^1(A_\star)$ with $U_\star^{-1}[A_\star,U_\star]=V_\star^2$.
\item[(b)] $\varrho^{A_\star}_{U_\star}=\widetilde{\varrho}^{A_\star}_{U_\star}$, and
\begin{enumerate}
\item[(i)] if $a_\star=0$, then $\widetilde{\varrho}^{A_\star}_{U_\star}(\theta)=0$
for $\theta\in\big\{i\e^{i\delta_\star/2},-i\e^{i\delta_\star/2}\big\}$ and
$\widetilde{\varrho}^{A_\star}_{U_\star}(\theta)=\infty$ otherwise,
\item[(ii)] if $a_\star\in(0,1)$, then
$\widetilde{\varrho}^{A_\star}_{U_\star}(\theta)>0$ for
$\theta\in\Int\big(\sigma(U_\star)\big)$,
$\widetilde{\varrho}^{A_\star}_{U_\star}(\theta)=0$ for
$\theta\in\partial\sigma(U_\star)$, and
$\widetilde{\varrho}^{A_\star}_{U_\star}(\theta)=\infty$ otherwise,
\item[(iii)] if $a_\star=1$, then $\widetilde{\varrho}^{A_\star}_{U_\star}(\theta)=1$
for all $\theta\in\T$.
\end{enumerate}
\item[(c)]
\begin{enumerate}
\item[(i)] If $a_\star\in(0,1)$, then $U_\star$ has purely absolutely continuous
spectrum
$$
\sigma(U_\star)
=\sigma_{\rm ac}(U_\star)
=\big\{\e^{i\gamma}\mid
\gamma\in[\delta_\star/2+\theta_\star,\pi+\delta_\star/2-\theta_\star]
\cup[\pi+\delta_\star/2+\theta_\star,2\pi+\delta_\star/2-\theta_\star]\big\}.
$$
\item[(ii)] If $a_\star=1$, then $U_\star$ has purely absolutely continuous
spectrum $\sigma(U_\star)=\sigma_{\rm ac}(U_\star)=\T$.
\end{enumerate}
\end{enumerate}
\end{Proposition}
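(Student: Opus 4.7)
For part (a), the plan is to identify $U_\star^{-1}[A_\star,U_\star]$ with $V_\star^2$ via a form computation on $\H_{\rm fin}$. Writing $A_\star=\tfrac{1}{2}(X_\star V_\star+V_\star X_\star)$, the commutativity of $U_\star$ and $V_\star$ from Proposition \ref{prop_tech}(b) together with the form identity $[X_\star,U_\star]=U_\star V_\star$ from Proposition \ref{prop_tech}(c) will yield $[A_\star,U_\star]=\tfrac12\bigl(V_\star[X_\star,U_\star]+[X_\star,U_\star]V_\star\bigr)=U_\star V_\star^2$ in the sense of forms on $\H_{\rm fin}$. Since $V_\star^2\in\B(\H)$ and $\H_{\rm fin}$ is a core for $A_\star$, this form estimate extends by continuity to $\dom(A_\star)$ and gives $U_\star\in C^1(A_\star)$ with the stated identity.

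For part (b), the key observation is that $V_\star^2$ commutes with $U_\star$ by Proposition \ref{prop_tech}(b), hence with every spectral projection $E^{U_\star}(\theta;\varepsilon)$. Therefore $E^{U_\star}(\theta;\varepsilon)V_\star^2E^{U_\star}(\theta;\varepsilon)=V_\star^2E^{U_\star}(\theta;\varepsilon)$, and the inequality defining $\widetilde\varrho^{A_\star}_{U_\star}$ reduces in the Fourier representation to a pointwise comparison of $v_{\star,j}(k)^2$ against a constant on the preimages of the arcs $\Theta(\theta;\varepsilon)$ under $k\mapsto\lambda_{\star,j}(k)$. Case (i) is immediate from $V_\star=0$. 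In case (ii), I would use the explicit formula $v_{\star,j}(k)^2=\varsigma_\star(k)^2/\eta_\star(k)^2$ to show that its zero set coincides with the critical set of $\lambda_{\star,j}$, which in turn maps onto $\partial\sigma(U_\star)$; on compact subsets of $\Int(\sigma(U_\star))$ a strictly positive lower bound for $v_{\star,j}^2$ then follows by continuity, while on any neighborhood of a boundary point the infimum of $v_{\star,j}^2$ is zero. Case (iii) is immediate from $V_\star^2=1$. The equality $\varrho^{A_\star}_{U_\star}=\widetilde\varrho^{A_\star}_{U_\star}$ then follows from Lemma \ref{lemma_properties}(d): in cases (ii)-(iii) there are no eigenvalues of $U_\star$ at all, while in case (i) the eigenvalues satisfy $\widetilde\varrho^{A_\star}_{U_\star}=0$, so the exceptional clause of that lemma never triggers.

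For part (c), the plan is to invoke Theorem \ref{thm_spec_U}. First I would upgrade (a) to $U_\star\in C^2(A_\star)\subset C^{1+0}(A_\star)$ by computing the double commutator $[A_\star,U_\star V_\star^2]$: using $[X_\star,V_\star]=-iH_\star$ from Proposition \ref{prop_tech}(a) together with the mutual commutativity of $U_\star$, $V_\star$, $H_\star$ from Proposition \ref{prop_tech}(b), this reduces to a bounded expression involving $U_\star$, $V_\star$ and $H_\star$. Then for any open $\Theta$ whose closure lies in $\Int(\sigma(U_\star))$ in case (i), or in $\T$ in case (ii), part (b) produces a number $a>0$ such that $E^{U_\star}(\Theta)V_\star^2E^{U_\star}(\Theta)\ge a\;\!E^{U_\star}(\Theta)$, i.e. a strict Mourre estimate with $K=0$. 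Theorem \ref{thm_spec_U} then excludes singular continuous spectrum in $\Theta$ and forces the eigenvalues in $\Theta$ to be finite in number; combined with $\sigma_{\rm p}(U_\star)=\varnothing$ from Lemma \ref{lemma_spectrum_U_star}(b)-(c) and the finiteness of $\partial\sigma(U_\star)$, a countable exhaustion of $\Int(\sigma(U_\star))$ by such arcs $\Theta$ delivers the pure absolute continuity claim.

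The main obstacle I anticipate is the case (ii) analysis in part (b): carefully matching the zero set of the trigonometric function $v_{\star,j}^2$ with the preimages of $\partial\sigma(U_\star)$ under the eigenvalue maps, and converting the pointwise positivity of $v_{\star,j}^2$ away from those zeros into the operator-level lower bound required to compute $\widetilde\varrho^{A_\star}_{U_\star}$ at every $\theta\in\T$.
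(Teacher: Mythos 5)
Your proposal is correct and follows essentially the same route as the paper in all three parts: part (a) is the same form computation on $\H_{\rm fin}$ using Proposition \ref{prop_tech}(b)--(c) and the core property of $A_\star$; part (b) passes to the fibered Fourier picture exactly as the paper does, reduces to the pointwise bound $\widehat{V_\star}(k)^2 \ge \min_j v_{\star,j}(k)^2$ on the range of $E^{\widehat{U_\star}(k)}(\theta;\varepsilon)$, identifies the vanishing of $v_{\star,j}$ with the critical set of $\lambda_{\star,j}$ (hence with $\partial\sigma(U_\star)$), and invokes Lemma \ref{lemma_properties}(d); and part (c) upgrades to $U_\star\in C^2(A_\star)$ using $V_\star\in C^1(A_\star)$ (which follows from Proposition \ref{prop_tech}(a)--(b) exactly as you sketch) and then applies Theorem \ref{thm_spec_U} together with $\sigma_{\rm p}(U_\star)=\varnothing$ from Lemma \ref{lemma_spectrum_U_star}. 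The only cosmetic difference is that you stress the commutation of $V_\star^2$ with $E^{U_\star}$ before fibering, whereas the paper works directly in the fibered representation; the content is identical.
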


\begin{proof}
(a) A calculation in the forme sense on $\H_{\rm fin}$ using points (b) and (c) of
Proposition \ref{prop_tech} gives
$$
[A_\star,U_\star]
=\frac12\big(V_\star[X_\star,U_\star]+[X_\star,U_\star]V_\star\big)
=U_\star V_\star^2.
$$
Since $A_\star$ is essentially self-adjoint on $\H_{\rm fin}$, this implies that
$U_\star\in C^1(A_\star)$ with $U_\star^{-1}[A_\star,U_\star]=V_\star^2$.

(b) Take $\theta\in\T$ and $\varepsilon>0$. Then, using the result of point (a) and
\eqref{def_big_V}, we obtain for almost every $k\in[0,2\pi)$
\begin{align*}
\big(\F E^{U_\star}(\theta;\varepsilon)U_\star^{-1}[A_\star,U_\star]
E^{U_\star}(\theta;\varepsilon)\F^*\big)(k)
&=\big(\F E^{U_\star}(\theta;\varepsilon)\;\!V_\star^2
E^{U_\star}(\theta;\varepsilon)\F^*\big)(k)\\
&=E^{\widehat{U_\star}(k)}(\theta;\varepsilon)\;\!\widehat{V_\star}(k)^2
E^{\widehat{U_\star}(k)}(\theta;\varepsilon)\\
&\ge\min\big\{v_{\star,1}(k)^2,v_{\star,2}(k)^2\big\}
E^{\widehat{U_\star}(k)}(\theta;\varepsilon).
\end{align*}
Then, the definition \eqref{def_small_v} of $v_{\star,j}(k)$ shows that
$v_{\star,j}(k)=0$ if and only if $\lambda_{\star,j}'(k)=0$, which occurs when
$\lambda_{\star,j}(k)\in\partial\sigma(U_\star)$. Therefore, one gets
$\varrho^{A_\star}_{U_\star}=\widetilde{\varrho}^{A_\star}_{U_\star}$ by Lemma
\ref{lemma_properties}(d), and to conclude one just has to take into account the form
of the boundary sets $\sigma(U_\star)$ given in Lemma \ref{lemma_spectrum_U_star}.

(c) We know from point (a) that $U_\star\in C^1(A_\star)$ with
$U_\star^{-1}[A_\star,U_\star]=V_\star^2$, and Proposition \ref{prop_tech}(a) implies
that $V_\star\in C^1(A_\star)$. Thus, $U_\star\in C^2(A_\star)$. Therefore, if
$a_\star\in(0,1)$, we infer from point (b.ii) and Theorem \ref{thm_spec_U} that
$U_\star$ has no singular continuous spectrum in $\Int\big(\sigma(U_\star)\big)$.
This, together with Lemma \ref{lemma_spectrum_U_star}(b), implies the claim in the
case $a_\star\in(0,1)$. The claim in the case $a_\star=1$ is proved in a similar way.
\end{proof}
\subsection{Mourre estimate for $U$}\label{Sec_Mourre}

In this section, we use the Mourre estimate for the asymptotic operators $U_\ell$ and
$U_{\rm r}$ to derive a Mourre estimate for $U$. To achieve this, we apply the
abstract construction introduced in Section \ref{Sec_two_Hilbert}, starting by
choosing $\H_0:=\H\oplus\H$ as second Hilbert space and $U_0:=U_\ell \oplus U_{\rm r}$
as second unitary operator in $\H_0$.

The spectral properties of $U_0$ are obtained as a consequence of Lemma
\ref{lemma_spectrum_U_star}(a), Proposition \ref{prop_functions_rho}(c) and the direct
sum decomposition of $U_0:$

\begin{Lemma}[Spectrum of $U_0$]\label{lemma_spec_U_0}
One has $\sigma(U_0)=\sigma(U_\ell)\cup\sigma(U_{\rm r})$ and
$\sigma_{\rm sc}(U_0)=\varnothing$. Furthermore,
\begin{enumerate}
\item[(a)] if $a_\ell=a_{\rm r}=0$, then $U_0$ has pure point spectrum
$$
\sigma(U_0)
=\sigma_{\rm p}(U_0)
=\sigma_{\rm p}(U_\ell)\cup\sigma_{\rm p}(U_{\rm r})
=\big\{i\e^{i\delta_\ell/2},-i\e^{i\delta_\ell/2},i\e^{i\delta_{\rm r}/2},
-i\e^{i\delta_{\rm r}/2}\big\}
$$
with each point an eigenvalue of $U_0$ of infinite multiplicity,
\item[(b)] if $a_\ell=0$ and $a_{\rm r}\in(0,1]$, then
$\sigma_{\rm ac}(U_0)=\sigma_{\rm ac}(U_{\rm r})$ with $\sigma_{\rm ac}(U_{\rm r})$ as
in Proposition \ref{prop_functions_rho}(c), and
$$
\sigma_{\rm p}(U_0)
=\sigma_{\rm p}(U_\ell)
=\big\{i\e^{i\delta_\ell/2},-i\e^{i\delta_\ell/2}\big\}
$$
with each point an eigenvalue of $U_0$ of infinite multiplicity,
\item[(c)] if $a_\ell\in(0,1]$ and $a_{\rm r}=0$, then
$\sigma_{\rm ac}(U_0)=\sigma_{\rm ac}(U_\ell)$ with $\sigma_{\rm ac}(U_\ell)$ as in
Proposition \ref{prop_functions_rho}(c), and
$$
\sigma_{\rm p}(U_0)
=\sigma_{\rm p}(U_{\rm r})
=\big\{i\e^{i\delta_{\rm r}/2},-i\e^{i\delta_{\rm r}/2}\big\}
$$
with each point an eigenvalue of $U_0$ of infinite multiplicity,
\item[(d)] if $a_\ell,a_{\rm r}\in(0,1]$, then $U_0$ has purely absolutely continuous
spectrum
$$
\sigma(U_0)
=\sigma_{\rm ac}(U_0)
=\sigma_{\rm ac}(U_\ell)\cup\sigma_{\rm ac}(U_{\rm r})
$$
with $\sigma_{\rm ac}(U_\ell)$ and $\sigma_{\rm ac}(U_{\rm r})$ as in Proposition \ref{prop_functions_rho}(c).
\end{enumerate}
\end{Lemma}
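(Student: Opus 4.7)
The plan is to exploit the direct sum structure $U_0 = U_\ell \oplus U_{\rm r}$ on $\H_0 = \H \oplus \H$ and reduce everything to the spectral properties of the asymptotic operators $U_\ell$ and $U_{\rm r}$ already established in Section \ref{Sec_asymptotic}. Specifically, for an orthogonal direct sum of unitary operators, the spectrum decomposes as
$$
\sigma(U_0) = \sigma(U_\ell) \cup \sigma(U_{\rm r}),
$$
and this decomposition respects the Lebesgue decomposition, i.e.
$$
\sigma_{\rm p}(U_0) = \sigma_{\rm p}(U_\ell) \cup \sigma_{\rm p}(U_{\rm r}), \quad
\sigma_{\rm sc}(U_0) = \sigma_{\rm sc}(U_\ell) \cup \sigma_{\rm sc}(U_{\rm r}), \quad
\sigma_{\rm ac}(U_0) = \sigma_{\rm ac}(U_\ell) \cup \sigma_{\rm ac}(U_{\rm r}).
$$
This follows from the fact that the spectral projections $\widetilde E^{U_0}_\bullet$ split as $\widetilde E^{U_\ell}_\bullet \oplus \widetilde E^{U_{\rm r}}_\bullet$ for $\bullet \in \{{\rm p},{\rm sc},{\rm ac}\}$, which is a standard direct sum computation on the Cayley transforms.

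Next, I would invoke Lemma \ref{lemma_spectrum_U_star} together with Proposition \ref{prop_functions_rho}(c) to identify each component of $U_\star$: if $a_\star = 0$, then $U_\star$ has pure point spectrum $\{i\e^{i\delta_\star/2}, -i\e^{i\delta_\star/2}\}$ with each eigenvalue of infinite multiplicity and no continuous spectrum; if $a_\star \in (0,1]$, then $U_\star$ is purely absolutely continuous with the spectrum described in Proposition \ref{prop_functions_rho}(c). In particular, $\sigma_{\rm sc}(U_\ell) = \sigma_{\rm sc}(U_{\rm r}) = \varnothing$ in every case, which immediately gives $\sigma_{\rm sc}(U_0) = \varnothing$.

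The four cases (a)--(d) of the lemma are then obtained by bookkeeping on the values of $(a_\ell, a_{\rm r})$: in case (a) both summands contribute pure point spectrum, in cases (b) and (c) one summand contributes pure point spectrum and the other absolutely continuous spectrum, and in case (d) both summands are purely absolutely continuous. The infinite multiplicity of the eigenvalues of $U_0$ in cases (a)--(c) is inherited from the infinite multiplicity of the eigenvalues of $U_\star$ stated in Lemma \ref{lemma_spectrum_U_star}(a), since embedding $\H$ into $\H_0$ via either summand preserves the dimension of eigenspaces.

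There is no real obstacle here; the statement is essentially a mechanical consequence of the direct sum structure together with the already-established spectral analysis of $U_\star$. The only point one must be careful about is making sure that the pure point spectrum listed in cases (b) and (c) is not contaminated by embedded eigenvalues coming from the other summand, but this is automatic because the absolutely continuous component of $U_0$ comes from a summand that has been shown to have empty point spectrum in Lemma \ref{lemma_spectrum_U_star}(b)--(c).
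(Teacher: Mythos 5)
Your proof is correct and takes essentially the same route as the paper, which states (without a written-out proof) that the lemma follows from Lemma \ref{lemma_spectrum_U_star}, Proposition \ref{prop_functions_rho}(c), and the direct sum decomposition $U_0=U_\ell\oplus U_{\rm r}$. Your bookkeeping over the cases $(a_\ell,a_{\rm r})$, the observation that direct sums respect the Lebesgue decomposition of the spectral measure, and the remark that embedded eigenvalues cannot arise since $\sigma_{\rm p}(U_\star)=\varnothing$ for $a_\star\in(0,1]$, are exactly what is needed.
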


Also, as intuition suggests and as already stated in Theorem \ref{thm_essential}, the
spectrum of $U_0$ coincides with the essential spectrum of $U$, namely,
$$
\sigma_{\rm ess}(U)=\sigma(U_\ell)\cup\sigma(U_{\rm r})=\sigma(U_0).
$$

\begin{proof}[Proof of Theorem \ref{thm_essential}]
The proof is based on an argument using crossed product $C^*$-algebras inspired from
\cite{GI02,Man02}.

Let $\A$ be the algebra of functions $\Z\to\B(\C^2)$ admitting limits at $\pm\infty$,
and let $\A_0$ be the ideal of $\A$ consisting in functions $\Z\to\B(\C^2)$ vanishing
at $\pm\infty$. Since $\A$ is equipped with an action of $\Z$ by translation, namely,
$$
\big(T_y\varphi\big)(x):=\varphi(x+y),\quad x,y\in\Z,~\varphi\in\A,
$$
we can consider the crossed product algebra $\A\rtimes\Z$, and the functoriality of
the crossed product implies the identities
\begin{equation}\label{eq_alg}
(\A\rtimes\Z)/(\A_0\rtimes\Z)
\cong(\A/\A_0)\rtimes\Z
=\big(\B(\C^2)\oplus\B(\C^2)\big)\rtimes\Z
=\big(\B(\C^2)\rtimes\Z\big)\oplus\big(\B(\C^2)\rtimes\Z\big),
\end{equation}
where the equality $\A/\A_0=\B(\C^2)\oplus\B(\C^2)$ is obtained by evaluation of the
functions $\varphi\in\A$ at $\pm\infty$.

Now, the algebras $\A\rtimes\Z$ and $\A_0\rtimes\Z$ can be faithfully represented in
$\H$ by mapping the elements of $\A$ and $\A_0$ to multiplication operators in $\H$
and the elements of $\Z$ to the shifts $T_z$. Writing $\mathfrak A$ and
$\mathfrak A_0$ for these representations of $\A\rtimes\Z$ and $\A_0\rtimes\Z$ in
$\H$, we can note three facts. First, $\mathfrak A_0$ is equal to the ideal of compact
operators $\K(\H)$. Secondly, the operator $U$ belongs to $\mathfrak A$, since
$$
U=SC
=T_1\begin{pmatrix}1&0\\0&0\end{pmatrix}C
+T_{-1}\begin{pmatrix}0&0\\0&1\end{pmatrix}C
$$
with $T_1,T_{-1}$ shifts and
$
\left(\begin{smallmatrix}1&0\\0&0\end{smallmatrix}\right)C,
\left(\begin{smallmatrix}0&0\\0&1\end{smallmatrix}\right)C
$
multiplication operators in $\H$. Thirdly, the essential spectrum of $U$ in
$\mathfrak A$ is equal to the spectrum of the image of $U$ in the quotient algebra
$\mathfrak A/\K(\H)=\mathfrak A/\mathfrak A_0$. These facts, together with
\eqref{eq_alg} and Lemma \ref{lemma_spec_U_0}, imply the equalities
$$
\sigma_{\rm ess}(U)
=\sigma \big(SC(-\infty)\oplus SC(+\infty)\big)
=\sigma\big(SC_\ell\oplus SC_{\rm r}\big)
=\sigma(U_\ell)\cup\sigma(U_{\rm r})
=\sigma(U_0),
$$
which prove the claim.
\end{proof}

Next, we define the identification operator $J\in\B(\H_0,\H)$ by
$$
J(\Psi_\ell,\Psi_{\rm r}):=j_\ell\;\!\Psi_\ell+j_{\rm r}\;\!\Psi_{\rm r},
\quad(\Psi_\ell,\Psi_{\rm r})\in\H_0,
$$
where
$$
j_{\rm r}(x):=
\begin{cases}
1 & \hbox{if $x\ge0$}\\
0 & \hbox{if $x\le-1$}
\end{cases}
\quad\hbox{and}\quad
j_\ell:=1-j_{\rm r}.
$$	
The adjoint operator $J^*\in\B(\H,\H_0)$ satisfies
$$
J^*\Psi=(j_\ell\;\!\Psi,j_{\rm r}\;\!\Psi),\quad\Psi\in\H.
$$
Moreover, using the same notation for the functions $j_\ell,j_{\rm r}$ and the
associated multiplication operators in $\H$, one directly gets:

\begin{Lemma}\label{lemma_J}
$J^*J=j_\ell \oplus j_{\rm r}$ is an orthogonal projection on $\H_0$, and
$JJ^*=1_\H$.
\end{Lemma}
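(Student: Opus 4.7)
The proof is a direct computation using three elementary properties of the multiplication operators $j_\ell$ and $j_{\rm r}$, which follow immediately from their definition as indicator functions of the disjoint sets $\{x\le-1\}$ and $\{x\ge 0\}$ that together cover $\Z$:
\begin{enumerate}
\item[(i)] $j_\ell^2=j_\ell$ and $j_{\rm r}^2=j_{\rm r}$ (they are orthogonal projections in $\H$),
\item[(ii)] $j_\ell\;\!j_{\rm r}=j_{\rm r}\;\!j_\ell=0$ (disjoint supports),
\item[(iii)] $j_\ell+j_{\rm r}=1_\H$ (partition of unity).
\end{enumerate}

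First I would compute $J^*J$. For any $(\Psi_\ell,\Psi_{\rm r})\in\H_0$,
$$
J^*J(\Psi_\ell,\Psi_{\rm r})=J^*\big(j_\ell\;\!\Psi_\ell+j_{\rm r}\;\!\Psi_{\rm r}\big)
=\big(j_\ell(j_\ell\;\!\Psi_\ell+j_{\rm r}\;\!\Psi_{\rm r}),\;\!
j_{\rm r}(j_\ell\;\!\Psi_\ell+j_{\rm r}\;\!\Psi_{\rm r})\big),
$$
and applying (i)--(ii) the cross terms drop and the diagonal terms reduce to $(j_\ell\;\!\Psi_\ell,\;\!j_{\rm r}\;\!\Psi_{\rm r})$, showing that $J^*J=j_\ell\oplus j_{\rm r}$. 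The fact that this is an orthogonal projection in $\H_0$ is then automatic since $j_\ell,j_{\rm r}$ are themselves orthogonal projections in $\H$ by (i).

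Next I would compute $JJ^*$. For any $\Psi\in\H$,
$$
JJ^*\Psi=J(j_\ell\;\!\Psi,j_{\rm r}\;\!\Psi)=j_\ell^2\;\!\Psi+j_{\rm r}^2\;\!\Psi
=(j_\ell+j_{\rm r})\Psi=\Psi,
$$
using (i) in the third equality and (iii) in the last. Hence $JJ^*=1_\H$.

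There is no real obstacle; the statement is a bookkeeping lemma verifying that the pair $(j_\ell,j_{\rm r})$ induces a partition-of-unity-type identification between $\H_0=\H\oplus\H$ and $\H$. Its content is precisely what is needed so that Assumption (iv) of Theorem \ref{thm_fonctionrho}, namely $\eta(U)(JJ^*-1)\eta(U)\in\K(\H)$, becomes trivial (the difference is zero), while $J^*J$ being a projection will be used when transporting commutator information from $U_0$ to $U$ through $J$.
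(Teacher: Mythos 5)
Your proof is correct and is exactly the direct computation the paper has in mind: the paper states this lemma as following immediately from the definitions of $J$, $J^*$, $j_\ell$, $j_{\rm r}$ without writing out a proof, and your three observations (idempotence of $j_\star$, disjointness of supports, partition of unity) together with the two displayed computations are precisely the unwritten verification. Your closing remark about the role of the lemma — making Assumption (iv) of Theorem \ref{thm_fonctionrho} trivially satisfied — is also the correct reading of how it is used in Proposition \ref{prop_Mourre}.
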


The first result of the next lemma is an analogue of Proposition
\ref{prop_functions_rho}(a) in the Hilbert space $\H_0$. To state it, we need to
introduce the operator $A_0:=A_\ell\oplus A_{\rm r}$ (which will be used as a
conjugate operator for $U_0$) and the operator $V_0:=V_\ell\oplus V_{\rm r}$.

\begin{Lemma}\label{lemma_B_compact}
\begin{enumerate}
\item[(a)] $U_0\in C^1(A_0)$ with $U_0^{-1}[A_0, U_0]=V_0^2$.
\item[(b)] $B:=JU_0-UJ\in\K(\H_0,\H)$ and $B_*:=JU_0^*-U^*J\in\K(\H_0,\H)$.
\end{enumerate}
\end{Lemma}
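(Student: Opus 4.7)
For part (a), the idea is simply to transport Proposition \ref{prop_functions_rho}(a) through the direct sum decomposition. Indeed, $U_0=U_\ell\oplus U_{\rm r}$, $A_0=A_\ell\oplus A_{\rm r}$, $V_0=V_\ell\oplus V_{\rm r}$, and from Proposition \ref{prop_functions_rho}(a) each summand satisfies $U_\star\in C^1(A_\star)$ with $U_\star^{-1}[A_\star,U_\star]=V_\star^2$. A standard check (the map $t\mapsto\e^{-itA_0}U_0\e^{itA_0}$ is the direct sum of the two corresponding strongly $C^1$ maps) then gives $U_0\in C^1(A_0)$, and the commutator is computed summand-wise to equal $V_\ell^2\oplus V_{\rm r}^2=V_0^2$.

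For part (b), the plan is to expand $B=JU_0-UJ$ using $U_\star=SC_\star$, $U=SC$ and the explicit form of $J$, and to rewrite the result as a sum of two types of terms that are each easily seen to be compact. Concretely, for $(\Psi_\ell,\Psi_{\rm r})\in\H_0$, adding and subtracting $Sj_\star C_\star\Psi_\star$ gives
\begin{equation*}
B(\Psi_\ell,\Psi_{\rm r})
=\sum_{\star\in\{\ell,{\rm r}\}}\bigl([j_\star,S]\;\!C_\star\Psi_\star
+S\;\!j_\star(C_\star-C)\Psi_\star\bigr).
\end{equation*}
For the first kind of term, a direct computation shows that $[j_\star,S]$ acts nontrivially only on components at $x=0$ and $x=-1$, so $[j_\star,S]C_\star$ is finite rank. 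For the second kind, $j_\star(C_\star-C)$ is a multiplication operator on $\ell^2(\Z,\C^2)$ by a matrix-valued function whose operator norm decays at infinity by Assumption \ref{ass_short} (vanishing on the side where $j_\star=0$, and bounded by $\kappa_\star|x|^{-1-\varepsilon_\star}$ on the other side). Any multiplication operator on $\ell^2(\Z,\C^2)$ by a function vanishing at infinity is compact (approximation in norm by multiplication by truncations, which are finite rank), so $Sj_\star(C_\star-C)$ is compact. Hence $B\in\K(\H_0,\H)$.

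The same strategy applies to $B_*=JU_0^*-U^*J$. Using $U_\star^*=C_\star^*S^*$, $U^*=C^*S^*$, and the fact that the multiplication operators $j_\star$ and $C^*$ commute, one arrives at
\begin{equation*}
B_*(\Psi_\ell,\Psi_{\rm r})
=\sum_{\star\in\{\ell,{\rm r}\}}\bigl(j_\star(C_\star^*-C^*)S^*\Psi_\star
+C^*[j_\star,S^*]\Psi_\star\bigr),
\end{equation*}
and the same two arguments (finite rank for the commutator with $S^*$, compactness from Assumption \ref{ass_short} for the multiplicative piece) yield $B_*\in\K(\H_0,\H)$. The only real obstacle is the algebraic bookkeeping in the rewriting above; both parts are otherwise routine given Proposition \ref{prop_functions_rho}(a) and Assumption \ref{ass_short}.
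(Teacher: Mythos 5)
Your proposal is correct and takes essentially the same route as the paper: part (a) by transporting Proposition \ref{prop_functions_rho}(a) through the direct sum, and part (b) by rewriting $B$ (and $B_*$) as a sum of $[j_\star,S]C_\star$ (finite rank) and $Sj_\star(C_\star-C)$ (compact multiplication by a function vanishing at infinity, via Assumption \ref{ass_short}). The paper's formula \eqref{eq_B} is the same decomposition up to the trivial rewriting $Sj_\star(C_\star-C)=-S(C-C_\star)j_\star$, and your explicit treatment of $B_*$ matches the expansion the paper records in the proof of Lemma \ref{lemma_last}.
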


\begin{proof}
The proof of point (a) is similar to the proof of Proposition
\ref{prop_functions_rho}(a); one just has to replace the operators
$U_\star,A_\star,V_\star$ in $\H$ by the operators $U_0,A_0,V_0$ in $\H_0$. For point
(b), a direct computation with $(\Psi_\ell,\Psi_{\rm r})\in\H_0$ gives
\begin{align}
B (\Psi_\ell,\Psi_{\rm r})
&=\big(j_\ell\;\!U_\ell\Psi_\ell+j_{\rm r}\;\!U_{\rm r}\Psi_{\rm r}\big)
-U\big(j_\ell\Psi_\ell+j_{\rm r}\Psi_{\rm r}\big)\nonumber\\
&=\big([j_\ell,U_\ell]-(U-U_\ell)\;\!j_\ell\big)\Psi_\ell
+\big([j_{\rm r},U_{\rm r}]-(U-U_{\rm r})\;\!j_{\rm r}\big)\Psi_{\rm r}\nonumber\\
&=\big([j_\ell,S]C_\ell-S(C-C_\ell)\;\!j_\ell\big)\Psi_\ell
+\big([j_{\rm r},S]C_{\rm r}-S(C-C_{\rm r})\;\!j_{\rm r}\big)\Psi_{\rm r}.\label{eq_B}
\end{align}
Since we have $[j_\star,S]\in\K(\H)$ and $(C-C_\star)\;\!j_\star\in\K(\H)$ as a
consequence of Assumption \ref{ass_short}, it follows that $B\in\K(\H_0,\H)$. The
inclusion $B_*\in\K(\H_0,\H)$ is proved in a similar way.
\end{proof}

The next step is to define a conjugate operator $A$ for $U$ by using the conjugate
operator $A_0$ for $U_0$. For this, we consider the operator $JA_0J^*$ which is
well-defined and symmetric on $\H_{\rm fin}$. We have the equality
\begin{equation}\label{eq_J_A_0_J_star}
JA_0J^*=j_\ell\;\!A_\ell\;\!j_\ell+j_{\rm r}\;\!A_{\rm r}\;\!j_{\rm r}
\quad\hbox{on}\quad\H_{\rm fin},
\end{equation}
and $JA_0J^*$ is essentially self-adjoint on $\H_{\rm fin}:$

\begin{Lemma}[Conjugate operator for $U$]\label{lemma_A_self}
The operator $JA_0J^*$ is essentially self-adjoint on $\H_{\rm fin}$, with
corresponding self-adjoint extension denoted by $A$.
\end{Lemma}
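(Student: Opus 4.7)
The plan is to apply Nelson's commutator theorem \cite[Thm.~X.37]{RS2} to the symmetric operator $JA_0J^*$ on $\H_{\rm fin}$, in the same spirit as the proof of Lemma \ref{lemma_X}. A natural comparison operator is $N:=Q^2+1$, where $Q$ denotes the position operator in $\H$ defined by $(Q\Psi)(x):=x\;\!\Psi(x)$ on its maximal domain; this $N$ is essentially self-adjoint on $\H_{\rm fin}$, since $\H_{\rm fin}$ is a core for any multiplication operator on $\H$, and the symmetry of $JA_0J^*$ on $\H_{\rm fin}$ is immediate from \eqref{eq_J_A_0_J_star}.

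To verify the domination bound $\|JA_0J^*\Psi\|_\H\le c\;\!\|N\Psi\|_\H$, I would first apply $\F^*$ to the defining formula for $\widehat{X_\star}$ in Lemma \ref{lemma_X} and use $\F^*P\F=Q$ to obtain a decomposition
$$
X_\star=M_\star Q+K_\star,
$$
with $M_\star,K_\star\in\B(\H)$ Fourier multipliers whose $C^\infty$ matrix-valued symbols on $[0,2\pi)$ are $-\sum_j|u_{\star,j}\rangle\langle u_{\star,j}|$ and $i\sum_j|u_{\star,j}\rangle\langle u_{\star,j}'|$, respectively. Combined with the boundedness of $V_\star$, this decomposition yields $\|A_\star\Phi\|_\H\le c\;\!(\|Q\Phi\|_\H+\|\Phi\|_\H)\le c'\;\!\|N\Phi\|_\H$ for $\Phi\in\H_{\rm fin}$, and the first Nelson hypothesis then follows from \eqref{eq_J_A_0_J_star} together with the fact that $j_\ell,j_{\rm r}$ are contractions on $\H$ commuting with $Q$.

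For the commutator estimate, the commutation $[j_\star,Q]=0$ gives, in the form sense on $\H_{\rm fin}$,
$$
[JA_0J^*,N]=j_\ell[A_\ell,Q^2]j_\ell+j_{\rm r}[A_{\rm r},Q^2]j_{\rm r},
$$
so the task reduces to showing that $|\langle\Phi,[A_\star,Q^2]\Phi\rangle_\H|\le c\;\!\langle\Phi,N\Phi\rangle_\H$ on $\H_{\rm fin}$. Writing $A_\star=\frac12(X_\star V_\star+V_\star X_\star)$ and using $X_\star=M_\star Q+K_\star$, one can expand $[A_\star,Q^2]$ as a finite sum of operators of the shape $QBQ$, $QB$, $BQ$ and $B$ with $B\in\B(\H)$. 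Each such term is controlled by $N$ in quadratic form on $\H_{\rm fin}$, provided that $[M_\star,Q]$, $[K_\star,Q]$ and $[V_\star,Q]$ are bounded; this in turn follows from the smoothness of their symbols on $[0,2\pi)$.

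The main obstacle will be this last commutator computation: although each individual step is routine bookkeeping with commutators between Fourier multipliers and $Q$, both $X_\star$ and $V_\star$ are nonlocal in position, so one must carefully unwind the decomposition of $A_\star$ and keep track of all commutators in order to ensure that no term with an uncontrolled factor of $Q$ survives. Once these estimates are in place, Nelson's theorem yields the essential self-adjointness of $JA_0J^*$ on $\H_{\rm fin}$, and the corresponding self-adjoint extension is the operator $A$.
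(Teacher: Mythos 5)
Your proposal is correct and takes essentially the same approach as the paper: Nelson's commutator theorem with the comparison operator $N$ (which equals $Q^2+1$ in your position-space formulation and $P^2+1$ after conjugation by $\F$ in the paper's), with the key commutation $[j_\star,Q]=0$ (equivalently $[\widehat{j_\star},P]=0$ on $\dom(P)$) allowing the cutoffs to be pulled through $N$. Your decomposition $X_\star=M_\star Q+K_\star$ is precisely the $\F^*$-conjugate of the paper's explicit formula for $\F JA_0J^*\F^*$ as a first-order operator in $P$ with smooth bounded coefficients, so the required estimates coincide.
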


\begin{proof}
The operator $\widehat{j_\star}:=\F j_\star\F^*\in\B(\KK)$ satisfies
$\widehat{j_\star}\;\!\dom(P)\subset\dom(P)$ and $[\widehat{j_\star},P]=0$ on
$\dom(P)$. Therefore, we have the following equalities on $\F\H_{\rm fin}$
\begin{align*}
\F j_\star A_\star\;\!j_\star\F^*
&=\tfrac12\F j_\star\big(X_\star V_\star+V_\star X_\star\big)\;\!j_\star\F^*\\
&=\tfrac12\;\!\widehat{j_\star}\big(\widehat{X_\star}\widehat{V_\star}
+\widehat{V_\star}\widehat{X_\star}\big)\;\!\widehat{j_\star}\\
&=\widehat{j_\star}\big(\widehat{V_\star}\widehat{X_\star}
-\tfrac{i}2\widehat{H_\star}\big)\;\!\widehat{j_\star}\\
&=-\sum_{j=1}^2\Big(\widehat{j_\star}\big|v_{\star,j}u_{\star,j}\big\rangle
\big\langle u_{\star,j}\big|\;\!\widehat{j_\star}\;\!P
-i\;\!\widehat{j_\star}\big|v_{\star,j}u_{\star,j}\big\rangle
\big\langle u_{\star,j}'\big|\;\!\widehat{j_\star}\Big)
-\tfrac{i}2\;\!\widehat{j_\star}\widehat{H_\star}\widehat{j_\star}.
\end{align*}
which give on $\F\H_{\rm fin}$
$$
\F JA_0J^*\F^*
=-\sum_{j=1}^2\sum_{\star\in\{\ell,{\rm r}\}}
\widehat{j_\star}\big|v_{\star,j}u_{\star,j}\big\rangle
\big\langle u_{\star,j}\big|\;\!\widehat{j_\star}\;\!P
+i\sum_{j=1}^2\sum_{\star\in\{\ell,{\rm r}\}}
\widehat{j_\star}\;\!\big|v_{\star,j}u_{\star,j}\big\rangle
\big\langle u_{\star,j}'\big|\;\!\widehat{j_\star}
-\tfrac{i}2\sum_{\star\in\{\ell,{\rm r}\}}
\widehat{j_\star}\widehat{H_\star}\widehat{j_\star}.
$$
The rest of the proof consists in an application of Nelson's commutator theorem
\cite[Thm.~X.37]{RS2} with the comparison operator $N:=P^2+1$. The estimates necessary
to apply the theorem are similar to the ones mentioned in the proof of Lemma
\ref{lemma_X}. As a consequence, it follows that $\F JA_0J^*\F^*$ is essentially
self-adjoint on $\F\H_{\rm fin}$, and thus that $JA_0J^*$ is essentially self-adjoint
on $\H_{\rm fin}$.
\end{proof}

We are thus in the setup of Assumption \ref{ass_eaa} with the set $\DD=\H_{\rm fin}$.
So, the next step is to show the inclusion $U\in C^1(A)$. For this, we use Corollary
\ref{Corol_C1(A)}. Using Corollary \ref{Corol_est_supp}, we also get an additional
compacity result:

\begin{Lemma}\label{lemma_last}
$U\in C^{1}(A)$ and $JU_0^{-1}[A_0, U_0]J^*-U^{-1}[A,U]\in\K(\H)$.
\end{Lemma}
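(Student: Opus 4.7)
The plan is to deduce both claims from Corollaries \ref{Corol_C1(A)} and \ref{Corol_est_supp}, applied with the core $\DD=\H_{\rm fin}$. The two standing hypotheses of those corollaries, $U_0\in C^1(A_0)$ and Assumption \ref{ass_eaa}, have already been checked in Lemmas \ref{lemma_B_compact}(a) and \ref{lemma_A_self}. It therefore suffices to establish the single strengthened assertion that $\overline{BA_0\upharpoonright\dom(A_0)}$ and $\overline{B_*A_0\upharpoonright\dom(A_0)}$ both belong to $\K(\H_0,\H)$, since compactness supplies the boundedness needed in Corollary \ref{Corol_C1(A)} (whence $U\in C^1(A)$) and directly fulfils the hypothesis of Corollary \ref{Corol_est_supp} (whence $JU_0^{-1}[A_0,U_0]J^*-U^{-1}[A,U]\in\K(\H)$).

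For $BA_0$, the plan is to split $B$ along formula \eqref{eq_B} into summands of two types, the commutator parts $[j_\star,S]C_\star$ and the decaying parts $S(C-C_\star)j_\star$ for $\star\in\{\ell,{\rm r}\}$. A direct computation from the definition of $S$ shows that $[j_\star,S]$ is finite rank with range supported on $\{-1,0\}$, and that its adjoint likewise has support in that two-point set. Writing $[j_\star,S]C_\star=\sum_k|\varphi_k\rangle\langle\psi_k|$ with $\varphi_k,\psi_k\in\H_{\rm fin}\subset\dom(A_\star)$ (this inclusion holding because $A_\star$ is essentially self-adjoint on $\H_{\rm fin}$), the operator $[j_\star,S]C_\star A_\star\upharpoonright\dom(A_\star)$ equals $\sum_k|\varphi_k\rangle\langle A_\star C_\star^*\psi_k|$ and extends to a finite rank, hence compact, operator on $\H$.

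For the decaying part I would factor
\[
S(C-C_\star)j_\star A_\star
=S\cdot\bigl((C-C_\star)j_\star\langle Q\rangle\bigr)\cdot\bigl(\langle Q\rangle^{-1}A_\star\bigr).
\]
By Assumption \ref{ass_short}, the middle factor is a multiplication operator whose matrix-valued symbol has norm bounded by a constant times $|x|^{-\varepsilon_\star}$ on the relevant half-line and vanishes on the other, hence tends to zero at infinity; since multiplication operators on $\ell^2(\Z)$ with symbols vanishing at infinity are compact, this factor lies in $\K(\H)$. The heart of the argument is then to show $\langle Q\rangle^{-1}A_\star\in\B(\H)$. For this I would exploit the completeness relation $\sum_j|u_{\star,j}\rangle\langle u_{\star,j}|=1_\KK$ (valid since the $u_{\star,j}(k)$ form an orthonormal basis of $\C^2$), which collapses the $P$-dependent sum in the definition of $\widehat{X_\star}$ in Lemma \ref{lemma_X} to $-P$ plus a bounded remainder; translating back to direct space via $\F^*P\F=-Q$ yields $X_\star=Q+(\text{bounded})$ on $\H_{\rm fin}$. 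Combining this with the consequence $A_\star=V_\star X_\star-\tfrac{i}{2}H_\star$ of Proposition \ref{prop_tech}(a) and the boundedness of $V_\star$ and $H_\star$ gives
\[
\|A_\star\Psi\|_\H\le C\|\langle Q\rangle\Psi\|_\H,\quad\Psi\in\H_{\rm fin},
\]
which extends by closure to $\dom(A_\star)$ (as $\H_{\rm fin}$ is a core for $A_\star$) and yields the required boundedness of $\langle Q\rangle^{-1}A_\star$.

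The treatment of $B_*A_0$ is entirely parallel, obtained by replacing $U_\star$ and $C_\star$ by their adjoints throughout; no new ingredients are needed. The main technical hurdle is thus the bound $\|A_\star\Psi\|_\H\le C\|\langle Q\rangle\Psi\|_\H$, namely the assertion that the unbounded part of $A_\star$ grows no faster than the position operator, which hinges on the completeness relation for the smooth orthonormal eigenvectors $u_{\star,j}(k)$ of $\widehat{U_\star}(k)$.
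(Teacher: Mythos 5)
Your proposal is correct and follows essentially the same route as the paper: reduce to Corollaries \ref{Corol_C1(A)}--\ref{Corol_est_supp} with the core $\DD=\H_{\rm fin}$, split $B$ via \eqref{eq_B} into a finite-rank commutator piece $[j_\star,S]C_\star$ and a decaying piece $S(C-C_\star)j_\star$, and control the growth of $A_\star$ by the position operator. Your observation that the completeness relation $\sum_j|u_{\star,j}\rangle\langle u_{\star,j}|=1_\KK$ collapses $\widehat{X_\star}$ to $-P$ plus a bounded remainder, so $X_\star=Q+(\text{bounded})$, is a nice shortcut the paper does not make explicit; the paper instead records the equivalent formula \eqref{eq_X_onleft}, which packages the same fact with $Q$ placed on the \emph{left}. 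Proving that both $\overline{BA_0\upharpoonright\dom(A_0)}$ and $\overline{B_*A_0\upharpoonright\dom(A_0)}$ are compact is harmless (the corollaries only need boundedness of the first).

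Two small points deserve tightening. First, the estimate $\|A_\star\Psi\|_\H\le C\|\langle Q\rangle\Psi\|_\H$ on $\H_{\rm fin}$ extends by closure to $\dom(\langle Q\rangle)$ (not $\dom(A_\star)$ as you wrote), yielding $\dom(\langle Q\rangle)\subset\dom(A_\star)$ and hence $A_\star\langle Q\rangle^{-1}\in\B(\H)$; passing to $\langle Q\rangle^{-1}A_\star\in\B(\H)$ then requires a short duality argument using the self-adjointness of $A_\star$ and $\langle Q\rangle^{-1}$, which you omit. You could avoid the duality entirely by using the other ordering $A_\star=X_\star V_\star+\tfrac{i}{2}H_\star=QV_\star+(\text{bounded})$, putting $Q$ on the left and making $\langle Q\rangle^{-1}A_\star$ bounded directly. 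Second, the claim that $B_*A_0$ is ``entirely parallel'' elides a genuine extra step: the decaying summand of $B_*$ is $(C^*-C_\star^*)j_\star S^*$, so after your factorization one must commute $S^*$ past $\langle Q\rangle$ (or past $Q$) using the relation $S^*Q=QS^*+b$ with $b$ bounded — this is exactly the last observation recorded in the paper's proof and should be stated rather than subsumed under ``no new ingredients.''
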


\begin{proof}
First, we recall that $U_0\in C^1(A_0)$ due to Lemma \ref{lemma_B_compact}(a), and
that Assumption \ref{ass_eaa} holds with $\DD=\H_{\rm fin}$. Next, we note that the
expression for $B(\Psi_\ell,\Psi_{\rm r})$ with $(\Psi_\ell,\Psi_{\rm r})\in\H_0$ is
given in \eqref{eq_B}, and that
$$
B_*(\Psi_\ell,\Psi_{\rm r})
=\big(C^*[j_\ell,S^*]-(C^*-C_\ell^*)\;\!j_\ell S^*\big)\Psi_\ell
+\big(C^*[j_{\rm r},S^*]-(C^*-C_{\rm r}^*)\;\!j_{\rm r}S^*\big)\Psi_{\rm r}.
$$
Furthermore, we know from Lemma \ref{lemma_B_compact}(b) that $B,B_*\in\K(\H_0,\H)$.
In consequence, due to Corollaries \ref{Corol_C1(A)}-\ref{Corol_est_supp}, the claims
will follow if we show that $\overline{BA_0\upharpoonright\dom(A_0)}\in\B(\H_0,\H)$
and $\overline{B_*A_0\upharpoonright\dom(A_0)}\in\K(\H_0,\H)$. For this, we first note
that computations as in the proof of Lemma \ref{lemma_A_self} imply on $\H_{\rm fin}$
the equalities
\begin{align}
A_\star&=X_\star V_\star+\tfrac{i}2H_\star\nonumber\\
&=-\F^*\left\{P\sum_{j=1}^2\Big(\big|u_{\star,j}\big\rangle
\big\langle v_{\star,j}u_{\star,j}\big|+i\;\!\big|u'_{\star,j}\big\rangle
\big\langle v_{\star,j} u_{\star,j}\big|\Big)\right\}\F+\tfrac{i}2H_\star\nonumber\\
&=Q\;\!\F^*\left\{\sum_{j=1}^2\Big(\big|u_{\star,j}\big\rangle
\big\langle v_{\star,j} u_{\star,j}\big|+i\;\!\big|u'_{\star,j}\big\rangle\big\langle
v_{\star,j} u_{\star,j}\big|\Big)\right\}\F+\tfrac{i}2H_\star\label{eq_X_onleft}
\end{align}
with $Q$ the self-adjoint multiplication operator defined by
\begin{equation}\label{eq_def_Q}
\big(Q\;\!\Psi\big)(x)=x\;\!\Psi(x),
\quad x\in\Z,~\Psi\in\dom(Q):=\big\{\Psi\in\H\mid\|Q\;\!\Psi\|_\H<\infty\big\}.
\end{equation}
Therefore, since all the operators on the right of $Q$ in \eqref{eq_X_onleft} are
bounded, it is sufficient to show that
$$
\overline{B(Q\oplus Q)\upharpoonright\dom(Q)\oplus\dom(Q)}\in\B(\H_0,\H)
\quad\hbox{and}\quad
\overline{B_*(Q\oplus Q)\upharpoonright\dom(Q)\oplus\dom(Q)}\in\K(\H_0,\H).
$$
However, this can be deduced from the Assumption \ref{ass_short} once the following
observations are made: $\big[j_\star,S\big]=Sm_\star$ with $m_\star:\Z\to\B(\C^2)$ a
function with compact support, $[j_\star,S^*]=S^*n_\star$ with $n_\star:\Z\to\B(\C^2)$
a function with compact support, and $S^*Q=QS^*+b$ with
$b\in\linf\big(\Z,\B(\C^2)\big)$.
\end{proof}

We now recall that the set
$$
\tau(U):=\partial\sigma(U_\ell)\cup\partial\sigma(U_{\rm r}).
$$
has been introduced in Section \ref{Sec_model}. Due to Lemma
\ref{lemma_spectrum_U_star}, $\tau(U)$ contains at most $8$ values. Moreover, since we
show in the next proposition that a Mourre estimate holds on the set
$\{\sigma(U_\ell)\cup\sigma(U_{\rm r})\}\setminus\tau(U)$, it is natural to interpret
$\tau(U)$ as the set of thresholds in the spectrum of $U$.

\begin{Proposition}[Mourre estimate for $U$]\label{prop_Mourre}
We have $\widetilde\varrho_U^A\ge \widetilde\varrho_{U_0}^{A_0}$ with
$
\widetilde\varrho_{U_0}^{A_0}
=\min\big\{\widetilde\varrho_{U_\ell}^{A_\ell},
\widetilde\varrho_{U_{\rm r}}^{A_{\rm r}}\big\}
$
and $\widetilde\varrho_{U_\ell}^{A_\ell},\widetilde\varrho_{U_{\rm r}}^{A_{\rm r}}$
given in Proposition \ref{prop_functions_rho}. In particular,
$\widetilde\varrho_{U_0}^{A_0}(\theta)>0$ if
$\theta\in\{\sigma(U_\ell)\cup\sigma(U_{\rm r})\}\setminus\tau(U)$.
\end{Proposition}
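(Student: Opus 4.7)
The plan is to invoke Theorem \ref{thm_fonctionrho} directly, since all the technical work has been packaged into the preceding lemmas. The role of this proposition is essentially to assemble those pieces and then unwind the direct sum structure of $U_0$ and $A_0$.

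First I would verify the four hypotheses of Theorem \ref{thm_fonctionrho}. Hypothesis (i) holds since $U_0 \in C^1(A_0)$ by Lemma \ref{lemma_B_compact}(a) and $U \in C^1(A)$ by Lemma \ref{lemma_last}. Hypothesis (ii) is precisely the second statement of Lemma \ref{lemma_last}. Hypothesis (iii) is the first inclusion in Lemma \ref{lemma_B_compact}(b). Finally, hypothesis (iv) is trivially satisfied because Lemma \ref{lemma_J} gives $JJ^* = 1_\H$, so $\eta(U)(JJ^* - 1)\eta(U) = 0 \in \K(\H)$ for every $\eta \in C(\C,\R)$. Theorem \ref{thm_fonctionrho} then yields $\widetilde\varrho_U^A \ge \widetilde\varrho_{U_0}^{A_0}$.

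Next I would identify $\widetilde\varrho_{U_0}^{A_0}$ using the decomposition $\H_0 = \H \oplus \H$, $U_0 = U_\ell \oplus U_{\rm r}$, $A_0 = A_\ell \oplus A_{\rm r}$. The spectral projection decomposes as $E^{U_0}(\theta;\varepsilon) = E^{U_\ell}(\theta;\varepsilon) \oplus E^{U_{\rm r}}(\theta;\varepsilon)$, and by Lemma \ref{lemma_B_compact}(a) together with Proposition \ref{prop_functions_rho}(a), the commutator splits as
$$
U_0^{-1}[A_0,U_0] = V_0^2 = V_\ell^2 \oplus V_{\rm r}^2
= U_\ell^{-1}[A_\ell,U_\ell] \oplus U_{\rm r}^{-1}[A_{\rm r},U_{\rm r}].
$$
Using the equivalent definition \eqref{equivalent_def} in terms of functions $\eta \in C^\infty(\T,\R)$ with $\eta(\theta) \ne 0$, and noting that $\eta(U_0) = \eta(U_\ell) \oplus \eta(U_{\rm r})$ together with the fact that a direct sum of operators is bounded below by $a$ up to compacts if and only if each summand is, one immediately obtains
$$
\widetilde\varrho_{U_0}^{A_0}(\theta) = \min\bigl\{\widetilde\varrho_{U_\ell}^{A_\ell}(\theta),\;\!\widetilde\varrho_{U_{\rm r}}^{A_{\rm r}}(\theta)\bigr\}.
$$

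For the final positivity claim, suppose $\theta \in \{\sigma(U_\ell) \cup \sigma(U_{\rm r})\} \setminus \tau(U)$. For each $\star \in \{\ell,{\rm r}\}$, either $\theta \notin \sigma(U_\star)$, in which case $\widetilde\varrho_{U_\star}^{A_\star}(\theta) = \infty$ by Proposition \ref{prop_functions_rho}(b), or $\theta \in \sigma(U_\star) \setminus \partial\sigma(U_\star) = \Int(\sigma(U_\star))$, in which case Proposition \ref{prop_functions_rho}(b) again gives $\widetilde\varrho_{U_\star}^{A_\star}(\theta) > 0$ (the degenerate case $a_\star = 0$ poses no problem since then $\sigma(U_\star) = \partial\sigma(U_\star) \subset \tau(U)$ consists of isolated points, which are excluded). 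Taking the minimum of the two values yields $\widetilde\varrho_{U_0}^{A_0}(\theta) > 0$. There is no serious obstacle here; the main point of care is the bookkeeping in the direct sum computation and confirming that the definition \eqref{equivalent_def} is preserved under direct sums, which is immediate from the block-diagonal form of $\eta(U_0)$ and the commutator.
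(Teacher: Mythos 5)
Your proof is correct and follows essentially the same route as the paper: invoke Theorem \ref{thm_fonctionrho} with its hypotheses verified by Lemmas \ref{lemma_J}--\ref{lemma_last}, then unwind the direct-sum structure of $(U_0,A_0)$ to obtain the $\min$ formula (where the paper simply cites the self-adjoint analogue \cite[Prop.~8.3.5]{ABG96}). Your explicit verification of the four hypotheses, the block-diagonal reduction, and the case analysis via Proposition \ref{prop_functions_rho}(b) — including the observation that $a_\star=0$ forces $\sigma(U_\star)=\partial\sigma(U_\star)\subset\tau(U)$ — are all sound and just make the paper's two-sentence proof explicit.
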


\begin{proof}
The first claim follows from Theorem \ref{thm_fonctionrho}, with the assumptions of
this theorem verified in Lemmas \ref{lemma_J}-\ref{lemma_last}. The second claim
follows from Proposition \ref{prop_functions_rho} and standard results on the function
$\widetilde\varrho_{U_0}^{A_0}$ when $A_0$ and $U_0$ are direct sums of operators (see
\cite[Prop.~8.3.5]{ABG96} for a proof in the case of direct sums of self-adjoint
operators).
\end{proof}

\subsection{Spectral properties of $U$}\label{Sec_spectrum}

In order to go one step further in the study of $U$, a regularity property of $U$ with
respect to $A$ stronger than $U\in C^1(A)$ has to be established. This regularity
property will be obtained by considering first the operator $JU_0 J^*$, and then by
analysing the difference $U-JU_0J^*$. We note that $JU_0J^*$ and $U-JU_0J^*$ satisfy
the equalities
\begin{equation}\label{eq_J_U_0_J_star}
JU_0J^*=j_\ell\;\!U_\ell\;\!j_\ell+j_{\rm r}\;\!U_{\rm r}\;\!j_{\rm r}
\end{equation}
and
\begin{equation}\label{eq_difference}
U-JU_0J^*=j_\ell(U-U_\ell)\;\!j_\ell+j_{\rm r}(U-U_{\rm r})\;\!j_{\rm r}
+j_\ell\;\!U\;\!j_{\rm r}+j_{\rm r}\;\!U\;\!j_\ell.
\end{equation}

\begin{Lemma}\label{lemma_C2}
$JU_0J^*\in C^2(A)$.
\end{Lemma}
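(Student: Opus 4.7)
The plan is to prove $JU_0J^* \in C^2(A)$ by computing the first and second commutators with $A$ as sesquilinear forms on the essential self-adjointness core $\H_{\rm fin}$ and verifying in each case that the form extends to a bounded operator on $\H$. Using the explicit expressions $A\upharpoonright\H_{\rm fin} = j_\ell A_\ell j_\ell + j_{\rm r} A_{\rm r} j_{\rm r}$ from \eqref{eq_J_A_0_J_star} and $JU_0J^* = j_\ell U_\ell j_\ell + j_{\rm r} U_{\rm r} j_{\rm r}$ from \eqref{eq_J_U_0_J_star}, together with the multiplicative orthogonality $j_\ell\;\!j_{\rm r} = 0$, the first commutator splits as a sum of one-sided pieces
\[
[A, JU_0J^*]\upharpoonright\H_{\rm fin} = \sum_{\star\in\{\ell,{\rm r}\}}\big[j_\star A_\star j_\star,\;\! j_\star U_\star j_\star\big],
\]
and an elementary algebraic manipulation using $j_\star^2 = j_\star$ together with Proposition \ref{prop_functions_rho}(a) (which supplies $[A_\star, U_\star] = U_\star V_\star^2$) rewrites each summand as
\[
j_\star U_\star V_\star^2 j_\star + j_\star A_\star [j_\star, U_\star] j_\star + j_\star U_\star [A_\star, j_\star] j_\star,
\]
of which the first piece is manifestly bounded.

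The heart of the argument is showing that the two remaining ``boundary'' pieces are bounded, despite $A_\star$ being unbounded. A direct computation of $[j_\star, S]$ shows that $[j_\star, U_\star] = -[S, j_\star]\;\!C_\star$ has finite rank, with range supported on the two sites $x\in\{-1, 0\}$; in particular $[j_\star, U_\star]\;\!j_\star$ admits a finite-rank decomposition $\sum_i |\mu_i\rangle\langle\nu_i|$ with $\mu_i\in\H_{\rm fin}\subset\dom(A_\star)$, so $j_\star A_\star [j_\star, U_\star]\;\!j_\star = \sum_i |j_\star A_\star\mu_i\rangle\langle\nu_i|$ is a finite sum of rank-one operators, hence bounded. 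For the third piece, the identity $[A_\star, j_\star]\;\!j_\star = j_{\star'} A_\star j_\star$ (with $j_{\star'} := 1 - j_\star$) combined with $j_\star U_\star j_{\star'} = j_\star[U_\star, j_{\star'}]$ shows that the operator equals a finite-rank $\sum_i |\alpha_i\rangle\langle\beta_i|$ composed on the right with $A_\star j_\star$; using the self-adjointness of $A_\star$ and $\beta_i\in\H_{\rm fin}$, one rewrites this as $\sum_i|\alpha_i\rangle\langle j_\star A_\star\beta_i|$, again bounded. This establishes $JU_0J^* \in C^1(A)$ with an explicit bounded commutator.

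For $C^2(A)$ I would iterate the same scheme: writing $T := [A, JU_0J^*]$ as a sum of bounded operators of the three types above, the disjointness $j_\ell\;\!j_{\rm r} = 0$ again localizes $[A, T]\upharpoonright\H_{\rm fin}$ to a sum $\sum_\star[j_\star A_\star j_\star, T_\star]$. The additional algebraic input required is that $V_\star\in C^1(A_\star)$ with bounded commutator $[A_\star, V_\star] = -iH_\star V_\star$, which follows from Proposition \ref{prop_tech}(a)-(b) since $[X_\star, V_\star] = -iH_\star$ and $V_\star$ commutes with $H_\star$. Every term produced by the iteration is again either a product of bounded operators (commutators of $A_\star$ with bounded objects such as $V_\star^2$, $H_\star$, and $[S, j_\star]\;\!C_\star$) or a finite sum of rank-one operators $|\xi\rangle\langle\zeta|$ with $\xi,\zeta$ obtained from vectors in $\H_{\rm fin}$ by at most two applications of $A_\star$; both kinds are bounded. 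The main obstacle throughout is bookkeeping: $[A_\star, j_\star]$ is not individually bounded, and the argument works only because this commutator always appears adjacent to a finite-rank factor coming from $[S, j_\star]$ whose image lies in $\H_{\rm fin}$, allowing $A_\star$ to be absorbed onto a fixed vector via the self-adjointness trick. Tracking this finite-rank structure through the second commutator, without losing it, is the sole delicate point.
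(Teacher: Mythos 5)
Your $C^1(A)$ argument is correct and is a genuine alternative to the paper's: by exploiting that $[j_\star,U_\star]=[j_\star,S]\;\!C_\star$ is finite rank with both bras and kets in $\H_{\rm fin}$, you absorb each unbounded factor $A_\star$ onto a fixed vector in $\dom(A_\star)$, avoiding the toroidal pseudodifferential estimate the paper uses to first establish $j_\star\in C^1(A_\star)$. This is cleaner for the first commutator. The problem is the iteration to $C^2(A)$. When you commute $j_\star A_\star j_\star$ with the bulk piece $T_\star^{(1)}=j_\star U_\star V_\star^2 j_\star$, the same decomposition produces boundary terms such as $j_\star A_\star[j_\star,U_\star V_\star^2]\;\!j_\star$. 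Expanding $[j_\star,U_\star V_\star^2]=[j_\star,U_\star]V_\star^2+U_\star[j_\star,V_\star^2]$, the first summand is finite rank as before, but the second contains $[j_\star,V_\star^2]$, which is \emph{not} finite rank (and its range is not in $\H_{\rm fin}$): $V_\star^2$ is a convolution operator with an infinitely supported kernel, so $[j_\star,V_\star^2]$ is merely compact. Your proposed dichotomy — ``every term is a product of bounded operators or a finite sum of rank-one operators from $\H_{\rm fin}$'' — therefore fails precisely at these cross terms, and the ``absorption onto a fixed vector'' trick no longer applies.

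To close the gap you need a separate quantitative bound on $A_\star[j_\star,V_\star^2]$ and $[j_\star,V_\star^2]\;\!A_\star$. The paper obtains this from toroidal pseudodifferential calculus: $[V_\star^2,j_\star]$ has matrix symbol in $\M\big(2,S^{-\rho}_{\rho,0}(\T\times\Z)\big)$ for every $\rho>0$, hence $\langle Q\rangle^N[V_\star^2,j_\star]$ is bounded for all $N$, and the $Q$-type factor hidden in $A_\star$ via \eqref{eq_X_onleft} gets absorbed. An equally acceptable elementary argument would note that $\widehat{V_\star^2}$ is a $C^\infty$ matrix-valued Fourier multiplier on $\T$, so $V_\star^2$ has a rapidly decaying convolution kernel; since $j_\star(x)-j_\star(y)$ is supported where $x,y$ straddle $0$, the kernel of $[j_\star,V_\star^2]$ decays rapidly in both variables, which again gives boundedness of $\langle Q\rangle^N[j_\star,V_\star^2]$. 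Either way, this is an additional ingredient your proof does not supply, and it is exactly the delicate point you flag at the end — but it is not a finite-rank phenomenon, and as written your argument asserts something false about the structure of the second-commutator terms.
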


\begin{proof}
The proof is based on standard results from toroidal pseudodifferential calculus, as
presented for example in \cite[Chap.~4]{RT10_2}. The normalisation we use for the
Fourier transform differs from the one used in \cite{RT10_2}, but the difference is
harmless.

(i) First, we note that $\widehat{j_\star}$ is a toroidal pseudodifferential operator
on $\F\H_{\rm fin}$ with symbol in $S^0_{\rho,0}(\T\times\Z)$ for each $\rho>0$ (see
the definitions 4.1.7 and 4.1.9 of \cite{RT10_2} for details). Similarly, the equation
\eqref{eq_X_onleft} shows that $\widehat{A_\star}$ is a first order differential
operator on $\F\H_{\rm fin}$ with matrix coefficients in
$\M\big(2,C^\infty(\T)\big)\subset\M\big(2,S^0_{\rho,0}(\T\times\Z)\big)$ for each
$\rho>0$. In consequence, it follows from \cite[Thm.~4.7.10]{RT10_2} that the
commutator $\big[\widehat{j_\star},\widehat{A_\star}\big]$ on $\F\H_{\rm fin}$ is
well-defined and equal to a toroidal pseudodifferential operator with matrix
coefficients in $\M\big(2,S^{1-\rho}_{\rho,0}(\T\times\Z)\big)$ for each $\rho>0$.
This implies that $\big[\widehat{j_\star},\widehat{A_\star}\big]$ is bounded on
$\F\H_{\rm fin}$, and thus that $\widehat{j_\star}\in C^1(\widehat{A_\star})$ since
$\F\H_{\rm fin}$ is dense in $\dom(\widehat{A_\star})$. By Fourier transform, it
follows that $j_\star\in C^1(A_\star)$.

(ii) A calculation in the form sense on $\H_{\rm fin}$ using equations
\eqref{eq_J_A_0_J_star} and \eqref{eq_J_U_0_J_star} and the identities
$j_\ell\;\!j_{\rm r}=0=j_{\rm r}\;\!j_\ell$ gives
\begin{align}
\big[JU_0J^*,A\big]
&=\big[j_\ell\;\!U_\ell\;\!j_\ell,j_\ell\;\!A_\ell\;\!j_\ell\big]
+\big[j_{\rm r}\;\!U_{\rm r}\;\!j_{\rm r},j_{\rm r}\;\!A_{\rm r}\;\!j_{\rm r}\big]
\nonumber\\
&=\sum_{\star\in\{\ell,{\rm r}\}}j_\star\big(U_\star\;\!j_\star A_\star
-A_\star\;\!j_\star U_\star\big)\;\!j_\star\nonumber\\
&=\sum_{\star\in\{\ell,{\rm r}\}}j_\star\big(\big[U_\star,j_\star\big]A_\star
+\big[j_\star U_\star,A_\star\big]\big)\;\!j_\star.\label{eq_2_terms}
\end{align}
Since $j_\star U_\star\in C^1(A_\star)$ by Proposition \ref{prop_functions_rho}(a),
point (i) and \cite[Prop.~5.1.5]{ABG96}, the second term on the r.h.s. of
\eqref{eq_2_terms} belongs to $\B(\H)$. Furthermore, a calculation using the
definition of the shift operator $S$ shows that
$$
\big[U_\star,j_\star\big]
=\big[S,j_\star\big]C_\star
=B_\star m_\star
$$
with $B_\star\in\B(\H)$ and $m_\star:\Z\to\B(\C^2)$ a function with compact support.
It follows from \eqref{eq_X_onleft} that $\big[U_\star,j_\star\big]A_\star$ is bounded
on $\H_{\rm fin}$. Therefore, both terms on the r.h.s. of \eqref{eq_2_terms} are
bounded on $\H_{\rm fin}$, and thus we infer from the density of $\H_{\rm fin}$ in
$\dom(A)$ that $JU_0J^*\in C^1(A)$.

(iii) To show that $JU_0J^*\in C^2(A)$, one has to commute the r.h.s. of
\eqref{eq_2_terms} once more with $A$. Doing this in the form sense on $\H_{\rm fin}$
with the notation $\sum_{\star\in\{\ell,{\rm r}\}}j_\star D_\star\;\!j_\star$ with
$D_\star:=[U_\star,j_\star]A_\star+[j_\star U_\star,A_\star]$ for the r.h.s. of
\eqref{eq_2_terms}, one gets that $JU_0J^*\in C^2(A)$ if the operators
$[D_\star,A_\star]$, $[D_\star,j_\star]A_\star$ and $A_\star[D_\star,j_\star]$ defined
in the form sense on $\H_{\rm fin}$ extend continuously to elements of $\B(\H)$.

For the first operator, we have in the form sense on $\H_{\rm fin}$ the equalities
\begin{align}
[D_\star,A_\star]
&=\big[[U_\star,j_\star]A_\star+j_\star[U_\star,A_\star]
+[j_\star,A_\star]U_\star,A_\star\big]\nonumber\\
&=\big[[U_\star,j_\star]A_\star,A_\star\big]
+j_\star\big[[U_\star,A_\star],A_\star\big]
+[j_\star,A_\star][U_\star,A_\star]
+[j_\star,A_\star][U_\star,A_\star]
+\big[[j_\star,A_\star],A_\star\big]U_\star\;\!. \label{eq_first_com}
\end{align}
Then, simple adaptations of the arguments presented in points (i) and (ii) show that
the operators $[j_\star,A_\star],[U_\star,j_\star]\in\B(\H)$ can be multiplied in the
form sense on $\H_{\rm fin}$ by several operators $A_\star$ on the left and/or on the
right and that the resultant operators extend continuously to elements of $\B(\H)$.
Therefore, the first, the third, the fourth and the fifth terms in
\eqref{eq_first_com} extend continuously to elements of $\B(\H)$. For the second term,
we note from Propositions \ref{prop_tech}(a) and \ref{prop_functions_rho}(a) that
$U_\star,V_\star\in C^1(A_\star)$ with $[U_\star,A_\star]=-U_\star V_\star^2$. In
consequence, we have $U_\star V_\star^2\in C^1(A_\star)$ by \cite[Prop.~5.1.5]{ABG96}
and
$$
j_\star\big[[U_\star,A_\star],A_\star\big]
=-j_\star\big[U_\star V_\star^2,A_\star\big]
\in\B(\H).
$$

The proof that the operators $[D_\star,j_\star]A_\star$ and $A_\star[D_\star,j_\star]$
defined in the form sense on $\H_{\rm fin}$ extend continuously to elements of
$\B(\H)$ is similar. The only noticeable difference is the appearance of terms
$[U_\star V_\star^2,j_\star]A_\star$ and $A_\star [U_\star V_\star^2,j_\star]$.
However, by observing that $V_\star^2\in C^1(A_\star)$ and that $[V^2_\star,j_\star]$
is a toroidal pseudodifferential operator with matrix coefficients in
$\M\big(2,S^{-\rho}_{\rho,0}(\T\times\Z)\big)$ for each $\rho>0$, one infers that
$[U_\star V_\star^2,j_\star]A_\star$ and $A_\star [U_\star V_\star^2,j_\star]$ extend
continuously to elements of $\B(\H)$.
\end{proof}

In the next lemma, we prove that $U$ satisfies sufficient regularity with respect to
$A$, namely that $U\in C^{1+\varepsilon}(A)$ for some $\varepsilon\in(0,1)$. We recall
from Section \ref{Sec_one_Hilbert} that the sets $C^2(A)$, $C^{1+\varepsilon}(A)$,
$C^{1+0}(A)$ and $C^{1,1}(A)$ satisfy the continuous inclusions
$$
C^2(A)\subset C^{1+\varepsilon}(A)\subset C^{1+0}(A)\subset C^{1,1}(A).
$$

\begin{Lemma}\label{lemma_C_1_epsilon}
$U\in C^{1+\varepsilon}(A)$ for each $\varepsilon\in(0,1)$ with
$\varepsilon\le\min\{\varepsilon_\ell,\varepsilon_{\rm r}\}$.
\end{Lemma}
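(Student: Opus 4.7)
\noindent
\textbf{Plan for the proof of Lemma \ref{lemma_C_1_epsilon}.} The plan is to decompose $U = JU_0J^* + W$ with $W := U - JU_0J^*$ and to analyse each piece separately. By Lemma \ref{lemma_C2}, $JU_0J^* \in C^2(A) \subset C^{1+\varepsilon}(A)$ for every $\varepsilon \in (0,1)$, so the entire argument reduces to showing $W \in C^{1+\varepsilon}(A)$. Using the explicit formula \eqref{eq_difference}, I decompose
$$
W = W_\ell + W_{\rm r} + W_{\rm c},
\quad\hbox{where}\quad
W_\star := j_\star S(C-C_\star)j_\star \;\; (\star\in\{\ell,{\rm r}\})
\;\;\hbox{and}\;\;
W_{\rm c} := j_\ell U j_{\rm r} + j_{\rm r} U j_\ell.
$$
Because $U=SC$ only couples neighbouring lattice sites while $j_\ell$ and $j_{\rm r}$ have disjoint supports, $W_{\rm c}$ is a finite rank operator whose defining vectors lie in $\H_{\rm fin}$; since $A$ is essentially self-adjoint on $\H_{\rm fin}$ (Lemma \ref{lemma_A_self}) and $\H_{\rm fin}$ is preserved by the multiplication and shift operations involved in $A$, we get $W_{\rm c} \in C^\infty(A) \subset C^{1+\varepsilon}(A)$.

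The interesting piece is $W_\star$. Assumption \ref{ass_short} gives the pointwise bound $\|(C(x)-C_\star)j_\star(x)\|_{\B(\C^2)} \le \kappa_\star\;\!|x|^{-1-\varepsilon_\star}$. Combined with the formula \eqref{eq_X_onleft}, which writes $A_\star$ as $Q\cdot(\text{bounded Fourier multiplier}) + (\text{bounded})$, a direct computation in the form sense on $\H_{\rm fin}$ (analogous to those performed in the proof of Lemma \ref{lemma_last}) shows that $[A,W_\star]$ extends to a bounded operator whose matrix elements in the position basis decay like $|x|^{-\varepsilon_\star}$. The point is that the factor of $Q$ carried by $A$ is absorbed by one power of the $|x|^{-1-\varepsilon_\star}$ decay: $Q$ commutes with multiplication operators such as $C-C_\star$, while $[Q,S]=\pm S$ componentwise, so no uncontrolled growth in $|x|$ survives in the commutator.

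The main obstacle is the Hölder estimate
$$
\big\|\e^{-itA}[A,W_\star]\e^{itA} - [A,W_\star]\big\|_{\B(\H)}
\le C\;\!t^\varepsilon, \qquad t\in(0,1),
$$
for every $\varepsilon \le \varepsilon_\star$. The plan is an interpolation argument balancing a ``near'' and a ``far'' contribution relative to the position variable $Q$. Introduce a spectral cutoff $\chi_N := \mathbf{1}_{\{|x|\le N\}}(Q)$ and split
$$
[A,W_\star] = \chi_N[A,W_\star] + (1-\chi_N)[A,W_\star].
$$
The far part is bounded by $C_1\;\!N^{-\varepsilon_\star}$ thanks to the spatial decay of $[A,W_\star]$ established above, and so contributes at most $2C_1\;\!N^{-\varepsilon_\star}$ to the Hölder difference. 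For the near part, a further commutator computation (using the same structure) yields $\|[A,\chi_N[A,W_\star]]\|_{\B(\H)} \le C_2\;\!N^{1-\varepsilon_\star}$, since multiplication by $Q$ on $\{|x|\le N\}$ contributes a factor $N$ that only partially offsets the $|x|^{-\varepsilon_\star}$ decay. Writing
$$
\e^{-itA}\chi_N[A,W_\star]\e^{itA} - \chi_N[A,W_\star]
= -i\int_0^t \e^{-isA}\big[A,\chi_N[A,W_\star]\big]\e^{isA}\,\d s
$$
gives a near contribution of order $C_2\;\!t\;\!N^{1-\varepsilon_\star}$. Optimising by setting $N \sim 1/t$ equalises the two bounds and produces the desired estimate of order $t^{\varepsilon_\star}$.

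Combining these three ingredients gives $W\in C^{1+\varepsilon}(A)$, and hence $U \in C^{1+\varepsilon}(A)$, for every $\varepsilon \in (0,1)$ with $\varepsilon\le \min\{\varepsilon_\ell,\varepsilon_{\rm r}\}$. The heart of the work lies in the interpolation step, and in justifying the second commutator computation rigorously on $\H_{\rm fin}$ (where the symbolic pseudodifferential tools used in Lemma \ref{lemma_C2} again play a role), before extending the resulting bounded-operator inequalities by density.
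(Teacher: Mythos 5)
Your strategy is a genuine departure from the paper's. You both start from the same decomposition of $U - JU_0J^*$, but for the Hölder estimate you propose a near/far decomposition in the position variable, a Duhamel formula on the near part, and optimisation of the cutoff scale $N\sim 1/t$. The paper instead uses the algebraic manipulation from Amrein--Boutet de Monvel--Georgescu, p.~325--326, to reduce $\|\e^{-itA}D_\star\e^{itA}-D_\star\|$ to $\|tA(tA+i)^{-1}D_\star\|$ (and the adjoint term), then factors $tA(tA+i)^{-1}$ through $\Lambda_t := t\langle Q\rangle(t\langle Q\rangle+i)^{-1}$ using $A\langle Q\rangle^{-1}\in\B(\H)$; the whole lemma then reduces to a single weighted norm bound $\langle Q\rangle^\varepsilon D_\star\in\B(\H)$, so only first-order commutators ever appear. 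Your route, by contrast, requires the operator $\chi_N[A,W_\star]$ to lie in $C^1(A)$ with a quantitative bound $\|[A,\chi_N[A,W_\star]]\|_{\B(\H)}\le C_2 N^{1-\varepsilon_\star}$, a genuine second-order commutator estimate, and it is here that the proposal is incomplete. Writing $[A,\chi_N[A,W_\star]] = [A,\chi_N][A,W_\star] + \chi_N[A,[A,W_\star]]$, the first term is problematic: by \eqref{eq_X_onleft} $A$ factors as $Q$ times a smooth Fourier multiplier (plus a bounded part), so $[A,\chi_N]$ carries a factor $Q\sim N$ on the effective support of the multiplier commutator, and $\|[A,\chi_N]\|_{\B(\H)}$ grows like $N$, not like $1$. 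Downgrading $N$ to $N^{1-\varepsilon_\star}$ therefore requires capturing the $|x|^{-\varepsilon_\star}$ decay of $[A,W_\star]$ precisely in the region $|x|\sim N$ where $[A,\chi_N]$ is concentrated; that is a kernel-level localisation argument which ``a further commutator computation using the same structure'' does not deliver, and the sharp cutoff $\chi_N$ makes $[A,\chi_N]$ awkward to control (a smooth Littlewood--Paley cutoff plus the toroidal pseudodifferential calculus already used in Lemma \ref{lemma_C2} would be the natural tool).

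A second, smaller gap: the claim that ``$\H_{\rm fin}$ is preserved by the multiplication and shift operations involved in $A$'' is false. By \eqref{eq_X_onleft}, $A$ is a sum of $Q$ times bounded Fourier multipliers $\F^*(\cdot)\F$ whose symbols, built from $u_{\star,j}$, $u_{\star,j}'$ and $v_{\star,j}$, are $C^\infty$ but not trigonometric polynomials; in position space these are convolutions with rapidly decreasing but non-compactly-supported kernels, so $A\;\!\H_{\rm fin}\not\subset\H_{\rm fin}$. The conclusion $W_{\rm c}\in C^\infty(A)$ may still hold (the rank-two operator $W_{\rm c}$ has defining vectors in $\H_{\rm fin}$, and $A$ maps such vectors to rapidly decreasing ones, which it preserves), but it does not follow from the reason you give, and the paper only needs and only proves $j_\ell U j_{\rm r},\;j_{\rm r}U j_\ell\in C^2(A)$, by the same computations as in Lemma \ref{lemma_C2}.
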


\begin{proof}
(i) Since $JU_0J^*\in C^2(A)$ by Lemma \ref{lemma_C2} and
$C^2(A)\subset C^{1+\varepsilon}(A)$, it is sufficient to show that
$U-JU_0J^*\in C^{1+\varepsilon}(A)$, with $U-JU_0J^*$ given by \eqref{eq_difference}.
Moreover, calculations as in the proof of Lemma \ref{lemma_C2} show that the last two
terms $j_\ell\;\!U\;\!j_{\rm r}$ and $j_{\rm r}\;\!U\;\!j_\ell$ of
\eqref{eq_difference} belong to $C^2(A)$. So, it only remains to show that
$
j_\ell(U-U_\ell)\;\!j_\ell+j_{\rm r}(U-U_{\rm r})\;\!j_{\rm r}
\in C^{1+\varepsilon}(A)
$.

(ii) In order to show the mentioned inclusion, we first observe from
\eqref{eq_def_U_ell} and \eqref{eq_J_A_0_J_star} that we have in the form sense on
$\H_{\rm fin}$ the equalities
\begin{align}
\big[j_\ell(U-U_\ell)\;\!j_\ell+j_{\rm r}(U-U_{\rm r})\;\!j_{\rm r},A\big]
&=\sum_{\star\in\{\ell,{\rm r}\}}
\big[j_\star(U-U_\star)\;\!j_\star,j_\star A_\star\;\!j_\star\big]\nonumber\\
&=\sum_{\star\in\{\ell,{\rm r}\}}\big(j_\star S(C-C_\star)
\;\!j_\star A_\star\;\!j_\star
-j_\star A_\star\;\!j_\star S(C-C_\star)\;\!j_\star\big).\label{eq_first_step}
\end{align}
Then, using Assumption \ref{ass_short}, the formula \eqref{eq_X_onleft} for $A_\star$
on $\H_{\rm fin}$, and a similar formula with the operator $Q$ on the right (recall
that $Q$ is the position operator defined in \eqref{eq_def_Q}), one obtains that the
operator on the r.h.s. of \eqref{eq_first_step} defined as
$$
D_\star:=j_\star S(C-C_\star)\;\!j_\star A_\star\;\!j_\star
-j_\star A_\star\;\!j_\star S(C-C_\star)\;\!j_\star
$$
extends continuously to an element of $\B(\H)$. Since $\H_{\rm fin}$ is dense in
$\dom(A)$, this implies that
$j_\ell(U-U_\ell)\;\!j_\ell+j_{\rm r}(U-U_{\rm r})\;\!j_{\rm r}\in C^1(A)$.

(iii) To show that
$
j_\ell(U-U_\ell)\;\!j_\ell+j_{\rm r} (U-U_{\rm r})\;\!j_{\rm r}
\in C^{1+\varepsilon}(A)
$,
it remains to check that
$$
\big\|\e^{-itA}D_\star\e^{itA}-D_\star\big\|_{\B(\H)}
\le{\rm Const.}\;\!t^\varepsilon\quad\hbox{for all $t\in(0,1)$.}
$$
But, algebraic manipulations as presented in \cite[p.~325-326]{ABG96} show that for
all $t\in(0,1)$
\begin{align*}
\big\|\e^{-itA}D_\star\e^{itA}-D_\star\big\|_{\B(\H)}
&\le{\rm Const.}\;\!\big(\|\sin(tA)D_\star\|_{\B(\H)}
+\|\sin(tA)(D_\star)^*\|_{\B(\H)}\big)\\
&\le{\rm Const.}\;\!\big(\|tA\;\!(tA+i)^{-1}D_\star\|_{\B(\H)}
+\|tA\;\!(tA+i)^{-1}(D_\star)^*\|_{\B(\H)}\big).
\end{align*}
Furthermore, if we set $A_t:=tA\;\!(tA+i)^{-1}$ and
$\Lambda_t:=t\langle Q\rangle(t\langle Q\rangle+i)^{-1}$, we obtain that
$$
A_t=\big(A_t+i(tA +i)^{-1}A\;\!\langle Q\rangle^{-1}\big)\Lambda_t
$$
with $A\langle Q\rangle^{-1}\in\B(\H)$ due to
\eqref{eq_J_A_0_J_star}-\eqref{eq_X_onleft}. Thus, since
$\big\|A_t+i(tA +i)^{-1}A\;\!\langle Q\rangle^{-1}\big\|_{\B(\H)}$ is bounded by a
constant independent of $t\in(0,1)$, it is sufficient to prove that
$$
\|\Lambda_t D_\star\|_{\B(\H)}+\|\Lambda_t(D_\star)^*\|_{\B(\H)}
\le{\rm Const.}\;\!t^\varepsilon\quad\hbox{for all $t\in(0,1)$.}
$$
Now, this estimate will hold if we show that the operators
$\langle Q\rangle^\varepsilon D_\star$ and $\langle Q\rangle^\varepsilon(D_\star)^*$
defined in the form sense on $\H_{\rm fin}$ extend continuously to elements of
$\B(\H)$. For this, we fix $\varepsilon\in(0,1)$ with
$\varepsilon\le\min\{\varepsilon_\ell,\varepsilon_{\rm r}\}$, and note that
$\langle Q\rangle^{1+\varepsilon}(C-C_\star)\;\!j_\star\in\B(\H)$. With this inclusion
and the fact that $\langle Q\rangle^{-1}A_\star$ defined in the form sense on
$\H_{\rm fin}$ extend continuously to elements of $\B(\H)$, one readily obtains that
$\langle Q\rangle^\varepsilon D_\star$ and $\langle Q\rangle^\varepsilon(D_\star)^*$
defined in the form sense on $\H_{\rm fin}$ extend continuously to elements of
$\B(\H)$, as desired.
\end{proof}

With what precedes, we can now prove our last two main results on $U$ which have been
stated in Section \ref{Sec_model}.

\begin{proof}[Proof of Theorem \ref{thm_U_smooth_walks}]
Theorem \ref{thm_U_smooth}, whose assumptions are verified in Proposition
\ref{prop_Mourre} and Lemma \ref{lemma_C_1_epsilon}, implies that each $T\in\B(\H,\G)$
which extends continuously to an element of
$\B\big(\dom(\langle A\rangle^s)^*,\G\big)$ for some $s>1/2$ is locally $U$-smooth on
$\Theta\setminus\sigma_{\rm p}(U)$. Moreover, we know from the proof of of Lemma
\ref{lemma_C_1_epsilon} that $\dom(Q)\subset\dom(A)$. Therefore, we have
$\dom(\langle Q\rangle^s)\subset\dom(\langle A\rangle^s)$ for each $s>1/2$, and it
follows by duality that
$
\dom(\langle A\rangle^s)^*
\subset\dom(\langle Q\rangle^s)^*
\equiv\dom(\langle Q\rangle^{-s})
$
for each $s>1/2$. In consequence, any operator $T\in\B(\H,\G)$ which extends
continuously to an element of $\B\big(\dom(\langle Q\rangle^{-s}),\G\big)$ some
$s>1/2$ also extends continuously to an element of
$\B\big(\dom(\langle A\rangle^s)^*,\G\big)$. This concludes the proof.
\end{proof}

\begin{proof}[Proof of Theorem \ref{thm_spectrum_U_walks}]
The claim follows from Theorem \ref{thm_spec_U}, whose hypotheses are verified in
Lemma \ref{lemma_C_1_epsilon} and Proposition \ref{prop_Mourre}.
\end{proof}



\end{document}